\documentclass{amsart}
\usepackage[vcentermath,enableskew]{youngtab}
\usepackage{ytableau}
\usepackage{amssymb, amsmath, graphicx, amsthm}
\usepackage{thmtools}
\usepackage{hyperref}
\usepackage[all,cmtip]{xy}
\usepackage{multicol}
\usepackage{enumerate}
\usepackage{tikz}
\usepackage{color}

\definecolor{pole}{RGB}{255,192,203}
\definecolor{zero}{RGB}{135,206,250}
\definecolor{doublezero}{RGB}{152,251,152}

\newcommand\abs[1]{\lvert#1\rvert}

\DeclareMathOperator{\es}{\emptyset}
\DeclareMathOperator{\ee}{\epsilon}
\DeclareMathOperator{\R}{\mathbb{R}}

\DeclareMathOperator{\C}{\mathbb{C}}
\DeclareMathOperator{\Z}{\mathbb{Z}}
\DeclareMathOperator{\N}{\mathbb{N}}
\DeclareMathOperator{\Q}{\mathbb{Q}}
\DeclareMathOperator{\E}{\mathbb{E}}
\DeclareMathOperator{\ZZ}{\mathcal{Z}}

\newcommand{\la}{\langle}
\newcommand{\ra}{\rangle}
\newcommand{\TT}{\mathbb{T}}
\newcommand{\tT}{\tilde{T}}
\DeclareMathOperator{\RR}{\mathcal{R}}
\DeclareMathOperator{\Hom}{\text{Hom}}
\DeclareMathOperator{\End}{\text{End}}
\DeclareMathOperator{\Res}{\text{Res}}

\renewcommand{\Re}{\operatorname{Re}}

\newtheorem{lem}{Lemma}
\newtheorem{thm}{Theorem}
\newtheorem{claim}{Claim}

\theoremstyle{definition}

\newtheorem{rmk}{Remark}
\begin{document}
\title{Analyticity of Nekrasov Partition Functions}
\author[G. Felder]{Giovanni Felder}
\author[M. M\"uller-Lennert]{Martin M\"uller-Lennert}
\address{Department of mathematics, ETH Zurich, 8092 Zurich, Switzerland}
\email{felder@math.ethz.ch}
\email{martin.mueller-lennert@math.ethz.ch}
\begin{abstract} We prove that the K-theoretic Nekrasov instanton
  partition functions have a positive radius of convergence in the
  instanton counting parameter and are holomorphic functions of the
  Coulomb parameters in a suitable domain. We discuss the implications
  for the AGT correspondence and the analyticity of the norm of
  Gaiotto states for the deformed Virasoro algebra. The proof is based
  on random matrix techniques and relies on an integral representation
  of the partition function, due to Nekrasov, which we also prove.
\end{abstract}
\unitlength = 1mm
\maketitle
\tableofcontents

\section{Introduction}
The purpose of this paper is to study the analytic properties of the
Nekrasov instanton partition function.  In $\mathcal N=2$
supersymmetric gauge theory in four and five dimensions, Nekrasov's
instanton partition function \cite{Nekrasov2002} plays the role of a
basic building block. In the physical interpretation, the instanton
partition function is the non-perturbative contribution of instantons
to a gauge theory in an $\Omega$-background.  Mathematically it is the
generating function of integrals of torus equivariant cohomology
classes ($K$-theory classes in the five dimensional theory) on the
moduli space of framed torsion free sheaves with fixed rank on the
complex projective plane. The torus is a product of a two-dimensional
torus acting on the projective plane and a torus acting on the
framing.  The parameters $\epsilon_1,\epsilon_2$ of the
$\Omega$-background are the equivariant parameters in the mathematical
description and serve as an infrared regulator. One early success of
the theory \cite{Nekrasov2002,NekrasovOkounkov2006} was a microscopic
justification of the Seiberg-Witten formula \cite{SeibergWitten1994}
for the prepotential of the low energy effective theory, which arises
in the limit $\epsilon_1,\epsilon_2\to 0$.  The Nekrasov instanton
partition function is a power series in a complex variable
$\mathfrak q$ parametrizing the gauge theory.  The coefficient of
$\mathfrak{q}^n$ is the contribution of instanton number $n$ to the
partition function. The simplest case of a Nekrasov partition function
appears in the pure $\mathcal N=2$ supersymmetric Yang--Mills theory
with gauge group $U(r)$ on $\mathbb R^4\times S^1$.  It is given as a
sum over $r$-tuples $\vec Y=(Y_i)_{i=1}^r$ of Young diagrams of total
size $|\vec Y|=n$:
\begin{align*}
  Z(\epsilon_1,\epsilon_2,a,\mathfrak{q},\lambda)
  &=\sum_{\vec Y}\mathfrak{q}^{|\vec Y|}
    \prod_{\alpha,\beta=1}^r\prod_{b\in Y_\alpha}\frac{(\lambda/2)^{2}}
    {\sinh\left(\frac\lambda2 E_{\alpha\beta}(b)\right)
    \sinh\left(\frac\lambda2 (\epsilon_1+\epsilon_2-E_{\alpha\beta}(b))\right)},
  \\
  E_{\alpha\beta}(b)&=a_\alpha-a_\beta-l_{Y_\beta}(b)\epsilon_1+(a_{Y_\alpha}(b)+1)\epsilon_2.
\end{align*}
Here $\lambda$ is the circumference of the circle $S^1$ and
$a=(a_1,\dots,a_r)$ belongs to the Lie algebra of $U(r)$ and
parametrizes boundary conditions of scalar fields in the
$\mathcal N=2$ vector multiplet.  The arm length $a_Y(b)$ of a box $b$
is the number of boxes in the Young diagram $Y$ to the right of $b$;
the leg length $l_Y(b)$ is the (possibly negative) number of boxes in
$Y$ below $b$. The four dimensional theory on $\mathbb R^4$ arises in
the limit $\lambda\to 0$ and amounts to replacing
$\sinh (\lambda x)/\lambda$ by $x$ and $\mathfrak q=\Lambda^{2r}$ is related
to the dynamical mass scale $\Lambda$ of the gauge theory. In the mathematical description,
we view $\mathbb R^4$ as $\mathbb C^2$ with its action of
$(\mathbb C^\times)^2$ and embed it in $\mathbb {CP}^2$ by adding a
line at infinity $\ell_\infty\cong\mathbb{CP}^1$.  Let
$\mathcal M(r,n)$ be the moduli space of torsion free sheaves on
$\mathbb {CP}^2$ of rank $r$ and second Chern class $n$ with a
framing, i.e., a trivialization on $\ell_\infty$. It is a smooth
algebraic variety of dimension $2nr$. The action of the group
$(\mathbb C^\times)^2$ lifts to an action on $\mathcal
M(r,n)$. Moreover $\mathit{GL}_r$, and in particular its Cartan torus
$T=(\mathbb C^\times)^{r}$ acts on $\mathcal M(r,n)$ by changing the
framing. Therefore we have an action of
$\tilde T=(\mathbb C^\times)^2\times T$ on $\mathcal M(r,n)$.  For a
semisimple $\tilde T$-module $V$ with finite dimensional weight spaces
$V_\chi$, we denote by
$\mathrm{ch}(V)=\sum_{\chi} e^\chi\mathrm{dim}\,V_\chi $ its formal
character in the completed group ring of the weight lattice.  It is a
formal series in the weight variables $q_1,q_2$ for the
$(\mathbb C^\times)^2$-action and $u_1,\dots,u_r$ for the
$T$-action. Then
\begin{equation}\label{e-powerseries}
  Z(\epsilon_1,\epsilon_2,a,\mathfrak{q},\lambda)=
  \sum_{n=0}^\infty z^n
  \sum_{i=1}^{2nr}(-1)^i\mathrm{ch}\,H^i(\mathcal M(r,n),\mathcal O),
\quad
z=\mathfrak{q}\lambda^{2r}e^{-\lambda(\epsilon_1+\epsilon_2)r/2},
\end{equation}
is the pushforward of the $\tilde T$-equivariant $K$-theory class of
the structure sheaf by the map to a point, with the identification
\[
  q_1=e^{-\lambda\epsilon_1}.\quad q_2=e^{-\lambda\epsilon_2},\quad
  u_\alpha=e^{-\lambda a_\alpha},\quad \alpha=1,\dots,r. 
\]
The combinatorial formula above is obtained by localization to fixed
points. There are many variants and generalizations of the Nekrasov
partition functions. In this paper we focus on 
the case where one include matter fields in the
gauge theory, which corresponds to replacing the structure sheaf by
more general $K$-theory classes in the mathematical description.

Besides being the instanton contribution to the partition function of
a gauge theory in the $\Omega$-background, $|Z|^2$, in the limit
$\lambda\to 0$, appears in the integrand of Pestun's formula for gauge
theory on the round sphere $S^4$ \cite{Pestun2012} and in its
extension to $S^4$ with an ellipsoid metric parametrized by
$\epsilon_1,\epsilon_2>0$ \cite{HamaHosomichi2012}.  For $\lambda>0$
this formula extends to $S^4\times S^1$ and other compact manifolds,
see \cite{Pasquetti2016} for a review. For these reason it is
important to understand the convergence properties of the formal power
series defining the partition function.  In this paper we prove that
the power series has a positive radius of convergence if $\lambda>0$
and in a suitable range of parameters. For example, in the case
of pure gauge theory, we show that if
$\lambda>0$ and $\epsilon_1,\epsilon_2>0$, the series
\eqref{e-powerseries} converges for all $z$ in the unit disk to an
analytic function of the Coulomb parameters $a_i$ in a neighbourhood
of the imaginary axis, see Theorem \ref{thm:nekrasovconvergence} for
the general result. Note that for this range of parameters a direct
estimation of the sum over partitions is problematic because of small
denominators.  In fact the individual terms are not defined if
$\epsilon_1/\epsilon_2\in\pi\mathbb Q$ and have arbitrarily small
denominators otherwise. The situation is different in the case
$\epsilon_1<0<\epsilon_2$ (which we do not consider) where a
positive
radius of convergence may be obtained by direct estimates on
individual terms for generic $a_i$. Our proof applies to the case
$\epsilon_1,\epsilon_2>0$ and also $\epsilon_1=\bar\epsilon_2$ not
imaginary, and relies on an integral representation of the
coefficients of $Z$, also due to Nekrasov, which resembles a unitary
random matrix integral. We estimate it using methods of random matrix
theory and an explicit formula from representation homology
\cite{BerestFelder2014}. The combinatorial formula for $Z$ above is
the sum of residues at certain poles of the integrand. As the choice
of integration cycle is essential for the estimate, we carefully prove
that the integral formula with the correct integration cycle is equal
to the combinatorial formula. This requires showing that certain
apparent residues actually cancel out.  The limit $\lambda\to 0$ to
the four dimensional theory is subtle and appears to require a
different approach; we hope to return to this problem in the future.

Another reason for the interest in the radius of convergence of the
power series $Z$ comes from conformal field theory in 2 dimensions
through the Alday--Gaiotto--Tachikawa (AGT) correspondence
\cite{AGT2009}. According to this correspondence, which was verified
in a number of cases, the instanton partition functions of
$\mathcal N=2$ supersymmetric gauge theories with suitable matter
fields are equal, up to a known scalar factor, to conformal blocks of
$W$-algebras. For example for $r=2$ and $\lambda=0$ they are related
to the four point conformal blocks of the Virasoro algebra is defined
as a matrix element of products of certain intertwining operators
(primary fields).  The parameters are the central charge
$c=13+6b^2+6b^{-2}$ with $\epsilon_1=b=\epsilon_2^{-1}$ and the
highest weights of four Virasoro representations, related to masses of
matter fields. A priori the four point conformal block is a formal
power series in the cross ratio $\mathfrak{q}$ of the four
points. Except in the special cases of degenerate representations,
occuring in minimal models at $c<1$, where they are solutions of
differential equations of hypergeometric type, giving full control on
the radius of convergence and analytic continuation, representation
theory does not seem to give information on the convergence of the
power series. This is particularly relevant for the unitary
representations of the Virasoro algebra with $c>1$, arising in
Liouville theory. We treat both the weakly coupled Liouville theory with
$c\geq 25$ and $\epsilon_1,\epsilon_2>0$ and the strongly coupled range
$1<c\leq 25$, with $\epsilon_1=\bar\epsilon_2$ on the unit circle.
The limiting case $c=1$ ($\epsilon_1=i=-\epsilon_2$) was considered for $\lambda=0$
in \cite{its2014}.
The case $\epsilon_1 = - \epsilon_2$ for $\lambda \neq 0$ was
 recently considered in \cite{Bershtein2017}.
In cases where the
AGT correspondence is understood, our estimates on the gauge theory
side imply the convergence of conformal blocks.  We make this explicit
in the mathematically well-understood case of the norm of Gaiotto
states for the $q$-deformed Virasoro algebra. Gaiotto states
\cite{Gaiotto2009} are Whittaker vectors in completions of Verma
modules of $W$-algebras and their deformations.  In the case of the
Virasoro algebra a Gaiotto state is a formal power series in the
eigenvalue of the generator $L_1$. The squared norm of a Gaiotto state
can be understood as suitable limits of a conformal block and
corresponds via the AGT correspondence to the Nekrasov partition
function of the pure Yang--Mills theory. As an application of our
result we prove that for a suitable range of parameters
the squared norm of the Gaiotto state for the
two-parameter deformation of the Virasoro algebra is a holomorphic
function of the eigenvalue with a convergence radius that converges to
infinity in the $\lambda\to 0$ limit.

  This agrees with the findings of Its et al.~\cite{its2014} and Bershtein et
  al.~\cite{Bershtein2017}. They consider the norm of the Gaiotto state in the
  case $\lambda = 0$ and $\lambda > 0$, respectively. Moreover, they require 
  $\epsilon_1 + \epsilon_2 = 0$, which is different from our setting. 
  In their respective setups, the Plancherel measure on partitions appears and
  allows for a direct estimate, which proves analyticity of the norm of the
  Gaiotto state on the whole complex plane.

The paper is organized as follows: in Section 2 we review the
mathematical definition of the Nekrasov partition function with matter
fields by introducing $M(r,n)$ as a special case of a Nakajima quiver
variety via the ADHM construction and explain the localization formula
leading to the combinatorial expression for $Z$. We then introduce the
integral representation of $Z$ and compute the large $n$ behaviour of
the coefficient of $\mathfrak{q}^n$ using methods of random matrix
theory and estimate the radius of convergence.  In Section 3 we discuss
how our method can be generalized to more general gauge theory,
including the $\mathcal N=2^*$ theory. In Section 4 we present the
application to the norm of Gaiotto states for the deformed Virasoro
algebra. In Section 5 we discuss some of the problems that are left
open.  The Appendix contains the proof of the integral representation
of the partition function.

\subsection*{Acknowledgements} We thank Mikhail Bershtein, Christoph Keller, Sara
Pa\-squet\-ti and J\"org Tesch\-ner for useful comments and
explanations. The authors were partially supported by the National
Centre of Competence in Research ``SwissMAP---The Mathematics of
Physics'' of the Swiss National Science Foundation.

\section{Nekrasov Partition Functions}
\label{sec:nekr-part-funct}
In this section, we define the $K$-theoretic Nekrasov
partition function as discussed in \cite{Nakajima2005}.
It is defined as a formal power series. Our aim is to show that for certain parameter ranges, the series converges.

\subsection{A Nakajima Quiver Variety}

We follow the exposition in \cite{Nakajima2003,Nakajima2005}.
Fix a positive integer $r$.  The Nekrasov partition function is a generating
functions for certain $K$-theory classes computed from a sequence of so-called Nakajima
quiver varieties $M(r,n)$, where $n=0,1,2,\dots$. They are constructed as
follows: Set $V = \C^n$ and $W = \C^r$. We define
\begin{align*}
  M(r,n) = \Big\{ & (B_1,B_2,i,j) \in \End(V) \times \End(V) \times \Hom(W, V)
  \times \Hom(V, W) : \\
  & [B_1,B_2] + ij = 0 \text{ and there does not exist a proper subspace $S
    \subset V$}
\\ &\text{such that } 
   B_1(S),B_2(S) \subset S \text{ and } i(W)
  \subset S\Big\} \Big/ GL(V),
\end{align*}
where an element $g \in GL(V)$ acts on $(B_1,B_2,i,j)$ via
\begin{align*}
  g \cdot (B_1,B_2,i,j) = (g B_1 g^{-1}, g B_2 g^{-1}, g i, j g^{-1}).
\end{align*}
The space $M(r,n)$ is a nonsingular algebraic variety of dimension $2nr$. 

Let $T \subset GL_r(\C)$ be the maximal torus consisting of diagonal
matrices. The group $\tT = T \times \C^* \times \C^*$ acts on $M(r,n)$. This
action is induced\footnote{Note our convention differs from the one used in
  \cite{Nakajima2003} by $e_\alpha \mapsto e_\alpha^{-1}$.} by
\begin{align*}
  (B_1,B_2,i,j) \cdot (t_1,t_2,e_1,\dots,e_r) = 
  (t_1 B_1, t_2 B_2, i e, t_1t_2 e^{-1} j),
\end{align*}
where $(t_1,t_2,e_1,\dots,e_r) \in \tT$ and $e$ is the diagonal matrix with
entries $e_1,\dots,e_r$.

We want to consider $\tT$-equivariant $K$-theory on $M(r,n)$. By the
localization principle, we first want a description of the fixed points of the
$\tT$-action. 
In order to describe the fixed points of this action, we use partitions. First
we fix our conventions:
We write $l(Y) = l$ for the length of the partition $Y = (Y(1),
\dots, Y(l))$ and $|Y|$ for its size $Y(1) + \dots + Y(l)$. We use the English
convention to draw the Young diagram corresponding to the partitions. For
example the partition $Y = (5,3,2)$ of size 10 has length $l(Y) = 3$ and
its Young diagram is given by \begin{align*}\yng(5,3,2).\end{align*} In the Young diagram,
the row index $x$ increases as we go south and the column index $y$
increases as we go east.  In the following we
identify the partition $Y$ with its Young diagram. In
particular, we write $Y = \{ (x,y) : 1 \leq x \leq l(Y), 1 \leq y \leq Y(x) \}$.

The fixed point set $M(r,n)^{\tT}$ is a discrete set whose points $I_{\vec{Y}}$
are
indexed by $r$-tuples $\vec{Y} = (Y_1, \dots, Y_r)$ of partitions with total
size $|\vec{Y}| := |Y_1| + \cdots + |Y_r| =n$. We denote the corresponding
inclusion maps as follows:
\begin{align*}
  \iota : M(r,n)^{\tT} &\to M(r,n),
  & \iota_{\vec{Y}} : \{ I_{\vec{Y}} \} &\to M(r,n), \qquad |\vec{Y}| = n.
\end{align*}

\subsection{Notations for K-Theory}
Let $t_i$ be the $\tT$-character given by
$(t_1,t_2,e_1,\dots,$ $e_r) \mapsto t_i$. Let $e_\alpha$ be the
$\tT$-character given by $(t_1,t_2,e_1,\dots,e_r) \mapsto
e_\alpha$. 
All the $\tT$-equivariant $K$-theory groups
$K^{\tT}(-)$ are modules for the ring
\begin{align*}
  R(\tT) := K^{\tT}( pt ) \cong \Z[ t_1^{\pm 1}, t_2^{\pm 1}, e_1^{\pm
    1}, \dots, e_r^{\pm 1}].
\end{align*}
Let $\mathcal{R}\cong \Q(t_1,t_2,e_1,\dots,e_r)$ denote its
quotient field. The maps $\iota$ and  $\iota_{\vec{Y}}$
all define pushforwards in K-theory. The Thomason localization
theorem in K-theory says that 
\begin{align*}
  \iota_* : K^{\tT}( M(r,n)^{\tT} ) \to K^{\tT}( M(r,n) )
\end{align*}
becomes an isomorphism after localization, i.e.~tensoring with $\mathcal{R}$. We
write 
\begin{align*}
  K_{\text{loc}}^{\tT}(-) = K^{\tT}(-)~\otimes_{R(\tT)}~\mathcal{R}.  
\end{align*}

\subsection{Tangent Spaces At Fixed Points}

We will need a
description of the tangent space $T_{\vec{Y}} M(r,n)$ at a fixed point
$I_{\vec{Y}}$ as a $\tT$-module. It is given by \cite[Theorem 2.11]{Nakajima2003}  
\begin{align}
  \label{eq:TY}
  T_{\vec{Y}} M(r,n) &= \sum_{\alpha, \beta = 1}^r
  N_{\alpha,\beta}(t_1,t_2) \in K^{\tT}(pt), \\ \nonumber
  N_{\alpha,\beta}(t_1,t_2)
  &= e_\alpha e_\beta^{-1} \bigg(
  \sum_{s \in Y_\alpha} t_1^{-l_{Y_\beta}(s)} t_2^{a_{Y_\alpha}(s)
    +1} +
  \sum_{t \in Y_\beta} t_1^{l_{Y_\alpha}(t)+1} t_2^{-a_{Y_\beta}(t)}
  \bigg). 
\end{align}
Here we use the arm length $a_Y(s) = Y(x) - y$ and leg length $l_Y(s) = Y^T(y) -
x$ of a box $s=(x,y) \in \Z^2$ with respect to the partition $Y$. The symbol
$Y^T$ denotes the transpose of the partition $Y$.  Note that box $s$ does not
have to belong to the partition $Y$, hence arm length and leg length can be
negative.

\subsection{The Tautological Bundle}
In order to include theories with massive matter, we introduce formal parameters
$\vec{b} = (b_1, \dots, b_s)$. The tautological bundle $\mathcal{V}$
is a vector bundle over $M(r,n)$ of rank $n$ whose fiber is $V$, see
e.g.~\cite{negut2015}.  Let $c_0, \dots, c_n \in K_{\text{loc}}^{\tT}(M(r,n))$
denote its Chern classes. Consider the
polynomial $b(z) = (z-b_1) \cdots (z-b_s)$. Construct the element 
\begin{align*}
  \kappa_n(b_1, \dots, b_s) := b(z_1) \cdots b(z_n) |_{\sigma_j = c_j} \in
  K_{\text{loc}}^{\tT}(M(r,n))[b_1, \dots, b_s]
\end{align*}
by expressing the symmetric polynomial $b(z_1) \cdots b(z_n)$ as a polynomial in
the elementary symmetric functions $\sigma_0, \dots, \sigma_n$ and replacing
each $\sigma_j$ by the Chern class $c_j$.

\subsection{K-theoretic Nekrasov Partition Function}
\label{sec:k-theoretic-version}

We now define the Nekrasov
partition function arising in the K-theory of the moduli space of instantons. Let $(\ee_1,\ee_2,a_1,\dots,a_r)$ be coordinates on the Lie
algebra of $\tT$ and $\lambda$ be a parameter such that $t_j = e^{\lambda
  \ee_j}$ and $e_\alpha = e^{\lambda a_\alpha}$. Write $b_m = e^{\lambda w_m}$. 
The Nekrasov partition function is defined as a formal power series
\begin{align}
  \label{eq:nekrasov}
  Z(\ee_1,\ee_2,\vec{a};\vec{w};\mathfrak{q},\lambda) = \sum_{n \geq 0} \big(
  \mathfrak{q} \lambda^{2r-s}  e^{-r \lambda ( \ee_1 + \ee_2 )/2} \big)^n
  Z_n(\ee_1,\ee_2, \vec{a};\vec{w}; \lambda),
\end{align}
with coefficients
\begin{align*}
  Z_n(\ee_1,\ee_2,\vec{a};\vec{w};\lambda) = &
  \sum_{i=0}^{2nr} (-1)^i \text{ch} \; H^i( M(r,n), \kappa_n(b_1, \dots, b_s) )
  \\ 
  =& 
  \sum_{|\vec{Y}|=n} (\iota_{*})^{-1}
  \Big( \kappa_n(b_1, \dots, b_s)\Big) \in \mathcal{R}[b_1,\dots,b_s].
\end{align*}
Here $\sum_{|\vec{Y}|=n}$ denotes the summation map
\begin{align*}
  K_{\text{loc}}^{\tT}(M(r,n)^{\tT}) [b_1,\dots,b_s] = \bigoplus_{|\vec{Y}|=n}
  \mathcal{R}[b_1,\dots,b_s] \to \mathcal{R}[b_1,\dots,b_s].
\end{align*}

The fixed point theorem says that after localization we have
\begin{align}
  \label{eq:kfixed}
  (\iota_*)^{-1} = \sum_{|\vec{Y}| = n} \frac{\iota_{\vec{Y}}^*}{ 
    \Lambda_{-1} T_{\vec{Y}} M(r,n) },
\end{align}
where $\Lambda_{-1}$ denotes the alternating sum of exterior powers. 
Combining equation \eqref{eq:kfixed} with the description 
(\ref{eq:TY}) of the tangent space at a fixed point to compute $\Lambda_{-1}
T_{\vec{Y}} M(r,n)$, one gets 
\begin{align}
  \label{eq:nekrasovcoeff}
  Z_n(\ee_1,\ee_2,\vec{a};\vec{w};\lambda)
  = \sum_{|\vec{Y}|=n}
  \frac
  {
    \iota_{\vec{Y}}^*\Big(\kappa_n(b_1, \dots, b_s)\Big)
  }
  {\prod_{\alpha,\beta=1}^r
    n^{\vec{Y}}_{\alpha,\beta}(\ee_1,\ee_2,\vec{a};\lambda)},
\end{align}
where
\begin{align*}
  n^{\vec{Y}}_{\alpha,\beta}(\ee_1,\ee_2,\vec{a};\lambda)
  =    
  & \prod_{s \in Y_\alpha} \bigg( 1 - e^{-\lambda \big( - l_{Y_\beta}(s) \ee_1 + (
    a_{Y_\alpha}(s) + 1 )\ee_2 + a_\alpha - a_\beta \big)} \bigg) \\
  &\prod_{t \in Y_\beta} \bigg( 1 - e^{-\lambda  \big( (l_{Y_\alpha}(t) + 1) \ee_1 
    - a_{Y_\beta}(t) \ee_2 + a_\alpha - a_\beta \big) } \bigg).
\end{align*}

For $s=0$, i.e.~$b(z)=1$, we have $\kappa_n(\es) = \mathcal{O}$,
the $K$-theory class corresponding to the structure sheaf on $M(r,n)$. This
corresponds to pure Yang Mills theory and the numerator in equation
\eqref{eq:nekrasovcoeff} equals one. For general $\vec{b} = (b_1, \dots, b_s)$
and a fixed point $I_{\vec{Y}}$, we have \cite[equation (2.27)]{negut2015}
\begin{align}
  \label{eq:pfixed}
  \iota_{\vec{Y}}^*\Big(\kappa_n(b_1, \dots, b_s)\Big) = 
  \prod_{\alpha=1}^r \prod_{(x,y) \in Y_\alpha} 
  \prod_{m=1}^s
  (e^{-\lambda( a_\alpha + (x-1) \ee_1 + (y-1) \ee_2) } -b_m).
\end{align}

Our aim is to estimate the general form of the coefficients
\eqref{eq:nekrasovcoeff} in order to prove convergence of the
$K$-theoretic partition function. To do so, we use an integral representation
for the coefficients $\ZZ_n(\ee_1,\ee_2,\vec{a},\vec{w};\lambda)$.

\subsection{Integral Representation and Estimate}

In this section we define an integral representation for the coefficients of the
Nekrasov partition function. We will use this representation to estimate the
coefficients. 

\subsubsection{Definition of the Integral}
\label{sec:definitionofintegral}
Let $q_1$ and $q_2$ be a pair of complex numbers  in
the open unit disk. Assume that either $q_1 = \overline{q_2}$ or $q_1, q_2
\in (0,1)$. Note that in either case $q_1 q_2 = |q_1 q_2| \in (0,1)$. Later these
numbers will be identified with exponential functions of $\ee_1,\ee_2$.

Fix $r \geq 1$. Let $\vec{u} = (u_1, \dots, u_r)$ be a vector of
complex numbers such that 
\begin{align*}
  |q_i| \max_{\alpha = 1, \dots, r} |u_\alpha| < \min_{\alpha=1,
    \dots, r} |u_\alpha| ,\qquad \forall\; i = 1,2.
\end{align*}
Let $\vec{p}= (p_1,\dots,p_s)$ be another vector of complex numbers.
The condition on $|u_\alpha|$ ensures that we can pick $\rho > 0$ with 
\begin{align}
  |u_\alpha| < \rho < |q_i|^{-1} 
  |u_\alpha| \qquad \forall\; \alpha = 1,\dots, r, \forall \; i =1,2.
  \label{eq:rhocond}
\end{align}
Let $C_\rho \subset \C$ be the set of complex numbers of modulus $\rho$. We define
\begin{align}
  \label{eq:zdef}
  \ZZ_n(\vec{u}; \vec{p}) =
  &
  \frac{1}{n!} \bigg( \frac{1-q_1q_2}{(1-q_1)(1-q_2)}\bigg)^n
  \int_{C^n_\rho} \prod_{j=1}^n \frac{dz_j}{2 \pi i z_j} 
  \;  \prod_{j=1}^n \prod_{m=1}^s (z_j-p_s)  \;
  \mathcal{I}(z_1, \dots, z_n;\vec{u}), 
\end{align}
where the integrand contains the symmetric function
\begin{align}
  \mathcal{I}(z_1,\dots,z_n;\vec{u}) = &
  \prod_{j=1}^n \prod_{\alpha=1}^r  \frac{ -u_\alpha z_j }
  {(z_j - u_\alpha) (q_1 q_2 z_j - u_\alpha)} 
  \label{eq:weight}
  \;
  \prod_{1 \leq j \neq k \leq n} 
  \frac{(z_j- z_k)(z_j - q_1q_2z_k)} {(z_j - q_1z_k)(z_j - q_2z_k)}.
\end{align}
In the following, we evaluate the coefficients $\ZZ_n(\vec{u};\vec{p})$ using
residue calculus.

\subsubsection{Evaluation of the Integral}
\label{sec:evaluationofintegral}

We evaluate the integral in $\ZZ_n(\vec{u};\vec{p})$ using residue calculus.
The residues are indexed by $r$-tuples $\vec{Y}$ of partitions $Y_\alpha$
with total size $|\vec{Y}| = |Y_1| + \cdots + |Y_r| = n$. We again identify
a partition with its Young diagram. For a box
$s=(x,y) \in Y_\alpha$ we define
\begin{align}
  \label{eq:residuedef}
  z^\alpha_s = z^\alpha_{x,y} = u_\alpha \; q_1^{x-1} q_2^{y-1}.
\end{align}
\begin{thm}
  \label{thm:residuecalculation}
  Under the assumption
  \begin{align}
    \label{eq:grid}
    u_\alpha u_\beta^{-1} &\neq  q_1^x q_2^y, &    
    \forall x, y & \in \{-n, \dots, n\},
    \quad 
    \forall \alpha \neq \beta \in \{1, \dots, r\}, 
    \\\nonumber
    q_1^x \neq q_2^{y+1}, \;\; q_1^{x+1} &\neq q_2^y,  &
    \forall x, y  & \in \{0, \dots, n-1\},
  \end{align}
  the value of $\ZZ_n(\vec{u};\vec{p})$ is given as a sum over $r$-tuples of partitions of
total size $n$ in two equivalent ways: 
  \begin{align}
    \label{eq:residueevaluation}
  \ZZ_n(\vec{u};\vec{p}) 
      &= 
      \sum_{|\vec{Y}|=n}
      \frac
      {
        \prod_{\alpha=1}^r \prod_{s \in Y_\alpha} \prod_{m=1}^s (z^\alpha_s-p_m)
      }
      {N^{\vec Y}_{\alpha,\beta}(q_1,q_2,\vec u)},
\\ \nonumber
 N^{\vec{Y}}_{\alpha,\beta}(q_1,q_2,\vec u)&=
      \prod_{s \in Y_\alpha}
      \Big(1 - \frac{u_\alpha}{u_\beta}
        q_1^{l_{Y_\alpha}(s) + 1} q_2^{-a_{Y_\beta}(s)} \Big)
      \prod_{t \in Y_\beta} 
      \Big(1 - \frac{u_\alpha}{u_\beta}
        q_1^{-l_{Y_\beta}(t)} q_2^{a_{Y_\alpha}(t)+1} \Big).
  \end{align}
    Alternatively,
    \begin{align}
      \label{eq:secondversion}
      \ZZ_n(\vec{u};\vec{p}) 
      &= 
      \sum_{|\vec{Y}|=n}
      \frac
      {
        \prod_{\alpha=1}^r \prod_{s \in Y_\alpha} \prod_{m=1}^s (z^\alpha_s-p_m)
      }
      {M_{\alpha,\beta}^{\vec Y}(q_1,q_2,\vec u)},
\\
 \nonumber
      M^{\vec{Y}}_{\alpha,\beta}(q_1,q_2,\vec u)&=
      \prod_{s \in Y_\alpha} 
      \Big(1 - \frac{u_\alpha}{u_\beta}
        q_1^{-l_{Y_\beta}(s)} q_2^{a_{Y_\alpha}(s)+1} \Big)
      \prod_{t \in Y_\beta}
      \Big(1 - \frac{u_\alpha}{u_\beta}
        q_1^{l_{Y_\alpha}(t) + 1} q_2^{-a_{Y_\beta}(t)} \Big).
    \end{align}
\end{thm}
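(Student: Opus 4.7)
The proof proceeds by iterated residue calculus, evaluating the $n$-fold integral in \eqref{eq:zdef} one variable at a time. The plan unfolds in four stages.

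First, I analyze the pole structure in each variable $z_j$. The apparent pole at $z_j = 0$ from the measure $dz_j/z_j$ is cancelled by the $r \geq 1$ zeros contributed by $\prod_\alpha (-u_\alpha z_j)$ in $\mathcal I$. The remaining denominators in $\mathcal I$ yield three families of simple poles: $z_j = u_\alpha$, lying inside $C_\rho$ by \eqref{eq:rhocond}; $z_j = u_\alpha/(q_1 q_2)$, lying outside since $|u_\alpha|/|q_1 q_2| > |u_\alpha|/|q_i| > \rho$ for each $i$; and $z_j = q_i z_k$ for $i = 1,2$ and $k \neq j$, lying inside whenever $|z_k| = \rho$, because $|q_i| < 1$. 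The genericity assumption \eqref{eq:grid} ensures that all residues picked up during the iteration are simple and pairwise distinct.

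Second, I perform the iterated residues. At each step the new variable must land either at $z_j = u_\alpha$ (for some $\alpha$) or at $z_j = q_i z_k$ for some already fixed $z_k$. An easy induction then shows that every nonvanishing configuration locates each $z_j$ at a point $z^\alpha_{x,y} = u_\alpha q_1^{x-1} q_2^{y-1}$; the selected multi-index sets $Y_\alpha = \{(x,y) : z^\alpha_{x,y} \text{ is chosen}\}$ thus form a tentative collection of $r$ subsets of $\Z_{>0}^2$ of total size $n$.

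Third, and this is the main technical obstacle, I must show that the only tentative configurations producing nonzero residues are those where every $Y_\alpha$ is a genuine Young diagram. The key mechanism: if some $Y_\alpha$ contains a box $(x,y)$ with $x,y \geq 2$ such that $(x-1,y)\notin Y_\alpha$ or $(x,y-1)\notin Y_\alpha$, one verifies that $(x-1,y-1)\in Y_\alpha$ (otherwise the pole at $z^\alpha_{x,y}$ could not have been reached in any order of residue extraction). The identity $z^\alpha_{x,y}=q_1q_2\,z^\alpha_{x-1,y-1}$ then makes the numerator factor $(z_j-q_1q_2z_k)$ of $\mathcal I$ vanish at this configuration, killing the apparent residue. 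Analogous cancellations handle the inter-partition pairs. Formalizing this argument --- by induction on $n$, systematically matching each apparent non-Young-diagram pole with a numerator zero of $\mathcal I$ --- is the crucial step that restricts the sum to $r$-tuples of genuine Young diagrams.

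Finally, for each $r$-tuple $\vec Y$, I compute the product of residues and combine it with the prefactor. The $1/n!$ cancels against the $n!$ orderings realizing the same $\vec Y$, thanks to the symmetry of $\mathcal I$ and of $\prod_j\prod_m(z_j-p_m)$. Collecting the residue contributions at the positions $z^\alpha_{x,y}$ and combining them with the normalization $\bigl((1-q_1q_2)/((1-q_1)(1-q_2))\bigr)^n$ produces, after the standard arm/leg bookkeeping on Young diagrams, exactly the denominator $\prod_{\alpha,\beta}N^{\vec Y}_{\alpha,\beta}$. The numerator $\prod_j\prod_m(z_j-p_m)$ directly evaluates to $\prod_\alpha\prod_{s\in Y_\alpha}\prod_m(z^\alpha_s-p_m)$, yielding \eqref{eq:residueevaluation}. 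The equivalence with \eqref{eq:secondversion} then follows from the standard combinatorial identity $\prod_{\alpha,\beta}N^{\vec Y}_{\alpha,\beta}=\prod_{\alpha,\beta}M^{\vec Y}_{\alpha,\beta}$, obtained by re-labelling $(\alpha,\beta,s,t)\leftrightarrow(\beta,\alpha,t,s)$ in the double product and using the factorization $(1-x)=-x(1-x^{-1})$ whose monomial prefactors cancel globally.
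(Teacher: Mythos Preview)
Your proposal identifies the correct overall architecture and the right mechanism behind the restriction to Young diagrams --- the numerator factor $(z_j - q_1 q_2 z_k)$ does play a central role. However, stage three has a genuine gap that the paper's proof spends most of its effort closing.

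The difficulty is this. When one integrates $z_1$ first (with $z_2,\ldots,z_n$ still on $C_\rho$, not ``already fixed'' as you write), a residue at $z_1 = q_i z_2$ produces a new integrand in $z_2,\ldots,z_n$ whose pole structure is \emph{not} simply inherited from the original one. New poles appear, for instance from $(q_i z_2 - q_s z_k)$ at $z_2 = (q_s/q_i) z_k$, which lies inside $C_\rho$ when $|q_s|<|q_i|$; after later specializations these correspond to boxes with a nonpositive coordinate, hence not of the form $z^\alpha_{x,y}$ with $x,y\geq 1$. Your $(x-1,y-1)\in Y_\alpha$ argument does not touch these spurious poles: they are not killed by a single numerator zero, but by a cancellation among residues coming from different orderings. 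The paper handles this in two layers. First, an explicit computation shows that the $(q_i z_j - q_s z_k)$-type poles cancel when summed over $i\in\{1,2\}$ and over the choice of $j$. Second, and this is the technical heart, a combinatorial argument (Claim~1 in the appendix) proves that residue ``strips'' extending in the negative $q_i$-direction (their parameter $l_0>0$) sum to zero; the proof tracks signs and multiplicities over all ordered partitions of the strip into substrips and exhibits a pairing that cancels them. Nothing in your sketch substitutes for this.

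Two smaller points. Your derivation of \eqref{eq:secondversion} from \eqref{eq:residueevaluation} via relabelling $(\alpha,\beta,s,t)\leftrightarrow(\beta,\alpha,t,s)$ together with $(1-x)=-x(1-x^{-1})$ does not close as stated: the exponents shift by one and the monomial prefactors do not cancel globally. The paper instead uses the $q_1\leftrightarrow q_2$ symmetry of the integrand together with transposition of the Young diagrams and the identity $a_Y(x,y)=l_{Y^T}(y,x)$. Finally, the ``standard arm/leg bookkeeping'' you invoke in stage four is in fact a nontrivial inductive computation: the paper removes one box at a time and verifies the ratio identity \eqref{eq:step} by a careful telescoping over the corners of each $Y_\alpha$.
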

The proof is of this theorem is technical. In the literature, several
arguments for the validity of this or similar formulae have been
given, see \cite{Nekrasov2002,FateevLitvinov2010,Hadasz2010,yanagida2010,negut2015}. They are based
on taking the iterated residues at $z^\alpha_{x,y}$,
$(x,y)\in Y_\alpha$, $\alpha=1,\dots,r$. However the integrand also
has further poles and it is a nontrivial fact, that we prove in this
paper, that the residues at those poles cancel. The proof of Theorem \ref{thm:residuecalculation}
is postponed to Appendix \ref{sec:residueproof}.
  \begin{rmk}
    \label{rmk:grid}
    The assumption \eqref{eq:grid} is necessary to ensure
  that all terms $\ZZ_{\vec{Y}}(\vec{u};\vec{p})$ are well-defined. If it is
  violated, some residues might not be simple residues anymore and consequently
  some $\ZZ_{\vec{Y}}(\vec{u};\vec{p})$ might be infinite. However,
  their sum $\ZZ_n(\vec{u};\vec{p})$ is still well-defined, as the integral in
  equation \eqref{eq:zdef} is.
\end{rmk}

  \subsubsection{Estimate for the Integral}
  \label{sec:estimateforintegral}

  In this section we apply potential theory to estimate the coefficients
  $\ZZ_n(\vec{u};\vec{p})$ in the integral form given by equation \eqref{eq:zdef} in the limit
  of large $n$.
  \begin{thm}
    \label{thm:limsupestimate}
    We have
    \begin{align*}
      \limsup_{n \to \infty} |\ZZ_n(\vec{u};\vec{p})|^{\frac{1}{n}} 
      \leq
      \prod_{m=1}^s \max\{ |p_m|, |u_1|, \dots,
      |u_r| \} .
    \end{align*}
  \end{thm}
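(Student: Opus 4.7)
The plan is to bound $|\ZZ_n(\vec u;\vec p)|$ directly from the integral representation \eqref{eq:zdef}, combining potential theory on $C_\rho$ with a random-matrix large-deviation argument and the explicit evaluation of the unperturbed integral from \cite{BerestFelder2014}. A direct estimate on the combinatorial formula \eqref{eq:residueevaluation} is doomed, because the denominators $N^{\vec Y}_{\alpha,\beta}$ contain arbitrarily small factors whenever $u_\alpha/u_\beta$ approaches a $q$-grid point; these small denominators only cancel in aggregate, and this cancellation is exactly what the integral encodes.

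Parameterizing $z_j=\rho e^{i\theta_j}$ and passing the absolute value inside gives
$$|\ZZ_n|\le\frac{1}{n!}\left|\frac{1-q_1q_2}{(1-q_1)(1-q_2)}\right|^n\int_{[0,2\pi)^n}\prod_{j,m}|z_j-p_m|\cdot|\mathcal I(z;\vec u)|\prod_j\frac{d\theta_j}{2\pi}.$$
The constraint \eqref{eq:rhocond} keeps all poles of $\mathcal I$ off the contour, so $|\mathcal I|$ is a smooth positive weight on the torus. The central potential-theoretic identity is
$$\int_0^{2\pi}\log\bigl|\rho e^{i\theta}-p\bigr|\,\frac{d\theta}{2\pi}=\log\max\{\rho,|p|\},\qquad p\in\C,$$
which is precisely the source of the $\max$ in the theorem; the naive pointwise bound $|z_j-p_m|\le \rho+|p_m|$ would instead produce a spurious factor $2^{s}$ after taking the $n$-th root.

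I would then view $|\mathcal I|\prod d\theta_j/(2\pi)$, normalized by $Z_n:=\int|\mathcal I|\prod d\theta_j/(2\pi)$, as a Coulomb gas on $C_\rho^n$. Its two-body factor $\prod_{j\neq k}|(z_j-z_k)(z_j-q_1q_2z_k)/((z_j-q_1z_k)(z_j-q_2z_k))|$ is invariant under simultaneous rotation of all $z_j$, and the one-body factor contributes only $O(n)$ to the log-weight versus the $O(n^2)$ two-body part. Standard large-deviation techniques for unitary ensembles then give concentration of the empirical measure $n^{-1}\sum_j\delta_{z_j}$ at rate $n^2$ on the unique minimizer of the limiting energy functional, which by rotation invariance and strict convexity is the uniform measure on $C_\rho$. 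The $O(n)$ exponential tilt by $\log\prod_{j,m}|z_j-p_m|$ does not disturb the equilibrium at leading order, so
$$\frac{1}{Z_n}\int\prod_{j,m}|z_j-p_m|\cdot|\mathcal I|\prod_j\frac{d\theta_j}{2\pi} = \prod_{m=1}^s\max\{\rho,|p_m|\}^n\cdot e^{o(n)}.$$

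It remains to control the prefactor $\tfrac{1}{n!}|(1-q_1q_2)/((1-q_1)(1-q_2))|^n Z_n$; here I would invoke the closed-form expression from representation homology \cite{BerestFelder2014} for the unperturbed integral, which forces its $n$-th root to tend to $1$. Combining the two estimates and letting $\rho\downarrow\max\{|u_1|,\dots,|u_r|\}$ (legitimate since the bound is continuous in $\rho$ and \eqref{eq:rhocond} remains satisfied) yields the theorem. The main obstacle is the large-deviation step for the deformed weight $|\mathcal I|$: one must verify strict convexity of the limiting energy functional built from the kernel $-\log|z-w|-\log|z-q_1q_2w|+\log|z-q_1w|+\log|z-q_2w|$, and make the tilt analysis sharp enough to deliver $\max\{\rho,|p_m|\}$ in place of $\rho+|p_m|$. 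The Berest--Felder formula is indispensable, since a naive estimate on $Z_n$ would itself run into the same small-denominator problem that motivated the integral representation in the first place.
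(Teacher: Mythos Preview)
Your approach is the paper's: bound $|\ZZ_n|$ by the integral of absolute values, identify a Coulomb-gas structure whose equilibrium measure is uniform on $C_\rho$ (the paper proves this as Theorem~\ref{thm:hlimitthm}, establishing uniqueness via the Fourier computation $c_k(f)>0$ for $k\neq0$, which is exactly your ``strict convexity''), evaluate the tilt against this equilibrium using $\frac{1}{2\pi}\int\log|\rho e^{i\theta}-p|\,d\theta=\log\max\{\rho,|p|\}$, control the normalizing constant by Berest--Felder, and let $\rho\downarrow\max_\alpha|u_\alpha|$. One bookkeeping point: the Berest--Felder identity applies to the \emph{pure two-body} integral $a_n$ of \eqref{eq:adef}, not to your $Z_n=\int|\mathcal I|$, which also carries the one-body factor $\prod_\alpha\frac{-u_\alpha z}{(z-u_\alpha)(q_1q_2 z-u_\alpha)}$; the paper sidesteps this by placing that factor in the tilt $g$ rather than in the reference measure, so that the normalizer matches Berest--Felder exactly and the one-body piece is seen to contribute $0$ to $\frac{1}{2\pi}\int\log|g|$ via the same potential identity together with condition \eqref{eq:rhocond}.
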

  We prove this by comparing the growth of $\ZZ_n(\vec{u};\vec{p})$ to the growth of the
  coefficients 
  \begin{align}
    \label{eq:adef}
    a_n = & \frac{1}{n!}
    \bigg( \frac{1-q_1q_2}{(1-q_1)(1-q_2)} \bigg)^n
    \int_{C_1^n} \prod_{j=1}^n \frac{dz_j}{2\pi i z_j }
    \prod_{j \neq k} \frac{(z_j- z_k)(z_j - q_1q_2z_k)}{(z_j -
      q_1z_k)(z_j - q_2z_k)} .
  \end{align}
  We also want to introduce the language of potential theory.
  Let $\TT = \R/2 \pi \Z$ be the torus.  Define $f : \TT \to \R \cup \{ \infty \}$ by
  \begin{align}
    \label{eq:fdef}
    f(\theta) := - \log  \frac{ |e^{i\theta} - 1| | e^{i \theta} - q_1
      q_2|}{|e^{i \theta} - q_1|| e^{i \theta} - q_2|}.
  \end{align}
  For each $n \in \N$, we define a probability measure on $\TT^n$ by
  \begin{align*}
    P_n(\theta) = \frac{1}{Z_n} e^{- \sum_{j \neq k} f(\theta_k - \theta_j)},
  \end{align*}
  where $Z_n = \int_{\TT^n} d \theta e^{- \sum_{j \neq k} f(\theta_k - \theta_j)}.$
  Denote the associated expectation functionals by $\E_n[ - ]$.
  By changing variables $z_j = \rho e^{i \theta_j}$ in equation
  (\ref{eq:zdef}) and $z_j = e^{i \theta_j}$ in equation (\ref{eq:adef}), we get
  \begin{align*}
    \ZZ_n(\vec{u};\vec{p}) = & a_n
    \E_n\bigg[\prod_{j=1}^ng(\rho,\theta_j;\vec{u};\vec{p}) \bigg],
  \end{align*}
  where
  \begin{align*}
    g(\rho,\theta;\vec{u};\vec{p}) &= 
    \prod_{m=1}^s (\rho e^{i \theta} - p_m) \;
    \prod_{\alpha=1}^r 
    \frac{ - u_\alpha \rho e^{i\theta}}
    {(\rho e^{i\theta} - u_\alpha)
      (q_1 q_2 \rho e^{i\theta}- u_\alpha)}.
  \end{align*}
  We estimate $|\ZZ_n(\vec{u};\vec{p})|$ by taking the absolute value
  inside. We arrive at
  \begin{align}
    \label{eq:znestimate}
    \big| \ZZ_n(\vec{u};\vec{p}) \big|^{\frac{1}{n}} 
    \quad
    \leq 
    \quad
    \big| a_n \big|^{\frac{1}{n}}
    \;
    \exp\bigg( \frac{1}{n} \log \E_n\Big[ e^{\sum_j \log| g(\rho,\theta_j;\vec{u};p)
      |}\Big] \bigg). 
  \end{align}
  We estimate both factors on the right hand side separately. The first one is
  related to a known power series:
  \begin{lem}
    We have
    \label{lem:aestimate}
    \begin{align*}
      \limsup_{n \to \infty} | a_n |^{\frac{1}{n}} = 1.
    \end{align*}
  \end{lem}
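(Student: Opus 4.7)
The plan is to identify the generating series
\[
  A(z) := \sum_{n \geq 0} a_n z^n
\]
in closed form and read off its radius of convergence. I expect the product formula
\[
  A(z) = \frac{1}{(1-z)\prod_{k \geq 1}(1-z q_1^k)(1-z q_2^k)}.
\]
Since $|q_1|, |q_2| < 1$, this is meromorphic in $|z| < \min(|q_1|^{-1}, |q_2|^{-1})$, with the unique singularity of minimum modulus being the simple pole at $z = 1$. The radius of convergence of $A$ would therefore be exactly $1$, yielding $\limsup_{n \to \infty} |a_n|^{1/n} = 1$.

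To establish this product, set $c_k := (1 - q_1^k q_2^k)/((1-q_1^k)(1-q_2^k))$. I would first prove the cycle-index identity
\[
  a_n = \frac{1}{n!} \sum_{\sigma \in S_n} \prod_{k \geq 1} c_k^{m_k(\sigma)},
\]
where $m_k(\sigma)$ denotes the number of $k$-cycles of $\sigma$. Small cases are immediate by direct computation: $a_1 = c_1$ and $a_2 = (c_1^2 + c_2)/2$. The general case should follow by iterated residue calculus on $z_n, z_{n-1}, \ldots, z_1$. At each stage, the inside-of-unit-circle poles of the integrand in $z_n$ are at $z_n = 0$ and at $z_n = q_i z_k$ for $i \in \{1,2\}$ and $k < n$. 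Encoding these choices as a directed graph on $\{1, \ldots, n\}$ in which each $j$ is either a root ($z_j = 0$ was chosen) or points to some $k$ with a $q_i$ label organizes the residue expansion according to the cycle structure of an associated permutation, and one checks that each $k$-cycle contributes precisely $c_k$ after telescoping. Given the cycle-index formula, the exponential identity
\[
  \sum_{n \geq 0} \frac{z^n}{n!} \sum_{\sigma \in S_n} \prod_{k \geq 1} c_k^{m_k(\sigma)} = \exp\biggl(\sum_{k \geq 1} \frac{c_k z^k}{k}\biggr),
\]
combined with $c_k = 1 + q_1^k/(1-q_1^k) + q_2^k/(1-q_2^k)$ and $-\log(1-w) = \sum_{k \geq 1} w^k/k$, immediately yields the claimed closed form.

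The main obstacle is the combinatorial bookkeeping in the iterated residue calculation, in particular the verification that within a given cycle the mixed $q_1$- and $q_2$-residue choices telescope to yield exactly $c_k$ per $k$-cycle. This parallels but is simpler than the proof of Theorem \ref{thm:residuecalculation} in Appendix \ref{sec:residueproof}, since here the integrand has no $u$- or $p$-dependent poles and there are no ``phantom'' residues to cancel; the only mild subtlety is that the assumption \eqref{eq:grid} is not available, but since the integral defining $a_n$ is manifestly well-defined, the iterated residue formula can be verified for generic $q_1, q_2$ and extended by continuity.
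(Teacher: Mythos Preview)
Your approach coincides with the paper's: both rest on the generating-function identity
\[
  \sum_{n\ge0} a_n z^n \;=\; \exp\Bigl(\sum_{k\ge1}\frac{1-q_1^kq_2^k}{(1-q_1^k)(1-q_2^k)}\,\frac{z^k}{k}\Bigr),
\]
which the paper simply quotes from \cite{BerestFelder2014} rather than re-deriving. Your further reduction to the infinite product $\bigl((1-z)\prod_{k\ge1}(1-zq_1^k)(1-zq_2^k)\bigr)^{-1}$ is correct and makes the singularity at $z=1$ explicit; the paper instead argues directly from the exponential form, observing that each $c_k=(1-q_1^kq_2^k)/((1-q_1^k)(1-q_2^k))$ is positive under the standing hypotheses on $q_1,q_2$, so all $a_n>0$, and that $c_k\to1$ forces the radius of convergence to be exactly~$1$.

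The only genuinely new content you are proposing is the cycle-index derivation of the exponential identity via iterated residues. Be aware that your sketch needs more care: after taking a residue at $z_n=q_i z_k$, the remaining integrand in $z_k$ acquires new poles (and some old ones are cancelled), so the naive picture ``each $j$ points to some earlier $k$'' produces a forest, not a permutation, and the emergence of the cycle weights $c_k$ is not immediate from what you wrote. The identity is true, but since the paper is content to cite it, you may prefer to do the same and avoid this detour.
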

  \begin{proof}
    From \cite{BerestFelder2014} we know that
    \begin{align*}
      \sum_{n \geq 0} a_n z^n = \exp\bigg( \sum_{n\geq 1}
      \frac{1-q_1^nq_2^n}{(1-q_1^n)(1-q_2^n)} \frac{z^n}{n} \bigg),
    \end{align*}
    as a formal power series. Since all coefficients are positive in this
    expansion and $q_i^n \to 0$ as $n \to \infty$ we
    see that the radius of convergence equals 1.
  \end{proof}

These estimates already give a non-quantitative convergence result.

\begin{lem}\label{lemma-0}
Assume that $\lambda>0$, $q_1,q_2$ in the unit disk, both real or complex
conjugate to each other. Then on each compact subset 
of the domain 
\[
\{(\vec u,\vec p)\in \mathbb C^{r+s}\,|\, |q_i|\max_{\alpha}|u_\alpha|<\min_\alpha|u_\alpha|,
\forall i=1,2\},
\]
the power series $\sum_{n=0}^\infty \ZZ_n(\vec u;\vec p)z^n$ converges uniformly
with a positive radius of convergence.
\end{lem}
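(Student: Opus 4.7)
The key input is the probabilistic rewriting of the integral representation established just above the statement,
\[
  \ZZ_n(\vec u; \vec p) = a_n\, \E_n\!\left[\prod_{j=1}^n g(\rho, \theta_j; \vec u, \vec p)\right].
\]
Taking absolute values inside $\E_n$ and bounding each factor by its supremum on the torus gives the crude pointwise bound $|\ZZ_n(\vec u;\vec p)| \le a_n \cdot M^n$ with $M = \max_\theta |g(\rho,\theta;\vec u,\vec p)|$. This is weaker than Theorem~\ref{thm:limsupestimate} but already sufficient for qualitative convergence; the main task is making it uniform on the given compact set $K$.

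First I would choose the integration radius $\rho$ in a locally uniform way. Around any $(\vec u_0, \vec p_0) \in K$, pick $\rho_0$ strictly between $\max_\alpha |u_{0,\alpha}|$ and $\min_{\alpha, i} |q_i|^{-1}|u_{0,\alpha}|$; by continuity the strict inequalities \eqref{eq:rhocond} persist, with the same $\rho_0$, on some open neighborhood $U$ of $(\vec u_0,\vec p_0)$ whose closure remains in the domain. A finite subcover $K \subset U_1 \cup \cdots \cup U_N$ then reduces the problem to the compact pieces $K_j := K \cap \overline{U_j}$, each equipped with its own radius $\rho_j$.

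On $K_j \times \TT$ the function $g(\rho_j, \theta;\vec u, \vec p)$ is continuous (the choice of $\rho_j$ keeps every denominator of $g$ bounded away from zero uniformly), so $G_j := \sup_{K_j\times\TT} |g(\rho_j,\theta;\vec u,\vec p)|$ is finite. Positivity of $a_n$ follows from Lemma~\ref{lem:aestimate}: when $q_1=\overline{q_2}$ the coefficient $\frac{1-q_1^n q_2^n}{(1-q_1^n)(1-q_2^n)}$ equals $\frac{1-|q_1|^{2n}}{|1-q_1^n|^2}>0$, and for $q_1,q_2\in(0,1)$ positivity is immediate, so the exponential of a series with non-negative coefficients has non-negative coefficients. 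Combining these facts gives $|\ZZ_n(\vec u;\vec p)| \le a_n G_j^n$ uniformly on $K_j$.

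Setting $G := \max_j G_j$ and $R := 1/G$, the series is dominated termwise on $K$ by $\sum_n a_n (G|z|)^n$, which converges for $|z| < R$ by Lemma~\ref{lem:aestimate}. The Weierstrass M-test then yields uniform convergence of $\sum_n \ZZ_n(\vec u;\vec p)\, z^n$ on $K \times \{|z|\le R'\}$ for every $R' < R$, proving the lemma. The only non-routine step in the plan is the locally uniform choice of $\rho$; the finite covering argument handles it cleanly.
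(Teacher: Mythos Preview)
Your argument is correct and is precisely the approach the paper sketches in one sentence (``on such compact subsets $|g(\rho,\theta;\vec u;\vec p)|$ is uniformly bounded on the integration circle $C_\rho$ for proper choice of $\rho$, thus the coefficients are bounded by $\mathrm{const}^n$''); you have simply supplied the details, in particular the finite-cover argument that makes the choice of $\rho$ uniform on $K$. One small remark: the positivity of $a_n$ is not really needed, since $|\E_n[X]|\le \E_n[|X|]$ already gives $|\ZZ_n|\le |a_n|\,G^n$, and Lemma~\ref{lem:aestimate} controls $|a_n|^{1/n}$.
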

To prove this lemma notice that on such compact subsets
$|g(\rho,\theta;\vec u;\vec p)|$ is uniformly bounded on the
integration circle $C_\rho$ for proper choice of $\rho$. Thus
the coefficients $\ZZ_n(\vec u;\vec p)$ are bounded by $\mathrm{const}^n$.

To get a quantitative estimate of the radius of convergence we need to
work harder.
We have the following result:

  \begin{thm}
    \label{thm:hlimitthm}
    Let $h$ be a continuous, real-valued function on the torus $\mathbb{T}$. We have
    \begin{align}
      \label{eq:hlimit}
      \frac{1}{n} \log \E_n\bigg[ e^{\sum_j h(\theta_j)}\bigg] \to
      \frac{1}{2\pi} \int_{\TT} h(\theta) d \theta \qquad (n \to \infty).
    \end{align}
  \end{thm}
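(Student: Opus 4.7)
The plan is to combine a Jensen-type lower bound, using translation invariance of $P_n$, with an upper bound obtained via large-deviation concentration of the empirical measure around the uniform Haar measure on $\TT$, and to reduce from continuous $h$ to a handier class via a Lipschitz estimate. First, the map $T_n\colon h\mapsto \frac{1}{n}\log \E_n\bigl[e^{\sum_j h(\theta_j)}\bigr]$ is $1$-Lipschitz in the sup norm, uniformly in $n$, because $e^{\sum_j h(\theta_j)}\leq e^{n\norm{h-g}_\infty}e^{\sum_j g(\theta_j)}$; as $h\mapsto\bar h := \frac{1}{2\pi}\int_{\TT}h\,d\theta$ is also $1$-Lipschitz, it suffices by density to prove the theorem for $h$ Lipschitz.

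For the lower bound, observe that $P_n$ is invariant under the diagonal translation $\theta_j\mapsto\theta_j+s$, so its one-point marginal is Haar. Thus $\E_n[\sum_j h(\theta_j)] = n\bar h$, and Jensen's inequality gives $T_n(h)\geq\bar h$ for every $n$. For the upper bound I would use the Fourier expansion $f(\theta) = \sum_{k\geq 1}c_k\cos(k\theta)$ with $c_k = (1-q_1^k)(1-q_2^k)/k$; under the standing assumption on $q_1,q_2$ all of these coefficients are strictly positive. This positivity implies that the energy functional $J(\mu) := \iint f(\theta-\theta')\,d\mu(\theta)\,d\mu(\theta') = \sum_{k\geq 1}c_k|\hat\mu(k)|^2$ on probability measures on $\TT$ is uniquely minimized by the Haar measure $\mu_* = d\theta/(2\pi)$. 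A standard Ben Arous--Guionnet-type argument then yields a large deviation principle at speed $n^2$ for the empirical measure $\mu_n := \frac{1}{n}\sum_j\delta_{\theta_j}$ under $P_n$, with good rate function $J(\mu)-J(\mu_*) = J(\mu)$. For any Lipschitz $h$ and $\epsilon>0$ this produces $\delta = \delta(\epsilon)>0$ such that $P_n\bigl(|\mu_n(h)-\bar h|>\epsilon\bigr)\leq e^{-\delta n^2}$ for all large $n$.

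One then splits the expectation: on the event $|\mu_n(h)-\bar h|\leq\epsilon$ we bound $e^{\sum_j h(\theta_j)} = e^{n\mu_n(h)}\leq e^{n(\bar h+\epsilon)}$, while on its complement the crude bound $e^{n\mu_n(h)}\leq e^{n\norm{h}_\infty}$ combined with the LDP tail yields a contribution of at most $e^{n\norm{h}_\infty - \delta n^2}\to 0$. Hence $\limsup_n T_n(h)\leq\bar h+\epsilon$, and letting $\epsilon\downarrow 0$ gives the matching upper bound. The main technical obstacle is the LDP itself: because $f$ has an integrable logarithmic singularity at $\theta=0$, the standard proof has to be adapted by mollifying (truncating) the singular part of $f$ to establish both the lower and the upper large-deviation bounds, and by verifying uniqueness of the minimizer via the Fourier identity for $J$. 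Once this LDP is in place, the rest of the argument is routine.
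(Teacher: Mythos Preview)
Your outline is sound and the approach is close in spirit to the paper's, but the execution differs in a few instructive ways.

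Your lower bound is cleaner than the paper's: the observation that $P_n$ is invariant under diagonal translation, so the one-point marginal is Haar, and then Jensen gives $T_n(h)\geq\bar h$ directly for every $n$, is simpler than the symmetric min/max argument on $A_{n,\eta}$ that the paper runs for both directions simultaneously.

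For the upper bound the two arguments diverge more. The paper does \emph{not} invoke a full Ben Arous--Guionnet LDP. Instead it proves, by a one-line Jensen estimate on the partition function $Z_n$, the concrete bound $P_n\bigl[\TT^n\setminus A_{n,\eta}\bigr]\leq e^{-\eta n^2}$, where $A_{n,\eta}=\{\theta: I[\delta_\theta]\leq \eta\}$. It then shows by a truncation-plus-compactness argument that any weak limit of empirical measures coming from $A_{n,\eta}$ must satisfy $I[\nu_\eta]\leq\eta$, and letting $\eta\to0$ forces $\nu$ to be Haar by the uniqueness of the energy minimizer. This is more elementary and self-contained: it avoids establishing the full LDP (exponential tightness, lower bound on open sets, goodness of the rate function) and only proves exactly the concentration statement needed.

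Your route, by contrast, packages all of that work into the phrase ``a standard Ben Arous--Guionnet-type argument'' and then uses the LDP to get the tail bound on $|\mu_n(h)-\bar h|$. That is legitimate, but as you correctly flag, the logarithmic singularity of $f$ at $0$ means the LDP is not entirely off-the-shelf; you would essentially end up reproducing the paper's truncation $\min\{f,L\}$ and the compactness step in Lemma~\ref{lem:nulim} to verify the upper bound of the LDP. So the ``black box'' you invoke has roughly the same content as the paper's explicit lemmas.

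One minor point: your Fourier formula $c_k=(1-q_1^k)(1-q_2^k)/k$ is only the real case $q_1,q_2\in(0,1)$; for $q_1=\overline{q_2}$ the coefficients are different but still strictly positive, which is all that is needed for uniqueness of the minimizer. The reduction to Lipschitz $h$ via the $1$-Lipschitz property of $T_n$ is a nice touch the paper does not use (it works with continuous $h$ throughout), but it is not essential here since the paper's compactness argument handles continuous $h$ directly.
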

  The proof of this theorem uses ideas from potential theory and is
  postponed to section \ref{sec:proofofhlimitthm}.  Now we can apply Lemma
  \ref{lem:aestimate} and Theorem \ref{thm:hlimitthm} in equation
  \eqref{eq:znestimate} to estimate 
  \begin{align*}
    \limsup_{n \to \infty} |\ZZ_n(\vec{u};\vec{p})|^{\frac{1}{n}} 
    \leq
    \exp \bigg( \frac{1}{2\pi} \int_{\TT} \log | g(\rho,\theta;\vec{u};\vec{p})
    | d \theta \bigg) .
  \end{align*}
  Using that condition (\ref{eq:rhocond}) says $\rho^{-1} 
  |u_\alpha | < 1$, but $ \rho^{-1} |q_1q_2|^{-1} |u_\alpha | > 1$,
  formula \ref{eq:ckgsigma}  from below implies
  \begin{align*}
    \frac{1}{2\pi} \int_{\TT} \log | g(\rho,\theta;\vec{u};p) | d \theta=
    \sum_{m=1}^s \max\{ \log |p_m|, \log \rho \} .
  \end{align*}
  We obtain
  \begin{align*}
    \limsup_{n \to \infty} |\ZZ_n(\vec{u};\vec{p})|^{\frac{1}{n}} 
    \leq
    \prod_{m=1}^s \max\{ |p_m|, \rho \} .
  \end{align*}
  According to condition (\ref{eq:rhocond}), the lower bound for $\rho$ is given
  by $\max_\alpha\{ |u_\alpha| \}$. We let $\rho$ tend to this
  bound, completing the proof of Theorem \ref{thm:limsupestimate}.

  \subsection{Potential Theory}
  \label{sec:proofofhlimitthm}

  In this section we prove Theorem \ref{thm:hlimitthm} using
  techniques adapted from \cite{johansson1998}.

  \subsubsection{Setup of Potential Theory}

  Let $M(\TT)$ be the set of all
  Borel probability measures on $\TT $ and $M_0(\TT)$ be the subset of
  all such measures which in addition do not contain point
  masses.

  The function $f : \TT \to \R \cup \{+\infty\}$ defined in equation
  \eqref{eq:fdef} is continuous, bounded
  from below and has a single pole at $\theta = 0$.
  Set
  \begin{align*}
    I[\mu] = \iint_{\theta \neq \phi} f(\theta - \phi) d \mu(\theta) d \mu(\phi),
  \end{align*}
  where $\mu \in M(\TT)$. Since $f$ is bounded from below, $I$ is
  bounded from below, too.
  Define
  \begin{align*}
    I_0 := \inf_{\mu \in M_0(\TT)} I[\mu] > - \infty.
  \end{align*}
  The aim of this section is to prove
  \begin{thm}
    \label{thm:I0unique}
    The normalized Lebesgue measure is the unique measure $\mu$ with
    $I[\mu] = I_0$. Moreover $I_0 = 0$.
  \end{thm}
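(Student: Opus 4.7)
The plan is to exploit the Fourier series of $f$ and reduce the minimization to a Parseval-type identity, taking advantage of the fact that all nonzero Fourier coefficients of $f$ are strictly positive.

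First I would compute the Fourier series of $f$ on $\TT$. Starting from the classical expansion $\log|1-qe^{-i\theta}| = -\sum_{n\geq 1}\Re(q^n e^{-in\theta})/n$, valid for $|q|\leq 1$, splitting the four logarithms in the definition of $f$ yields
\[
\hat f(0) = 0, \qquad \hat f(n) = \frac{\overline{(1-q_1^n)(1-q_2^n)}}{2n} \quad (n \geq 1),
\]
with $\hat f(-n) = \overline{\hat f(n)}$. The vanishing of $\hat f(0)$ follows from Jensen's formula, which gives $\frac{1}{2\pi}\int \log|e^{i\theta}-q|\, d\theta = 0$ for $|q|\leq 1$. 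Under either hypothesis on $(q_1,q_2)$ the coefficient $\hat f(n)$ is real and strictly positive: it is a product of two positive reals in the case $q_1,q_2\in(0,1)$, and equals $|1-q_1^n|^2/(2n)$ in the conjugate case $q_1 = \overline{q_2}$.

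Next I would interpret $I[\mu]$ in Fourier coordinates. For $\mu\in M_0(\TT)$ the diagonal is $\mu\otimes\mu$-null, so the restriction $\theta\neq\phi$ is immaterial. If $\mu$ has a smooth density, direct Plancherel expansion gives
\[
I[\mu] = \sum_{n\in\Z}\hat f(n)|\hat\mu(n)|^2 = \sum_{n\neq 0}\hat f(n)|\hat\mu(n)|^2.
\]
To extend this identity to general $\mu\in M_0(\TT)$, I would mollify by $\mu_\epsilon = \mu * \psi_\epsilon$ for a monotone approximate identity such as the Fej\'er kernel, so that $|\hat\mu_\epsilon(n)| = |\hat\mu(n)||\hat\psi_\epsilon(n)| \nearrow |\hat\mu(n)|$ as $\epsilon\downarrow 0$. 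On the Fourier side, monotone convergence combined with $\hat f(n)\geq 0$ delivers the limit. On the integral side, $\mu_\epsilon\to\mu$ weakly, and the bounded continuous truncations $f_M = \min(f,M)$ have the property that $\nu\mapsto\iint f_M\, d\nu\, d\nu$ is weakly continuous; sending $\epsilon\to 0$ at fixed $M$ and then $M\to\infty$ (using $f\in L^1$ and monotone convergence) yields $\iint f\, d\mu_\epsilon\, d\mu_\epsilon\to I[\mu]$ and hence the Parseval identity.

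The conclusion is then immediate. Positivity $\hat f(n) > 0$ for every $n\neq 0$ combined with the Parseval identity forces $I[\mu]\geq 0$ with equality if and only if every non-constant Fourier coefficient $\hat\mu(n)$ vanishes, i.e.\ $\mu = d\theta/(2\pi)$. Direct evaluation shows the normalized Lebesgue measure attains $I[\mu] = \hat f(0) = 0$, so $I_0 = 0$ and the minimizer is unique. I expect the main obstacle to be the regularization step: the interplay between the logarithmic singularity of $f$ at $0$ and a possibly singular (though atomless) $\mu$ prevents a one-line Plancherel argument, and one must carefully coordinate mollification of $\mu$ with truncation of $f$ to justify the identity in full generality.
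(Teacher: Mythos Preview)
Your proposal is correct and follows essentially the same route as the paper: compute the Fourier coefficients of $f$, observe $c_0(f)=0$ and $c_k(f)>0$ for $k\neq 0$, and deduce the Parseval-type identity $I[\mu]=\sum_{k\neq 0}c_k(f)\,|c_k(\mu)|^2$, from which $I_0=0$ and the characterization of the minimizer follow.

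Two minor differences are worth noting. For the justification of the Parseval identity the paper simply invokes Tonelli (using that $\mu$ is atomless to drop the off-diagonal restriction and that the terms are bounded below), whereas you propose a more explicit mollification-plus-truncation scheme; your version is more careful, and your caveat about coordinating the two limits is well placed. For uniqueness you argue directly from the identity that $I[\mu]=0$ forces $\hat\mu(n)=0$ for all $n\neq 0$, which is the cleanest conclusion; the paper instead takes two minimizers $\mu,\nu$, expands $I[\mu-\nu]$ in Fourier, and then uses an interpolation argument on $t\mapsto I[\nu+t(\mu-\nu)]$ to force $I[\mu-\nu]\leq 0$. That convexity step is robust (it would work even if one only had the inequality $I[\mu]\geq\sum_{k\neq 0}c_k(f)|c_k(\mu)|^2$ rather than equality), but given the full Parseval identity your direct argument is shorter.
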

  Define the function $g_\sigma(\theta) := \frac12 \log( 1 + \sigma^2 -
  2\sigma \cos \theta) = \log | e^{i\theta} - \sigma|$ for $\sigma > 0$.
  Its Fourier coefficients 
  \begin{align}
    \label{eq:ckgsigma}
    c_k(g_{\sigma}) = \frac{1}{2\pi} \int_{\TT} g_{\sigma}(\theta) e^{-ik\theta}
    d\theta = 
    \begin{cases}
      - \frac{1 } {2|k|}\min\{ \sigma^{|k|}, \sigma^{-|k|} \} , & \text{if $k \neq 0$,}
      \\
      \max\{ 0, \log \sigma \}, & \text{otherwise,}
    \end{cases}
  \end{align}
  are known. For $\sigma \neq 1$ see for example \cite{chen2002}. For the
  case $\sigma =1$, note that $g_\sigma$ is uniformly bounded from above and
  \begin{align*}
    g_\sigma(\theta) = \log( 1 - 2\sigma + \sigma^2 + 2\sigma -
    2\sigma \cos(\theta) ) \geq \log \sigma + g_1(\theta).
  \end{align*}
  By dominated convergence, it suffices to show that $g_1$ is
  integrable. The only pole is at $\theta=0$. By changing variables to
  $x = 2-2\cos\theta$ we have to consider $\log x \frac{1}{\sqrt{x}}$
  which is integrable as one can see from integration by parts.

  We write $q_j = |q_j| e^{i \tau_j}$. We have
  \begin{align*}
    c_k(f) = -c_k\big(g_1\big) - c_k\big(g_{|q_1q_2|}\big) + e^{-ik\tau_1}
    c_k\big(g_{|q_1|}\big) + e^{-ik \tau_2} c_k\big(g_{|q_2|}\big).
  \end{align*}
  Since $|q_i| < 1$, we obtain $c_0(f)=0$, so in particular
  \begin{lem}
    $I$ vanishes for the normalized Lebesgue measure.
  \end{lem}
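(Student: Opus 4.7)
The plan is to reduce the double integral defining $I[\mu]$ at the normalized Lebesgue measure $d\mu(\theta)=d\theta/(2\pi)$ to a single integral over $\TT$, which by definition is the zeroth Fourier coefficient $c_0(f)$, and then invoke the formula for $c_k(f)$ displayed in the paragraph just above the lemma.

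First I would verify that $f\in L^1(\TT)$ so that Fubini applies. The function $f$ splits as $-g_1(\theta)-g_{|q_1q_2|}(\theta)+g_{|q_1|}(\theta-\tau_1)+g_{|q_2|}(\theta-\tau_2)$. Since $|q_1|,|q_2|,|q_1q_2|<1$, the last three summands are continuous on $\TT$; only $-g_1$ has a singularity, at $\theta=0$, and its integrability is precisely the argument the authors run just after \eqref{eq:ckgsigma} (substitute $x=2-2\cos\theta$ to reduce to $\int_0\log(x)\,dx/\sqrt{x}$, integrable by parts).

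Second, by translation invariance of Haar measure on $\TT$, and after discarding the null diagonal $\{\theta=\phi\}$,
\[
I[\mu]=\frac{1}{(2\pi)^2}\int_{\TT}\int_{\TT}f(\theta-\phi)\,d\theta\,d\phi
=\frac{1}{2\pi}\int_\TT f(\psi)\,d\psi = c_0(f).
\]
Applying the $c_k(f)$ decomposition displayed above the lemma at $k=0$, together with \eqref{eq:ckgsigma}, which gives $c_0(g_\sigma)=\max\{0,\log\sigma\}$, one reads off $c_0(g_1)=0$ (since $\log 1=0$) and $c_0(g_{|q_1|})=c_0(g_{|q_2|})=c_0(g_{|q_1q_2|})=0$ (since each $\sigma$ lies in $(0,1)$). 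Hence $c_0(f)=0$ and therefore $I[\mu]=0$.

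The only delicate point is justifying Fubini across the logarithmic singularity of $g_1$ at $\theta=0$, and this has already been dealt with by the integrability argument preceding the lemma; otherwise the proof is essentially an unpacking of definitions followed by a direct application of the Fourier computation. The genuinely harder part of Theorem \ref{thm:I0unique}, to which this lemma contributes, will be the uniqueness of the minimizer, which requires an additional positivity argument at nonzero Fourier modes rather than just evaluation at $k=0$.
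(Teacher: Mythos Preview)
Your proof is correct and follows essentially the same approach as the paper: the paper simply observes that since $|q_i|<1$ one has $c_0(f)=0$, and tacitly identifies $I[\mu]$ with $c_0(f)$ for the normalized Lebesgue measure. You have made explicit the Fubini justification and the translation-invariance step that the paper leaves to the reader, but the underlying argument is the same.
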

  For $k\neq 0$ we get
  \begin{align*}
    c_k(f) = \frac{1}{2|k|} (1 + |q_1 q_2|^k - e^{-ik\tau_1} |q_1|^{|k|} -
    e^{-ik\tau_2} |q_2|^{|k|}).
  \end{align*}
  If $q_1 = \overline{q_2}$, this is bounded from below by $\frac{1}{2\abs{k}} ( 1
  - |q_1q_2|^{| k |/2} )^2 > 0$. If $q_1,q_2 \in (0,1)$ it equals $\frac{1}{2|k|}
  (1-|q_1|^{|k|})(1-|q_2|^{|k|}) > 0$. In either case we obtain
  \begin{lem}
    \label{lem:fourier}
    \label{lem:fl1}
    The Fourier coefficients $c_k(f)$ of $f \in L^1(\TT)$ satisfy
    $c_k(f) > 0$ if $k \neq 0$ and $c_0(f) = 0$. The Fourier series of
    $f$ converges everywhere except at $\theta = 0$.
  \end{lem}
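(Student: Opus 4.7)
Most of the work for this lemma has already been carried out in the display immediately preceding its statement. The explicit factorisation $c_k(f) = \tfrac{1}{2|k|}(1-q_1^{|k|})(1-q_2^{|k|})$ in the case $q_1,q_2\in(0,1)$ and the lower bound $\tfrac{1}{2|k|}(1-|q_1q_2|^{|k|/2})^2 > 0$ in the case $q_1=\overline{q_2}$ already prove $c_k(f)>0$ for $k\neq 0$, while the additive computation $c_0(f)=-c_0(g_1)-c_0(g_{|q_1q_2|})+c_0(g_{|q_1|})+c_0(g_{|q_2|})$ combined with formula \eqref{eq:ckgsigma} gives $c_0(f)=0$. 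So the two substantive claims that remain are (i) $f\in L^1(\TT)$, and (ii) pointwise convergence of the Fourier series of $f$ on $\TT\setminus\{0\}$.

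For (i), I would observe that $f$ is a linear combination of translates of the four building blocks $g_\sigma(\theta) = \log|e^{i\theta}-\sigma|$ with $\sigma\in\{1,|q_1|,|q_2|,|q_1q_2|\}$. Each $g_\sigma$ with $\sigma<1$ is continuous and $2\pi$-periodic, hence in $L^1(\TT)$. The only genuinely singular piece is $g_1$, whose integrability was already verified just above via the substitution $x=2-2\cos\theta$ reducing matters to the integrability of $x^{-1/2}\log x$ near the origin.

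For (ii), I would split off the singular contribution. Write
\[
  f(\theta) \;=\; -g_1(\theta) + R(\theta),
\]
where $R(\theta) := -g_{|q_1q_2|}(\theta) + g_{|q_1|}(\theta-\tau_1) + g_{|q_2|}(\theta-\tau_2)$ is real-analytic and $2\pi$-periodic because each summand $g_\sigma$ with $\sigma<1$ is smooth on $\TT$; its Fourier series therefore converges uniformly. For the remaining term, the Fourier series of $-g_1$ is $\sum_{k\neq 0}\tfrac{1}{2|k|}e^{ik\theta} = \Re\sum_{k\geq 1}\tfrac{e^{ik\theta}}{k}$. The inner series converges by Dirichlet's test whenever the partial sums $\sum_{k=1}^N e^{ik\theta}$ are bounded uniformly in $N$, i.e.~for every $\theta\not\equiv 0\pmod{2\pi}$, with limit $-\log(1-e^{i\theta})$ whose real part reproduces $-g_1(\theta)$ as it should. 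At $\theta=0$ every term of the Fourier series of $f$ is strictly positive (by the positivity of $c_k(f)$ proved above) and behaves asymptotically like $\tfrac{1}{2|k|}$, so the series diverges there, consistent with the logarithmic pole of $f$.

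The only conceptual obstacle is step (ii): since $c_k(f)=O(1/|k|)$, absolute convergence is unavailable and one must invoke Dirichlet summation (or equivalently Abel's identity for $-\log(1-z)$) in order to secure pointwise convergence off the singularity. The rest is bookkeeping around the explicit Fourier calculations already on the page.
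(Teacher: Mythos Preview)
Your proposal is correct and, for the claims on $c_k(f)$ and $f\in L^1(\TT)$, simply recapitulates the computations the paper has placed immediately before the lemma; there the paper's treatment and yours coincide. The one place where you go beyond the paper is the pointwise convergence of the Fourier series on $\TT\setminus\{0\}$: the paper asserts this without argument, whereas you supply one by splitting $f=-g_1+R$, handling the smooth remainder $R$ by absolute convergence, and treating $\sum_{k\ge1}e^{ik\theta}/k$ via Dirichlet's test (equivalently the boundary expansion of $-\log(1-z)$). That is a clean and standard way to fill the gap, and your observation that positivity of $c_k(f)$ forces divergence at $\theta=0$ is also a nice complement that the paper leaves implicit.
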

  We use this to prove
  \begin{lem}
    We have $I_0 = 0$.
  \end{lem}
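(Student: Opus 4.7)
Since the preceding lemma gives $I[\lambda]=0$, we have $I_0\le 0$ automatically. The task is to prove the matching lower bound $I[\mu]\ge 0$ for every $\mu\in M_0(\TT)$. The idea is to exploit the Fourier positivity $c_k(f)\ge 0$, $c_0(f)=0$ from Lemma~\ref{lem:fourier} through a regularize–and–pass–to–the–limit argument.

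\textbf{The regularized kernel.} For $r\in(0,1)$ introduce the Poisson-mollified kernel $f_r:=f*P_r/(2\pi)$. Using the explicit formula for $f$, one checks that
\[
f_r(\theta)=-\log|e^{i\theta}-r|-\log|e^{i\theta}-rq_1q_2|+\log|e^{i\theta}-rq_1|+\log|e^{i\theta}-rq_2|,
\]
so $f_r$ is smooth on $\TT$ with Fourier coefficients $c_k(f_r)=r^{|k|}c_k(f)\ge 0$ and $c_0(f_r)=0$. Because $c_k(f)=O(1/|k|)$, the Fourier series converges absolutely, and term-by-term integration against $d\mu\otimes d\mu$ (for any probability measure $\mu$) gives
\[
\iint f_r(\theta-\phi)\,d\mu(\theta)\,d\mu(\phi)=\sum_{k\neq 0}c_k(f)\,r^{|k|}\,|\hat\mu(k)|^2\ \ge\ 0,
\]
a monotone non-decreasing function of $r\in(0,1)$. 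This is the source of positivity we want to inherit for the limit $r\uparrow 1$.

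\textbf{Passing to the limit.} We may assume $I[\mu]<\infty$, since otherwise $I[\mu]\ge 0$ is trivial. For $\mu\in M_0$ the diagonal has $\mu\otimes\mu$-measure zero, so $f_r(\theta-\phi)\to f(\theta-\phi)$ holds $\mu\otimes\mu$-a.e. Decompose $f=f_+-f_-$, where $f_-=\max(-f,0)$ is bounded and continuous (since $f$ is lower semicontinuous and bounded below, and $f_-\equiv 0$ in a neighbourhood of the singularity $\theta=0$). Mollification of $f_-$ then converges uniformly by Fejér, so the $f_-$-contribution to $\iint f_r\,d\mu\,d\mu$ converges to the $f_-$-contribution to $I[\mu]$. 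For the $f_+$ piece, lower semicontinuity plus Fatou gives one direction of the comparison, and a bootstrap using absolute integrability of $f$ against $\mu\otimes\mu$ (equivalently $f\in L^1(\mu*\check\mu)$) gives the matching direction; combining yields $\iint f_r\,d\mu\,d\mu\to I[\mu]$ as $r\uparrow 1$. Since each term on the left is non-negative, we conclude $I[\mu]\ge 0$.

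\textbf{Main obstacle.} The genuinely delicate point is the passage to the limit: $f_r$ converges to $f$ pointwise off the diagonal but neither dominates $f$ nor is dominated by it, and naïve application of Fatou only gives $I[\mu]\le \lim_r\iint f_r\,d\mu\,d\mu$, which is the wrong inequality. The missing half requires controlling the contribution from a small neighbourhood of the diagonal uniformly in $r$, which I would do by splitting the integration region into $\{|\theta-\phi|\ge\delta\}$ (where $f_r\to f$ uniformly by continuity of the explicit formula in $r$) and $\{|\theta-\phi|<\delta\}$ (where the finite energy assumption $I[\mu]<\infty$ plus the explicit log-type bound on $f_r$ near the diagonal forces both contributions to vanish jointly as $\delta\to 0$, uniformly in $r$ close to $1$).
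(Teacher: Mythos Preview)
Your approach is correct but substantially more elaborate than the paper's. The paper proceeds in one stroke: for $\mu\in M_0(\TT)$ the diagonal is $\mu\otimes\mu$--null, so
\[
I[\mu]=\iint f(\theta-\phi)\,d\mu(\theta)\,d\mu(\phi)
=\sum_{k\neq 0}c_k(f)\,|c_k(\mu)|^2\;\ge\;0,
\]
with the interchange of sum and integral justified by Tonelli. Your Poisson mollification is precisely the Abel--summation route to that same Parseval identity, so the two arguments land on the same formula; the paper simply asserts the interchange directly rather than regularizing first. What your route buys is a fully explicit justification of the swap (the paper's invocation of Tonelli is terse, since the summands $c_k(f)e^{ik(\theta-\phi)}$ are not individually nonnegative); what it costs is the limit passage you flag as the main obstacle.

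On that obstacle: the $\delta$--splitting you outline works, but there is a much shorter fix. From the elementary inequality
\[
|e^{i\theta}-1|^2=2(1-\cos\theta)\le r^{-1}\bigl((1-r)^2+2r(1-\cos\theta)\bigr)=r^{-1}|e^{i\theta}-r|^2
\]
one gets $-\log|e^{i\theta}-r|\le -\log|e^{i\theta}-1|+\tfrac12\log(1/r)$, while the remaining three summands in $f_r-f$ involve only parameters of modulus $<1$ and are uniformly bounded. Hence $f_r\le f+C$ uniformly for $r\in[\tfrac12,1)$. Assuming $I[\mu]<\infty$ (otherwise there is nothing to prove), dominated convergence with majorant $f+C\in L^1(\mu\otimes\mu)$ gives $\iint f_r\,d\mu\,d\mu\to I[\mu]$ directly, and $I[\mu]\ge 0$ follows. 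This replaces your entire ``missing half'' discussion with two lines.
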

  \begin{proof}
    Since the normalized Lebesgue measure is in $M_0(\TT)$ we have $I_0
    \leq 0$. On the other hand, for any $\mu \in M_0(\TT)$, we have
    \begin{align*}
      I[\mu] & = \iint f(\theta - \phi) d \mu(\theta) d\mu (\phi)
      \\ 
      &= \sum_{k \neq 0} c_k(f) \iint e^{i (\theta - \phi)} d \mu(\theta)
      d \mu(\phi) \\ &= \sum_{k \neq 0} c_k(f) | c_k(\mu) |^2 \geq 0.
    \end{align*}
    Firstly, we have dropped the condition $\theta \neq \phi$ using the
    fact that $\mu$ does not contain point masses. Then we have applied
    Tonelli's theorem using the fact that $c_k(f) e^{i (\theta - \phi)}$
    is bounded from below.
  \end{proof}
  \begin{lem}
    The normalized Lebesgue measure is the unique measure $\mu \in
    M_0(\TT)$ for which $I_0 = I[\mu]$.
  \end{lem}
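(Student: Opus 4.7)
The plan is to leverage the identity
\[
I[\mu] \;=\; \sum_{k \neq 0} c_k(f)\,|c_k(\mu)|^2
\]
already established in the proof of $I_0 = 0$, combined with the strict positivity $c_k(f) > 0$ for $k \neq 0$ from Lemma \ref{lem:fourier}. Assuming $\mu \in M_0(\TT)$ attains $I[\mu] = I_0 = 0$, every term in this nonnegative sum must vanish, forcing $c_k(\mu) = 0$ for all $k \neq 0$.

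Then I would invoke Fourier uniqueness for finite Borel measures on $\TT$. Since $\mu$ is a probability measure we have $c_0(\mu) = 1$, matching the Fourier coefficients of the normalized Lebesgue measure. By Stone--Weierstrass the trigonometric polynomials are uniformly dense in $C(\TT)$, so $\int P\,d\mu = \int P\,\frac{d\theta}{2\pi}$ for every trigonometric polynomial $P$ extends by uniform approximation to all continuous $P$. The Riesz representation theorem then identifies $\mu$ with the normalized Lebesgue measure.

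The only subtlety to double-check is the Fubini/Tonelli interchange used to diagonalize $I[\mu]$ in the Fourier basis. This was already handled in the preceding lemma by combining $f \in L^1(\TT)$ with the lower bound on $f$ (so that Tonelli applies to $c_k(f)\,e^{ik(\theta-\phi)}$), with the assumption that $\mu$ has no atoms used to drop the restriction $\theta \neq \phi$ on the integration domain. No additional hard step is anticipated beyond re-using those observations, so the main content of the lemma really is the combination of nonnegativity from Lemma \ref{lem:fourier} with standard Fourier uniqueness on the circle.
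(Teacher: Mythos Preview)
Your argument is correct, and it is actually more direct than the paper's. You use the Fourier diagonalization $I[\mu]=\sum_{k\neq 0}c_k(f)\,|c_k(\mu)|^2$ together with the already established facts $I_0=0$ and $c_k(f)>0$ for $k\neq 0$ to force $c_k(\mu)=0$ for all $k\neq 0$, and then conclude by Fourier uniqueness for finite Borel measures on $\TT$.

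The paper instead runs the standard potential-theoretic uniqueness argument: given two minimizers $\mu,\nu\in M_0(\TT)$, it expands $I[\mu-\nu]$ in Fourier modes to get $I[\mu-\nu]\geq 0$, then uses the quadratic polynomial $t\mapsto I[\nu+t(\mu-\nu)]$, which is nonnegative on $[0,1]$ and vanishes at both endpoints, to conclude $I[\mu-\nu]\leq 0$; hence $c_k(\mu)=c_k(\nu)$ for all $k$, and $\mu=\nu$. This convexity route is the template that proves uniqueness of equilibrium measures even when one does not know the value of $I_0$ or does not have a diagonal Fourier expansion of $I$. In the present setting, where both $I_0=0$ and the diagonalization are already in hand, your shortcut bypasses the polynomial step entirely and is the cleaner proof.
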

  \begin{proof}
    Let $\mu, \nu \in M_0(\TT)$ with $I[\mu] = I[\nu] = I_0$. Since
    $I[\mu]$ and $I[\nu]$ are both finite,
    \begin{align}
      \label{eq:exp}
      I[\mu-\nu] &= \iint f(\theta - \phi) d \mu(\theta) d\mu (\phi) +
      \iint f(\theta - \phi) d \nu(\theta) d\nu (\phi) \\\nonumber & - \iint f(\theta -
      \phi) d \mu(\theta) d\nu (\phi)
      - \iint f(\theta - \phi) d \nu(\theta) d\mu (\phi)  \in [-\infty, \infty)
    \end{align}
    is well-defined. Using Tonelli's theorem for each summand, we get
    \begin{align}
      \label{eq:sum}
      I[\mu-\nu] & = \sum_{k \neq 0} c_k(f) | c_k(\mu) - c_k(\nu)  |^2
      \geq 0.
    \end{align}
    In particular, $I[\mu-\nu]$ and all four terms in its expansion
    (\ref{eq:exp}) are finite.

    For $t \in [0,1]$ we have $\nu + t(\mu - \nu) \in M_0(\TT)$ and hence
    \begin{align*}
      0 = I_0 & \leq I[\nu + t (\mu-\nu)] \\
      & = I[\nu] + t \bigg( \iint f(\theta - \phi) d \nu(\theta) d(\mu-\nu) (\phi)\\ &+
      \iint f(\theta - \phi) d (\mu - \nu) (\theta) d\nu (\phi) \bigg) + t^2 I[\mu-\nu],
    \end{align*}
    where all terms are finite. The right hand side is a polynomial in
    $t$ which is nonnegative for $t \in [0,1]$ and vanishes at $t=0$ and
    $t=1$. We obtain $I[\mu-\nu] \leq 0$. From equation (\ref{eq:sum})
    we get $c_k(\mu) = c_k(\nu)$ for all $k \neq 0$, since $c_k(f) >
    0$ for $k \neq 0$. We obtain $\mu = \nu$ since $c_0(\mu) = c_0(\nu)$
    trivially because $\mu$ and $\nu$ are probability measures.
  \end{proof}
  The proof of Theorem \ref{thm:I0unique} is complete.

  \subsubsection{Application of Potential Theory}

  To a point $\theta \in \TT^n$ we associate the probability measure
  \begin{align*}
    \delta_\theta = \frac{1}{n} \sum_{j=1}^n \delta_{\theta_j},
  \end{align*}
  where on the right hand side we have a convex combination of ordinary
  Dirac measures.

  Set $\TT^n_0 = \{ \theta \in \TT^n : \theta_j \neq \theta_k (j \neq
  k)\}.$ For $\theta \in \TT^n_0$ we have $\theta_j \neq \theta_k
  \Leftrightarrow j \neq k$ and thus
  \begin{align*}
    n^2 I[\delta_{\theta}] = \sum_{j \neq k} f(\theta_j - \theta_k).
  \end{align*}

  The intuition behind Theorem \ref{thm:hlimitthm} is the following.
  The limit behavior for large $n$ of the quantity
  \begin{align*}
    \frac{1}{n} \log \E_n[e^{\sum_j
      h(\theta_j)}] = \frac{1}{n} \log 
    \frac{1}{Z_n} \int_{\TT^n} d\theta e^{\sum_j h(\theta_j)}
    e^{-n^2 I[\delta_\theta]}
  \end{align*}
  will be dominated by such $\theta \in \TT^n$, for which
  $I[\delta_\theta]$ is close to $I_0 = 0$. This is the content of Lemma
  \ref{lem:aest}. Those $\delta_\theta$ will
  then for large $n$ equidistribute to approximate the normalized
  Lebesgue measure yielding
  $\frac{1}{2\pi} \int_{\TT} h(\theta) d \theta$. This will be the
  content of Lemma \ref{lem:nulim} and the discussion afterwards.

  We use the notion of weak convergence for measures: We say a sequence of
  measures $(\mu_n)$ in $M(\TT)$ converges to a measure $\mu \in M(\TT)$ iff
  for all continuous and bounded functions $g$ on $\TT$ we have
  \begin{align*}
    \int_{\TT} g d \mu_n \to \int_{\TT} g d\mu.
  \end{align*}
  It is well-known that the space $M(\TT)$ with this notion of convergence is
  sequentially compact.

  For $\eta > 0$ define
  \begin{align*}
    A_{n, \eta} = \{ \theta \in \TT^n_0 :  I_0 \leq I[\delta_{\theta}] \leq I_0 + \eta\} = \{ \theta \in \TT^n :
    \sum_{j \neq k} f(\theta_k - \theta_j) \leq \eta n^2 \}.
  \end{align*}
  This set is compact. We have
  \begin{lem}
    \label{lem:aest}
    $0 \leq P_n[ \TT^n \setminus A_{n, \eta}] \leq e^{-\eta n^2}$.
  \end{lem}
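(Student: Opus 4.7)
The plan is to bound the numerator and denominator of the ratio
\begin{align*}
P_n[\TT^n \setminus A_{n,\eta}] = \frac{1}{Z_n} \int_{\TT^n \setminus A_{n,\eta}} e^{-\sum_{j\neq k} f(\theta_k - \theta_j)}\, d\theta
\end{align*}
separately. The lower bound $0 \le P_n[\TT^n \setminus A_{n,\eta}]$ is trivial because $P_n$ is a probability measure, so only the upper bound requires work.

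For the numerator, the defining inequality $\sum_{j\neq k} f(\theta_k - \theta_j) > \eta n^2$ characterizing $\TT^n \setminus A_{n,\eta}$ yields the pointwise bound $e^{-\sum_{j\neq k} f} < e^{-\eta n^2}$ on this region, hence
\begin{align*}
\int_{\TT^n \setminus A_{n,\eta}} e^{-\sum_{j\neq k} f(\theta_k - \theta_j)}\, d\theta \leq (2\pi)^n\, e^{-\eta n^2}.
\end{align*}

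The main step is the matching lower bound $Z_n \geq (2\pi)^n$ on the partition function. I would apply Jensen's inequality to the convex function $x \mapsto e^{-x}$ against the uniform probability measure $d\theta/(2\pi)^n$ on $\TT^n$:
\begin{align*}
\frac{Z_n}{(2\pi)^n} \geq \exp\left( -\frac{1}{(2\pi)^n} \int_{\TT^n} \sum_{j\neq k} f(\theta_k - \theta_j)\, d\theta \right).
\end{align*}
By translation invariance, each pairwise integral $\frac{1}{(2\pi)^2}\int_{\TT^2} f(\theta_k - \theta_j)\, d\theta_j\, d\theta_k$ reduces to the zeroth Fourier coefficient $\frac{1}{2\pi}\int_\TT f(\theta)\, d\theta = c_0(f)$, which vanishes by the Fourier computation preceding Lemma \ref{lem:fl1} (the observation that $I$ vanishes on the normalized Lebesgue measure). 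So the exponent is zero, $Z_n \geq (2\pi)^n$, and dividing the numerator bound by this lower bound on $Z_n$ yields $P_n[\TT^n \setminus A_{n,\eta}] \leq e^{-\eta n^2}$.

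The only technical obstacle is verifying the hypotheses for Jensen: the random variable $\sum_{j\neq k} f(\theta_k - \theta_j)$ must be integrable on $\TT^n$. This follows from $f \in L^1(\TT)$ (Lemma \ref{lem:fl1}) together with Tonelli's theorem and the lower boundedness of $f$, which also ensures $e^{-\sum f}$ is measurable and the exchange of integration order is legitimate. No other delicate analysis is required.
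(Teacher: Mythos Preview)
Your proposal is correct and follows essentially the same approach as the paper: bound the numerator using the defining inequality of $A_{n,\eta}$, and bound $Z_n$ from below via Jensen's inequality together with the fact that $c_0(f)=0$ (equivalently, $I[\mu]=0$ for the normalized Lebesgue measure $\mu$). Your additional remark about the integrability hypothesis for Jensen is a welcome technical clarification that the paper leaves implicit.
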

  \begin{proof}
    We have
    \begin{align*}
      \int_{\TT^n \setminus A_{n, \eta}} d \theta e^{-\sum_{j \neq k}
        f(\theta_k - \theta_j)} \leq \int_{\TT^n \setminus A_{n, \eta}} d
      \theta e^{- \eta n^2} \leq (2 \pi)^n e^{-\eta n^2}.
    \end{align*}
    Let $\mu$ denote the normalized Lebesgue measure. By the Jensen inequality we
    have
    \begin{align*}
      \frac{Z_n}{(2\pi)^n} = \int_{\TT_0^n} \prod_j d\mu(\theta_j) e^{-
        \sum_{j \neq k} f(\theta_k - \theta_j)} \geq \exp \bigg(
      \int_{\TT_0^n} \prod_j d\mu(\theta_j) \big( - \sum_{j \neq k}
      f(\theta_k - \theta_j) \big) \bigg).
    \end{align*}
    Now
    \begin{align*}
      - \sum_{j \neq k}\int_{\TT_0^n} \prod_{j'} d\mu(\theta_{j'}) 
      f(\theta_k - \theta_j) &=-n(n-1) \iint f(\theta - \phi)
      d\mu(\theta) d \mu(\phi)
      \\ &= -n (n-1) I[\mu] = 0.
    \end{align*}
  \end{proof}
  Moreover, we have
  \begin{lem}
    \label{lem:nulim}
    If the measures $\nu_{n, \eta}$ are Dirac measures supported at
    $\tau^{n, \eta} \in A_{n, \eta}$ and $\nu_{n_k,\eta} \to \nu_\eta$
    is a convergent subsequence, we have $\nu_\eta \in M_0(\TT)$ and
    $I[\nu_\eta] \leq \eta$. If $\nu_{\eta_k} \to \nu$ is a convergent
    sequence with $\eta_k \to 0$, the limit $\nu$ has to be the
    normalized Lebesgue measure.
  \end{lem}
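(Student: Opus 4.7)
The plan is to handle both parts by truncating $f$ and combining weak lower semicontinuity with monotone convergence. I would define $f_M(\theta) = \min\{f(\theta), M\}$ for $M > 0$; since $f$ is continuous away from $0$ and bounded below, each $f_M$ is continuous and bounded on $\TT$, and $f_M \nearrow f$ pointwise as $M \to \infty$.

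For the first claim, fix $\eta > 0$ and a subsequence $\nu_{n_k,\eta} = \frac{1}{n_k}\sum_{j=1}^{n_k}\delta_{\tau^{n_k,\eta}_j} \to \nu_\eta$ weakly, with the $\tau^{n_k,\eta}_j$ pairwise distinct. A direct expansion gives
$$\iint f_M(\theta - \phi)\,d\nu_{n_k,\eta}(\theta)\,d\nu_{n_k,\eta}(\phi) = \frac{1}{n_k^2}\sum_{j \neq l} f_M(\tau^{n_k,\eta}_j - \tau^{n_k,\eta}_l) + \frac{f_M(0)}{n_k} \leq \eta + \frac{f_M(0)}{n_k},$$
using $f_M \leq f$ and the defining bound of $A_{n_k,\eta}$. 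Since $f_M$ is continuous and bounded, the left-hand side converges under weak convergence to $\iint f_M\,d\nu_\eta\,d\nu_\eta$, while the right-hand side tends to $\eta$ for each fixed $M$. Monotone convergence in $M$ then yields $\iint f\,d\nu_\eta\,d\nu_\eta \leq \eta < \infty$. Finiteness forces $\nu_\eta$ to be atomless, because any atom of mass $m > 0$ at some $\theta_0$ would contribute $m^2 f(0) = +\infty$; hence $\nu_\eta \in M_0(\TT)$ and $I[\nu_\eta] = \iint f\,d\nu_\eta\,d\nu_\eta \leq \eta$.

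For the second claim, take $\nu_{\eta_k} \to \nu$ weakly with $\eta_k \to 0$. The first claim gives $\nu_{\eta_k} \in M_0(\TT)$, so $\iint f_M\,d\nu_{\eta_k}\,d\nu_{\eta_k} \leq \iint f\,d\nu_{\eta_k}\,d\nu_{\eta_k} = I[\nu_{\eta_k}] \leq \eta_k$. Weak convergence combined with sending $M \to \infty$ yields $\iint f\,d\nu\,d\nu \leq 0$. The same atom-exclusion argument shows $\nu \in M_0(\TT)$, and consequently $I[\nu] \leq 0 = I_0$. Combined with $I[\nu] \geq I_0$, this forces $I[\nu] = I_0$, and the uniqueness in Theorem \ref{thm:I0unique} identifies $\nu$ as the normalized Lebesgue measure.

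The principal obstacle is the logarithmic singularity of $f$ at the origin: the measures $\nu_{n_k,\eta}$ are purely atomic, so one cannot directly pass to the weak limit in the unbounded, discontinuous functional $\iint f\,d\mu\,d\mu$. The truncation $f_M$ replaces this by a continuous bounded integrand at the cost of a diagonal error $f_M(0)/n_k$, which vanishes as $n_k \to \infty$ for each fixed $M$; monotone convergence in $M$ then recovers the full energy, and the resulting finiteness automatically rules out atoms in the limit.
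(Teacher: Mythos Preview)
Your proof is correct and follows essentially the same route as the paper: truncate $f$ from above, pass the (now continuous, bounded) double integral through the weak limit while controlling the diagonal contribution $f_M(0)/n_k = M/n_k$, and then let the truncation level tend to infinity via monotone convergence to recover the bound and exclude atoms. The only cosmetic difference is that the paper makes the step ``weak convergence of $\nu_{n_k,\eta}$ implies convergence of $\iint f_M\,d\nu_{n_k,\eta}\,d\nu_{n_k,\eta}$'' explicit by a Weierstrass approximation argument (reducing to polynomials, for which the product structure is transparent), whereas you invoke this directly; since weak convergence $\mu_n\to\mu$ on a compact space implies $\mu_n\otimes\mu_n\to\mu\otimes\mu$ weakly, your assertion is justified, but a one-line remark to that effect would make the argument self-contained.
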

  \begin{proof}
    Let $L \in \R$ and separate the diagonal part:
    \begin{align*}
      \eta \geq I[\nu_{n, \eta}] & = \iint_{\theta \neq \phi} f(\theta - \phi) d \nu_{n,
        \eta}(\theta) d \nu_{n, \eta}(\phi) 
      \\ &\geq 
      \iint_{\theta \neq \phi} \min\{f(\theta - \phi),L \} d \nu_{n,
        \eta}(\theta) d \nu_{n, \eta}(\phi)\\
      & = \frac{1}{n^2} \sum_{j \neq k} \min\{f(\tau^{n,\eta}_j -
      \tau^{n,\eta}_k),L \} 
      \\ &=
      \iint \min\{f(\theta - \phi),L \} d \nu_{n,
        \eta}(\theta) d \nu_{n, \eta}(\phi)
      - \frac{L}{n}.
    \end{align*}
    Let $\epsilon > 0$. Using the Weierstrass approximation theorem,
    pick a polynomial
    $p(\theta,\phi)$ which uniformly approximates the last integrand up
    to an error of $\epsilon$. For $n = n_k$ we get
    \begin{align*}
      \eta \geq       \iint p(\theta,\phi) d \nu_{n_k,
        \eta}(\theta) d \nu_{n_k, \eta}(\phi)- \epsilon
      - \frac{L}{n_k}. 
    \end{align*}
    Send $k \to \infty$ to get
    \begin{align*}
      \eta \geq  \iint p(\theta,\phi) d \nu_{\eta}(\theta) d
      \nu_{\eta}(\phi) -\epsilon \geq \iint \min\{f(\theta - \phi),L \} d \nu_\eta(\theta) d
      \nu_\eta(\phi) - 2 \varepsilon.
    \end{align*}
    Now send $\epsilon \to 0$ and let $L \to \infty$. By monotonicity
    the limit can pass to the integrand. We obtain
    $\nu_\eta \in M_0(\TT)$ and
    \begin{align*}
      \eta \geq \iint f(\theta - \phi)d \nu_\eta(\theta) d\nu_\eta(\phi)
      =\iint_{\phi \neq \theta} f(\theta - \phi)d \nu_\eta(\theta) d\nu_\eta(\phi) = I[\nu_\eta].
    \end{align*}
    Now let $\nu_{\eta_k} \to \nu$ be a convergent sequence with $\eta_k
    \to 0$. Again fix $L \in \R$ and estimate, using $\nu_{\eta_k} \in M_0(\TT)$,
    \begin{align*}
      \eta_k \geq I[\nu_{\eta_k}] = \iint
      f(\theta-\phi) d\nu_{\eta}(\theta) d \nu_{\eta}(\phi) 
      \geq \iint
      \min\{ L, f(\theta-\phi)  \} d\nu_{\eta}(\theta) d \nu_{\eta}(\phi). 
    \end{align*}
    Let $\epsilon > 0$. Using Weierstrass we get in the limit $k \to
    \infty$
    \begin{align*}
      0 \geq \iint
      \min\{ L, f(\theta-\phi)  \} d\nu(\theta) d \nu(\phi)  - 2 \epsilon.
    \end{align*}
    Again, we let $\epsilon \to 0$ and $L \to \infty$ to get $\nu \in
    M_0(\TT)$ and $0 \geq I[\nu]$. The claim follows.
  \end{proof}
  Now we can prove Theorem \ref{thm:hlimitthm}. Let $h$ be a continuous function on
  the torus. Fix $\eta > 0$.  By Lemma \ref{lem:aest}, we have
  \begin{align*}
    \limsup_{n \to \infty} \frac{1}{n}\log \E_n[ e^{\sum_j h(\theta_j)}]
    =
    \limsup_{n \to \infty} \frac{1}{n}\log \int_{A_{n,\eta}} d \theta  P_n(\theta) e^{\sum_j h(\theta_j)},
  \end{align*}
  and analogously for $\liminf$.
  Let the continuous, real-valued function $e^{\sum_j h(\theta_j)}$ on the
  compact set $A_{n, \eta}$ attain its maximum at $\tau^{n,\eta} \in A_{n,
    \eta}$ and its minimum at $\sigma^{n,\eta} \in A_{n,\eta}$. Denote
  Dirac measures by $\nu_{n, \eta}$ and $\lambda_{n,\eta}$. We have
  \begin{align*}
    \frac{1}{n}\log \int_{A_{n,\eta}} d \theta  P_n(\theta) e^{\sum_j
      h(\theta_j)} \leq \frac{1}{n} \log P_n[A_{n,\eta}] e^{\sum_j
      h(\tau^{n,\eta}_j)}  \leq \int_{\TT} d \nu_{n,\eta}(\theta)  h(\theta).
  \end{align*}
  Similarly,
  \begin{align*}
    \frac{1}{n}\log \int_{A_{n,\eta}} d \theta  P_n(\theta) e^{\sum_j
      h(\theta_j)} \geq \frac{1}{n} \log P_n[A_{n,\eta}] e^{\sum_j
      h(\sigma^{n,\eta}_j)}  \geq \int_{\TT} d \lambda_{n,\eta}(\theta)
    h(\theta) + O( \frac{1}{n} )
  \end{align*}
  since by Lemma \ref{lem:aest}, $P_n[A_{n,\eta}] \geq 1 - e^{-\eta n^2}
  \to 1$. Let $n_k$ define a subsequence with
  \begin{align*}
    \limsup_{n \to
      \infty} \frac{1}{n}\log \int_{A_{n,\eta}} d \theta  P_n(\theta)
    e^{\sum_j h(\theta_j)} =
    \lim_{k \to \infty} \frac{1}{n_k}\log \int_{A_{n_k,\eta}} d \theta
    P_{n_k}(\theta) e^{\sum_j h(\theta_j)}.
  \end{align*}
  Let $m_k$ define a subsequence that realizes the corresponding
  $\liminf$. By passing to respective subsequences, we can suppose that
  $\nu_{n_k, \eta} \to \nu_{\eta}$ and
  $\lambda_{m_k,\eta} \to \lambda_{\eta}$ for some Borel probability
  measures $\lambda_{\eta},\nu_{\eta}$. We obtain
  \begin{align*}
    \int_{\TT} h(\theta) d\lambda_{\eta}(\theta) &\leq \liminf_{n \to
      \infty}
    \frac{1}{n}\log \E_n[ e^{\sum_j h(\theta_j)}] \\ &\leq
    \limsup_{n \to \infty} \frac{1}{n}\log \E_n[ e^{\sum_j
      h(\theta_j)}] \leq \int_{\TT} h(\theta) d \nu_{\eta}(\theta).
  \end{align*}
  The parameter $\eta > 0$ is arbitrary.
  Now let $\eta_k \to 0$ define a subsequence $\nu_{\eta} \to \nu$ and
  $\eta'_k \to 0$ define a subsequence $\lambda_{\eta} \to \lambda$. By
  Lemma \ref{lem:nulim}, $\nu = \lambda$ is the normalized
  Lebesgue measure. We obtain
  \begin{align*}
    \frac{1}{2\pi} \int_{\TT} h(\theta) d \theta &\leq \liminf_{n \to
      \infty}
    \frac{1}{n}\log \E_n[ e^{\sum_j h(\theta_j)}] \\ &\leq
    \limsup_{n \to \infty} \frac{1}{n}\log \E_n[ e^{\sum_j
      h(\theta_j)}] \leq \frac{1}{2 \pi} \int_0^{2 \pi} h(\theta) d \theta.
  \end{align*}
  The proof of Theorem \ref{thm:hlimitthm} is complete. 
  \begin{rmk}
    A physical interpretation of Theorem \ref{thm:hlimitthm} goes as follows:
    The points of $\TT^n$ are the coordinates of $n$ particles on the torus
    $\TT$ that interact via the two body potential $f(\theta_j - \theta_k)$. The
    integral over $\TT^n$ on the left hand side of equation \eqref{eq:hlimit} is
    dominated by particle configurations which minimize the potential energy of
    the system. The even function $f(\theta)$ has a pole at $\theta = 0$ and one
    minimum $\theta_0$ in $(0, \pi)$. Hence the particles experience a strong
    repulsive force once they get close to each other and have a preferred
    distance $\theta_0$ from each other. For large $n$ they cannot all stay in
    their preferred distance since the torus is compact. Hence the repulsive part
    dominates and in the limit of large $n$ the particles equidistribute. The
    evaluation of the left hand side of equation \eqref{eq:hlimit} on those
    equidistributed points on $\TT$ defines a Riemann sum approximating the
    integral on the right hand side of equation \eqref{eq:hlimit}. 
  \end{rmk}

  \subsection{Identification of Coefficients and Conclusion}
  \label{sec:identificationofcoefficients}
  We evaluate the formal parameters of section \ref{sec:k-theoretic-version} at the
  complex parameters of section \ref{sec:definitionofintegral}. We set
  \begin{align*}
    t_i^{-1} = e^{-\lambda \ee_i} = q_i,  \qquad 
    e_\alpha^{-1} =  e^{-\lambda a_\alpha} =
    u_\alpha, \qquad  b_m = e^{\lambda w_m} = p_m.
  \end{align*}
  Under the assumption $\lambda > 0$, the conditions stated at the
  beginning of section \ref{sec:definitionofintegral} are satisfied provided
  we assume $\Re \ee_i > 0$ and $\big(\ee_1 = \overline{\ee_2}$ or $\ee_1, \ee_2
  \in \R \big)$ and
  \begin{align*}
    \max_\alpha
    \Re(a_\alpha) - \min_\alpha \Re(a_\alpha) < \Re \ee_i , \quad i = 1,2.
  \end{align*}
  Hence $\ZZ_n(\vec{u};\vec{p})$ is well-defined. To ensure that each simple
  residue $\ZZ_{\vec{Y}}(\vec{u};\vec{p})$ is well-defined, we have to require
  conditions \eqref{eq:grid}, which translate to
  \begin{align}
    \label{eq:grid1}
    a_\alpha - a_\beta  \not\equiv x \ee_1 + y \ee_2, 
    \qquad
    \forall \alpha \neq \beta \in \{1, \dots, r\}
    \quad
    \forall x, y & \in \{-n, \dots, n\},
    \\\nonumber 
     x \ee_1 \not\equiv (y+1) \ee_2,\; (x+1)\ee_1  \not\equiv y \ee_2, 
     \qquad
     \qquad
     \qquad
    \forall x, y  & \in \{0, \dots, n-1\},
  \end{align}
  where the inequalities are modulo $\frac{2\pi i}{\lambda} \Z$.
  Under these conditions, Theorem \ref{thm:residuecalculation} implies
  $Z_n(\ee_1,\ee_2, \vec{a};\vec{w}; \lambda) = \ZZ_n(\vec{u};\vec{p})$. 
  By Remark \ref{rmk:grid}, the sum $Z_n(\ee_1, \ee_2, \vec{a}; \vec{w})$ is
  well-defined even if we drop condition \eqref{eq:grid1}. 
  The representation of the coefficients as integrals allowed us to apply the
  potential theory in section \ref{sec:estimateforintegral} to estimate their
  growth in Theorem \ref{thm:limsupestimate}. We obtain

\begin{thm}
    \label{thm:nekrasovconvergence}
    Let $\lambda>0$ and $\ee_1$ and $\ee_2$ be a pair of complex
    numbers with positive real part.  Assume either they are complex
    conjugate or both real. 
The $K$-theoretic Nekrasov partition function $Z(\ee_1, \ee_2, \vec{a};
    \vec{w},
    \mathfrak{q}, \lambda)$ given by equation \eqref{eq:nekrasov} is an analytic
    function of $\mathfrak{q}, \vec a, \vec w$ in the domain
    \begin{align*}
\\
\vec w\in\mathbb C^s,  \quad      \max_\alpha \Re(a_\alpha) - \min_\alpha \Re(a_\alpha) 
      < \Re \ee_i , \quad i = 1,2,
\\
    |\mathfrak{q}| < \lambda^{s-2r} 
      e^{ \lambda \big( 
        r \frac{\ee_1 + \ee_2}{2}  + \sum_{m=1}^s
        \min\{ -\Re w_m, \Re a_1, \dots, \Re a_r \} \big)
      }.
    \end{align*}
  \end{thm}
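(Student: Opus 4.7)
The plan is to combine Theorem \ref{thm:residuecalculation}, which identifies the combinatorial coefficient $Z_n$ with the integral $\ZZ_n$, and Theorem \ref{thm:limsupestimate}, which bounds the growth of $|\ZZ_n|^{1/n}$, and then extract the radius of convergence in $\mathfrak{q}$ by the Cauchy--Hadamard formula. Joint analyticity in $(\vec a, \vec w)$ will then follow from the local uniform convergence recorded in Lemma \ref{lemma-0}.

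First I would substitute $q_i = e^{-\lambda \ee_i}$, $u_\alpha = e^{-\lambda a_\alpha}$, $p_m = e^{\lambda w_m}$ and verify the hypotheses of Section \ref{sec:definitionofintegral}. Under $\lambda > 0$ and $\Re \ee_i > 0$, both $q_i$ lie in the open unit disk; the alternative $q_1 = \overline{q_2}$ or $q_1, q_2 \in (0,1)$ transfers directly from the analogous assumption on $\ee_1, \ee_2$; and the inequality $|q_i|\max_\alpha |u_\alpha| < \min_\alpha |u_\alpha|$ is equivalent to $\max_\alpha \Re(a_\alpha) - \min_\alpha \Re(a_\alpha) < \Re \ee_i$, exactly the assumed domain. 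Hence $\ZZ_n(\vec u;\vec p)$ is well-defined throughout.

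Second I would extend the identity $Z_n = \ZZ_n$, known a priori on the open dense subset where \eqref{eq:grid1} holds, to the entire domain. On one hand, $\ZZ_n$ is holomorphic in $(\vec u,\vec p)$ because the contour $C_\rho^n$ in \eqref{eq:zdef} may be chosen uniformly in a neighbourhood and differentiation commutes with the contour integral. On the other hand, $Z_n$ is meromorphic in $\vec a$ with only apparent poles along \eqref{eq:grid1} that cancel in the full sum (Remark \ref{rmk:grid}), so $Z_n$ also extends holomorphically. The identity principle then gives $Z_n = \ZZ_n$ throughout. Applying Theorem \ref{thm:limsupestimate} yields
\begin{align*}
\limsup_{n\to\infty}|Z_n|^{1/n} \leq \prod_{m=1}^s \max\{|p_m|, |u_1|, \dots, |u_r|\}
= \exp\Bigl(-\lambda \sum_{m=1}^s \min\{-\Re w_m, \Re a_1, \dots, \Re a_r\}\Bigr).
\end{align*}

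Finally, the series \eqref{eq:nekrasov} is $\sum_n (\mathfrak{q}\lambda^{2r-s} e^{-r\lambda(\ee_1+\ee_2)/2})^n Z_n$, and since $\ee_1+\ee_2$ is real by hypothesis, the Cauchy--Hadamard formula produces precisely the stated bound on $|\mathfrak{q}|$. Lemma \ref{lemma-0} gives local uniform convergence in $(\vec a, \vec w)$ on compacta of the domain, upgrading the holomorphy in $\mathfrak{q}$ to joint holomorphy in $(\mathfrak{q}, \vec a, \vec w)$ by Weierstrass. The main obstacle I expect is the extension of $Z_n = \ZZ_n$ past the exceptional locus \eqref{eq:grid1}, since individual combinatorial terms genuinely diverge there; the substance of this step is the cancellation of apparent residues that is the content of Theorem \ref{thm:residuecalculation} and Remark \ref{rmk:grid}.
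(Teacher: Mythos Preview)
Your overall strategy matches the paper's: translate the hypotheses on $\ee_i,\vec a$ into the conditions of Section~\ref{sec:definitionofintegral}, identify $Z_n$ with the integral $\ZZ_n$ via Theorem~\ref{thm:residuecalculation} (extended off the locus~\eqref{eq:grid1} by Remark~\ref{rmk:grid}), and read off the radius in $\mathfrak q$ from Theorem~\ref{thm:limsupestimate} and Cauchy--Hadamard. Up to that point your argument and the paper's coincide.

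The gap is in the last paragraph. Lemma~\ref{lemma-0} gives local uniform convergence only for $|\mathfrak q|$ below some \emph{small} radius $r(U)$ depending on the compact set $U$ in $(\vec a,\vec w)$; Weierstrass then yields joint holomorphy only on $\{|\mathfrak q|<r(U)\}\times U$, not on the full domain of the theorem. Theorem~\ref{thm:limsupestimate} supplies the large radius, but only \emph{pointwise} in $(\vec a,\vec w)$, since Theorem~\ref{thm:hlimitthm} is applied with $h=\log|g(\rho,\cdot;\vec u;\vec p)|$ for each fixed $(\vec u,\vec p)$. Pointwise convergence in $\mathfrak q$ on the large disk together with local uniform convergence on a small disk does not, by Weierstrass alone, give holomorphy on the large domain. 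The paper closes this gap with Hartogs' lemma (Bochner--Martin): the functions $\tfrac1n\log|Z_n(\vec a,\vec w)|$ are plurisubharmonic and locally uniformly bounded above (this is what the crude estimate behind Lemma~\ref{lemma-0} actually shows), and their pointwise $\limsup$ is bounded by Theorem~\ref{thm:limsupestimate}; Hartogs' lemma then upgrades this to a locally uniform bound, which is what is needed for uniform convergence on compacta of the full domain. You should replace the Weierstrass step by this Hartogs argument, or else prove a version of Theorem~\ref{thm:limsupestimate} that is uniform on compacta in $(\vec u,\vec p)$.
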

  There remains to prove the analyticity in $\vec a,\vec w$. By
  Lemma \ref{lemma-0}, we know that in a neighbourhood $U$ of each
  $(\vec a,\vec w)$ the series defining the partition functions
  converges to an analytic function for $|\mathfrak q|<r$ for some
  small positive radius $r=r(U)$. On the other hand, Theorem
  \ref{thm:limsupestimate} tells us that we have
  convergence of the power series for the indicated range of $\mathfrak q$ and
  fixed $\vec a$, $\vec w$ in the domain. By Hartogs' lemma
  (\cite{BochnerMartin}, Theorem 2, p.~139) the series converges to an
  analytic function on the whole domain.

  \section{Generalization}
  \label{sec:general}
  Our technique can be generalized to different parameter ranges and similar
  types of gauge theory.

  \subsection{Different Parameter Ranges}
  Assume $\vec{p} = \es$. Under the inversion 
  \begin{align*}
    (q_1, q_2, u_1, \dots, u_r) \mapsto (q_1^{-1}, q_2^{-1}, u_1^{-1}, \dots,
    u_r^{-1} )
  \end{align*}
  the coefficient $\ZZ_n(\vec{u};\vec{p} = \es)$ gets multiplied by $(q_1
  q_2)^{-nr}$. This can directly be seen from equation \eqref{eq:secondversion}.
  Using this observation, bounds for $|q_1|, |q_2| > 1$ can be obtained as well.

  \subsection{Similar Types of Gauge Theory}
  Our technique readily generalizes to partition functions of gauge theories where
  the weight factor in line \eqref{eq:weight} in coordinates $z_j = \rho e^{i
    \theta_j}$ is of the form
  \begin{align*}
    \exp\bigg( - \sum_{j \neq k} f(\theta_k - \theta_k) \bigg),
  \end{align*}
  where $f : \TT \to \R \cup \{ + \infty\}$ is continuous, bounded from below, has
  a single pole at $\theta = 0$ and its Fourier coefficients $c_k(f)$ satisfy
  $c_k(f) > 0$ for $k \neq 0$ and $c_0(f) =0$.

  An example would be $\mathcal{N} =
  2^*$ supersymmetric gauge theory, i.e.~with massive matter in the adjoint representation
  \cite[equation (3.25)]{Nekrasov2002}. One has to consider a different polynomial
  in the numerator of the integrand in \eqref{eq:zdef} and multiply the
  weight in line \eqref{eq:weight} by
  \begin{align*}
    \prod_{1 \leq j < k \leq n} 
    \frac
    {
      (z_j - q_1 m^{-1} z_k)(z_j - q_2 m^{-1} z_k) 
      (z_j - q_1^{-1} m z_k)(z_j - q_2^{-1} m z_k)}
    {
      (z_j- m z_k)(z_j - m^{-1} z_k )
      (z_j - q_1q_2 m^{-1} z_k)(z_j - q_1^{-1} q_2^{-1} m z_k)}.
  \end{align*}

  \section{Specialization: Norm of Deformed Gaiotto States}
  \label{sec:gaiotto}

  In this section, we briefly discuss an application of our results to conformal
  field theory, namely the finiteness of the norm of deformed Gaiotto
  states.

  \subsection{Definition}
  In this introduction, let $q, t$ denote two generic complex
  parameters. We follow the exposition of \cite{Awata2009}.  The
  deformed Virasoro algebra $Vir_{q,t}$ is defined as the
  associative algebra topologically generated by $T_n, n \in \Z$ with the defining
  relations
  \begin{align*}
    [T_n, T_m] = &-\sum_{l=1}^{\infty}  r_l( T_{n-l}T_{m+l} - T_{m-l}
    T_{n+l}) \\ &- \frac{(1-q)(1-t^{-1})}{1-qt^{-1}} ( q^nt^{-n} - q^{-n}
    t^{n}) \delta_{m+n,0}.
  \end{align*}
  The coefficients $r_l$ are determined by the expansion
  \begin{align*}
    r(x) = \sum_{l \geq 0} r_l x^l = \exp\bigg( \sum_{n \geq 1} \frac{
      (1-q^n)(1-t^{-n})}{ 1 + q^nt^{-n}} \frac{x^n}{n}\bigg).
  \end{align*}

  For $h \in \C$, the Verma module $M_h$ is generated by a vector
  $| h \ra$ with $T_0 |h\ra = h |h \ra$ and $T_n | h \ra = 0$ if
  $n \geq 1$. The operator $T_0$ defines a grading $M_h = \bigoplus_{n \geq 0}
  M_{h, n}$ where $M_{h,n}$ is the eigenspace of $T_0$ corresponding to the
  eigenvalue $h+n$. As usual, each $M_{h,n}$ has a basis 
  \begin{align*}
    T_\lambda |h\ra := T_{-\lambda(1)} \cdots T_{-\lambda(l)} | h \ra
  \end{align*}
  indexed by
  partitions $\lambda = (\lambda(1), \dots, \lambda(l))$ of size $|\lambda|=n$.
  The Shapovalov form $S : M_h \otimes M_h \to \C$ is characterized by $S(| h \ra,
  | h \ra) = 1$ and $S(T_n x, y) = S(x, T_{-n}y)$
  for $x,y \in M_h$ and $n \in \Z$. The decomposition $M_h = \bigoplus_{n \geq 0}
  M_{h,n}$ is orthogonal with respect to the Shapovalov form. In particular, the
  so-called Kac matrix with entries $S_{\lambda,\mu} := S( T_{\lambda} |h \ra, T_{\mu} | h \ra)$ is
  block diagonal with finite blocks $(S^{(n)}_{\lambda, \mu})_{|\lambda|, |\mu| =
    1, \dots, n}$.

  A deformed Gaiotto state
  is a formal power series $|G\ra = \sum_{n \geq 0} \xi^n | G_n \ra$ whose 
  coefficients $| G_n \ra \in M_{h,n}$ satisfy $T_1 |G_n \ra = |G_{n-1} \ra$ with
  $| G_{-1} \ra := 0$ and $T_n |G_m \ra = 0$ for $n \geq 2$ and all $m$. In terms
  of the expansion $|G_n \ra = \sum_{|\lambda| = n} g^{(n)}_{\lambda} T_{\lambda}
  | h
  \ra$ these conditions read
  \begin{align*}
    \sum_{|\lambda| = n} g^{(n)}_{\lambda} S^{(n)}_{\lambda,\mu} = \delta_{\mu
      (1^n)}, \qquad n = 0, 1, \dots,
  \end{align*}
  where $(1^n)$ is the partition $(1, \dots, 1)$ of size $n$.

  From now on, we assume $q, t \neq 1$. 
  The zeros of the determinant of the $n$-th block $S^{(n)}$ of the Kac matrix are
  located at \cite[equation (2.4)]{yanagida2014} 
  \begin{align}
    \label{eq:kac}
    h = \pm ( t^{r/2} q^{-s/2} + t^{-r/2} q^{s/2}), \qquad r,s \geq 1, rs = n.
  \end{align}
  Outside of these sets, the Kac matrix is invertible, hence the Gaiotto state
  exists uniquely and its norm with
  respect to the Shapovalov form is given by the formal power series
  \begin{align*}
    S(|G\ra, |G\ra) = \sum_{ n \geq 0} \xi^{2n}
    (S^{(n)})^{-1}_{(1^n),(1^n)}.
  \end{align*}

  Let $Q \in \C$ with $h = Q^{\frac12} +
  Q^{-\frac12}$. From the AGT relation (\cite{Awata2009}, \cite{yanagida2010} and
  \cite{yanagida2014}) we obtain the formal expansion as a Nekrasov
  partition function
  \begin{align}
    \label{eq:formalexp}
    S(|G\ra, |G\ra) = \sum_{n \geq 0} \xi^{2n} q^{-n} t^n Z_n(q,t,Q),
  \end{align}
  where the coefficients are sums over pairs $(\nu,\mu)$ of partitions:
  \begin{align*}
    Z_n(q,t,Q) &=  \sum_{|\nu|+|\mu|=n} \frac{1}{N_{\nu,\mu}(Q)
      N_{\mu,\nu}(Q^{-1}) N_{\nu,\nu}(1) N_{\mu,\mu}(1)}, \\
    N_{\nu,\mu}(Q) & = 
    \prod_{t \in \mu} (1-Q q^{a_{\nu}(t)} t^{l_\mu(t) +1})
    \prod_{s \in \nu} (1-Q q^{-a_{\mu}(s)-1} t^{-l_\nu(s)})
  \end{align*}

  \subsection{Finiteness of Norm}
  We want to recover the coefficients defined in equation \eqref{eq:zdef}.
  We set $q_1 = t$, $q_2 = q^{-1}$, $r = 2$ and $u_1 = u_2^{-1} = Q^{\frac12}$.
  The conditions stated at the beginning of section
  \ref{sec:definitionofintegral} say $|q| >1$, $|t|<1$ and $\big(t \overline{q}
  = 1$ or $t,q > 0\big)$ and 
  \begin{align*}
  |t| \max\{ |Q|, |Q|^{-1} \} < \min\{ |Q|, |Q|^{-1} \}, \qquad
    \max\{ |Q|, |Q|^{-1} \} < |q| \min\{ |Q|, |Q|^{-1} \}.
  \end{align*}
  The condition \eqref{eq:grid} on the well-definedness of all simple residues
  translates to
  \begin{align*}
    Q &\notin \{ q^x t^y : x, y \in \Z \}, \qquad  \text{and} \qquad
    q^x \neq t^y  \; \forall (x, y) \in \Z^2 \setminus \{(0,0)\}.
  \end{align*}
  We extend the condition on $Q$ on the left to $-Q$, thus excluding all poles
  of the Kac determinant in equation \eqref{eq:kac}. The condition on the right
  we drop by remark \eqref{rmk:grid}.
  From equation \eqref{eq:residueevaluation} in Theorem \ref{thm:residuecalculation}, we then recover 
  \begin{align}
    \label{eq:gaiottoseries}
    S(|G \ra, |G \ra) = \sum_{n \geq 0} z^n \ZZ_n(\vec{u};p=\es)
  \end{align}
  for $z = \xi^2 t q^{-1}$.  
 Hence, Theorem \ref{thm:limsupestimate}
  implies that series \eqref{eq:gaiottoseries} converges for $|z| < 1$.
  We obtain
  \begin{thm}
    \label{thm:gaiottothm}
    Let $q$ and $t$ be a pair of complex numbers of with $|t|<1$ and $|q|>1$
    Suppose either $t\overline{q} = 1$ or $t,q > 0$. Let $Q^{\frac12}$ be a complex number such that
    \begin{align*}
       Q &\notin \{ q^x t^y : x, y \in \Z \},
      & |q| \max\{ |Q|, |Q|^{-1} \} &< \min\{ |Q|, |Q|^{-1} \}, \\
     - Q &\notin \{ q^x t^y : x, y \in \Z \},
      &
      \max\{ |Q|, |Q|^{-1} \} &< |t| \min\{ |Q|, |Q|^{-1} \}.
    \end{align*}
    Set $h = Q^{\frac12} + Q^{-\frac12}$.
    The deformed Gaiotto state $|G\ra$
    with formal parameter $\xi \in \C$ exists for the Verma module $M_h$ 
    for the deformed Virasoro algebra $Vir_{q,t}$, and its norm $S(|G \ra, |G \ra)$ 
    is an analytic function in the variable $\xi$ for
    \begin{align*}
      |\xi| <  |t|^{-\frac12} |q|^{\frac12}.
    \end{align*}
  \end{thm}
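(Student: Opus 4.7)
The plan is to specialize Theorems \ref{thm:residuecalculation} and \ref{thm:limsupestimate} to the parameter choice already indicated before the statement: $q_1 = t$, $q_2 = q^{-1}$, $r = 2$, $(u_1, u_2) = (Q^{1/2}, Q^{-1/2})$, and empty $\vec{p}$. First I would check that the hypotheses of Section \ref{sec:definitionofintegral} are met. Under $|t| < 1$ and $|q| > 1$, both $q_1$ and $q_2$ sit in the open unit disk, and the dichotomy $t\overline{q} = 1$ or $t, q > 0$ puts us in the admissible symmetric regime; the two chains of inequalities displayed in the theorem translate directly into $|q_i|\max_\alpha|u_\alpha| < \min_\alpha|u_\alpha|$ for $i = 1, 2$, so there exists a valid integration radius $\rho$.

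Next I would handle existence and uniqueness of $|G\ra$. The defining equations of the Gaiotto state decouple block-wise along the $T_0$-grading, and on each block $M_{h,n}$ existence and uniqueness of $|G_n\ra$ reduce to invertibility of the Kac matrix $S^{(n)}$. The zeros of $\det S^{(n)}$ are given by \eqref{eq:kac}; writing $h = Q^{1/2} + Q^{-1/2}$, one checks that the equation $\pm(t^{r/2}q^{-s/2} + t^{-r/2}q^{s/2}) = Q^{1/2} + Q^{-1/2}$ forces $\pm Q \in \{q^x t^y : x, y \in \Z\}$ by matching either $Q^{\pm 1/2}$ with one of the two summands. The two exclusion hypotheses on $Q$ and $-Q$ therefore rule out every zero of the Kac determinant, so $|G_n\ra$ exists uniquely for each $n \geq 0$.

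Finally I would combine the integral-to-sum identification with the potential-theoretic estimate. By the AGT expansion \eqref{eq:formalexp} we have $S(|G\ra, |G\ra) = \sum_{n \geq 0} \xi^{2n} q^{-n} t^n Z_n(q,t,Q)$, and equation \eqref{eq:residueevaluation} of Theorem \ref{thm:residuecalculation} identifies each $Z_n(q,t,Q)$ with $\ZZ_n(\vec{u};\es)$ (up to the factor $q^{-n}t^n$ absorbed by the substitution $z = \xi^2 t q^{-1}$), yielding $S(|G\ra,|G\ra) = \sum_{n \geq 0} z^n \ZZ_n(\vec{u};\es)$. Theorem \ref{thm:limsupestimate} applied with $s = 0$ has an empty product on the right-hand side, giving $\limsup_n |\ZZ_n(\vec{u};\es)|^{1/n} \leq 1$; hence the series converges for $|z| < 1$, that is, for $|\xi| < |t|^{-1/2}|q|^{1/2}$. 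The main obstacle is that the grid condition \eqref{eq:grid} underpinning Theorem \ref{thm:residuecalculation} need not hold throughout the parameter domain of Theorem \ref{thm:gaiottothm}; here I would invoke Remark \ref{rmk:grid} to argue that the integral-defined coefficient $\ZZ_n(\vec{u};\es)$ remains finite everywhere the hypotheses of Section \ref{sec:definitionofintegral} hold, and then propagate the coefficient identification from the dense open subset where the grid condition is satisfied to the full admissible domain by analyticity in $(q, t, Q)$.
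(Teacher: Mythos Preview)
Your proposal is correct and follows essentially the same approach as the paper: the same specialization $q_1=t$, $q_2=q^{-1}$, $r=2$, $(u_1,u_2)=(Q^{1/2},Q^{-1/2})$, the same verification of the conditions of Section~\ref{sec:definitionofintegral}, the Kac determinant argument for existence, the identification via Theorem~\ref{thm:residuecalculation} of the AGT expansion with $\sum z^n\ZZ_n(\vec u;\es)$ at $z=\xi^2 tq^{-1}$, and the bound $\limsup|\ZZ_n|^{1/n}\le 1$ from Theorem~\ref{thm:limsupestimate}. Your treatment of the grid condition, propagating the identification by analyticity from the locus where \eqref{eq:grid} holds, is slightly more explicit than the paper's bare invocation of Remark~\ref{rmk:grid}, but the substance is identical.
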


  \section{Open Problems}
  \label{sec:open}

  Here, we briefly describe two formal power series, to which we intended to apply
  our results. Their coefficients are certain limits of the coefficients we studied
  earlier. Due to the lack of uniformity in our estimates, we were not able to
  prove convergence of those power series.

  \subsection{Homological Version of Nekrasov Partition Function}
  \label{sec:homological-version}

  The $K$-theoretic Nekrasov partition function defined in
  section \ref{sec:nekr-part-funct} has a counterpart in which $K$-theory groups
  are replaced by $\tT$-equivariant Borel--Moore homology groups
  $H_*^{\tT}(-)$. All $H_*^{\tT}(-)$ are modules for the ring
  \begin{align*}
    S(\tT) := H_*^{\tT}( pt ) \cong \Z[ \ee_1, \ee_2, a_1, \dots, a_r].
  \end{align*}
  Let $\mathcal{S}\cong \Q(\ee_1,\ee_2,a_1,\dots,a_r)$ denote its
  quotient field. The maps $\iota$ and $\iota_{\vec{Y}}$,  defined in
  section \ref{sec:nekr-part-funct},
  all define pushforwards in equivariant homology. The localization
  theorem in equivariant homology says that $\iota$ becomes an
  isomorphism after tensoring with $\mathcal{S}$. 

  The fixed point theorem says that after tensoring with $\mathcal{S}$
  we have
  \begin{align}
    \label{eq:hfixed}
    (\iota_*)^{-1} = \sum_{|\vec{Y}| = n} \frac{\iota_{\vec{Y}}^*}{ e(
      T_{\vec{Y}} M(r,n)) },
  \end{align}
  where $e(-)$ denotes the Euler class.
  The homological Nekrasov partition function for pure Yang
  Mills theory is defined as a formal power series
  \begin{align*}
    Z(\ee_1,\ee_2,\vec{a};\mathfrak{q}) = \sum_{n \geq 0} \mathfrak{q}^n
    Z_n(\ee_1,\ee_2, \vec{a}),
  \end{align*}
  with coefficients
  $Z_n(\ee_1,\ee_2,\vec{a}) = \sum_{|\vec{Y}|=n} (\iota_{*})^{-1} [M(r,n)]$ where
  $[M(r,n)]$ denotes the fundamental class of $M(r,n)$. Using the fixed point formula
  (\ref{eq:hfixed}) and equation (\ref{eq:TY}) to compute the Euler
  classes gives
  \begin{align}
    Z_n(\ee_1,\ee_2,\vec{a})
    = \sum_{|\vec{Y}|=n}
    \frac{1}{\prod_{\alpha,\beta}
      n^{\vec{Y}}_{\alpha,\beta}(\ee_1,\ee_2,\vec{a})},  
  \end{align}
  where
  \begin{align*}
    n^{\vec{Y}}_{\alpha,\beta}(\ee_1,\ee_2,\vec{a})
    =    
    & \prod_{s \in Y_\alpha} \big( - l_{Y_\beta}(s) \ee_1 + (
    a_{Y_\alpha}(s) + 1 )\ee_2 + a_\alpha - a_\beta \big) \\
    &\prod_{t \in Y_\beta} \big( (l_{Y_\alpha}(t) + 1) \ee_1 
    - a_{Y_\beta}(t) \ee_2 + a_\alpha - a_\beta \big).
  \end{align*}

  In particular, the coefficients of the homological Nekrasov partition
  function are limits of the coefficients of the K-theoretic partition function,
  as defined in equation \eqref{eq:nekrasovcoeff}:
  \begin{align}
    \label{eq:neklimit}
    \lambda^{2 r n} Z_n(\ee_1,\ee_2,\vec{a};\vec{w}=\es;\lambda) \to   Z_n(\ee_1,\ee_2,\vec{a}) \qquad
    (\lambda \to 0).
  \end{align}

  Equation \eqref{eq:neklimit} suggests that, under the same conditions on $\ee_1,
  \ee_2, a_1, \dots, a_r$ as stated in Theorem \ref{thm:nekrasovconvergence}, the
  homological Nekrasov partition function $Z(\ee_1,\ee_2,\vec{a};\mathfrak{q})$
  converges for all $\mathfrak{q} \in \C$.
  This question remains unanswered: Neither is the limit \eqref{eq:neklimit}
  uniform in $n$, nor is the estimate for the radius of convergence uniform in
  $\lambda$. 

  \begin{rmk}
    In the special case $\epsilon_1 + \epsilon_2 = 0$, one can directly estimate the
    coefficients $Z_n$ 
to prove convergence of the expansion
    $Z(\ee_1,\ee_2,\vec{a};\mathfrak{q}) = \sum_{n \geq 0} \mathfrak{q}^n
    Z_n(\ee_1,\ee_2,\vec{a})$: The diagonal term
    $\alpha=\beta$ in $\prod_{\alpha,\beta=1}^n
    n^{\vec{Y}}_{\alpha,\beta}(\ee_1,\ee_2,\vec{a})$ is a product over all
    hook lengths occurring in the Young diagram $Y_\alpha$. By the hook length
    formula we obtain $\frac{1}{|Y_\alpha|!}$ times the Plancherel measure of the
    Young diagram $Y_\alpha$, which allows one to estimate the expansion for all
    $\mathfrak{q} \in \C$, see \cite[proposition 1]{its2014}. This
      technique, which requires $\epsilon_1+\epsilon_2 = 0$, also generalizes
      \cite{Bershtein2017} to the $K$-theoretic Nekrasov partition function without
      additional matter fields.
  \end{rmk}

  \subsection{Conformal Blocks}

  \subsubsection{The Virasoro Algebra and Verma Modules}
  We consider the Virasoro algebra $Vir$. It is
  a complex Lie algebra with generators $L_n, n \in \Z$ and central element $C$
  satisfying 
  \begin{align*}
    [L_m,L_n] = (m-n) L_{m+n} + \frac{1}{12} (m^3-m) \delta_{m+n,0} C.
  \end{align*}
  Fix $c \in \C$, the so-called \emph{central charge}. It will remain unchanged
  for the remainder of this introduction. For $h \in \C$ the \emph{Verma module
  }$V_{h}$ for the Virasoro algebra of conformal dimension $h$ and central charge
  $c$ is a module generated by a vector $|h \ra$ satisfying $L_0 |h \ra = h |h
  \ra$, $C |h \ra = c |h \ra$ and $L_n |h \ra = 0$ for $n \geq 1$. A basis is
  given by $L_{\lambda} |h \ra := L_{-\lambda(1)} \cdots L_{-\lambda(l)} |h \ra$,
  where $\lambda = (\lambda(1), \dots, \lambda(l))$ is a partition. The Verma
  module comes with a bilinear form, the \emph{Shapovalov form}, $S : V_h \otimes
  V_h \to \C$ characterized by $S( L_n x, y ) = S(x, L_{-n} y)$, for $x,y \in
  V_h$ and $S(|h \ra, |h \ra) = 1$. 

  \subsubsection{Intertwiners}
  Fix complex numbers $h, h_1, h_2$ and set $a = h_2 - h_1 - h$. 
  Given two Verma modules $V_{h_1}$ and $V_{h_2}$ for the same central charge $c$,
  an \emph{intertwiner} $\phi(z) = \sum_{n \in \Z} \phi_n z^{n+a}$
  between $V_{h_1}$ and $V_{h_2}$ of conformal dimension $h$ is a formal power
  series whose coefficients $\phi_n$ are linear maps $V_{h_1} \to V_{h_2}$ such
  that 
  \begin{align}
    \label{eq:intertwiner}
    [L_n, \phi(z)] = ( z^{n+1} \partial_z + (n+1) h z^n ) \phi(z).
  \end{align}
  In conformal field theory, this property models transformations of fields under
  infinitesimal conformal transformations. Conformal blocks model vacuum
  expectation values of such fields. Mathematically they are defined as follows:

  \subsubsection{Conformal Blocks}
  Let $h_l, H_l, h_m, H_r, h_r$ be complex numbers. Let $\phi_l(w)$ be an
  intertwiner of conformal dimension $H_l$ from $V_{h_m}$ to $V_{h_l}$. Let
  $\phi_r(z)$ be an intertwiner of conformal dimension $H_r$ from $V_{h_r}$ to
  $V_{h_m}$. The object
  \begin{align*}
    \la h_l | \phi_l(1) \phi_r(z) | h_r \ra :=
    \lim_{w \to 1}
    S\bigg( |h_l \ra, \phi_l(w) \Big( \phi_r(z) |h_r \ra \Big) \bigg)
  \end{align*}
  is called a four-point conformal block. It defined as $z^{h-h_r-H_r}$ times a
  formal power series in $z$, whose complex coefficients are in
  principle determined by equation \eqref{eq:intertwiner} up to normalization. The
  limit is taken for each coefficient separately. 

  The AGT relation allows us to express the conformal block as a Nekrasov
  partition function: One first has to introduce the \emph{Liouville
    parametrization} of the conformal dimensions and the central charge. We pick
  $b \in \C$ with 
  \begin{align*}
    c=1+6(b+b^{-1})^2,
  \end{align*}
  to parametrize the central charge. The conformal dimensions $h_l, h_m, h_r$ of
  the Verma modules $V_{h_l}$, $V_{h_m}$ and $V_{h_r}$ are parametrized as
  \begin{align*}
    h_l = \frac{(b+b^{-1})^2}{4} - P_l^2, \qquad h_m = \frac{(b+b^{-1})^2}{4} - P_m^2, \qquad h_l =
    \frac{(b+b^{-1})^2}{4} - P_l^2,
  \end{align*}
  where $P_l,P_m,P_r \in \C$. The conformal dimensions $H_l$ and $H_r$ of the
  intertwiners $\phi_l(w)$ and $\phi_r(z)$ are parametrized as
  \begin{align*}
    H_{r} &= \alpha_{r} ( b+b^{-1} - \alpha_{r} ), &
    H_{l} &= \alpha_{l} ( b+b^{-1} - \alpha_{l} ),
  \end{align*}
  where $\alpha_r, \alpha_l \in \C$.
  We have \cite{albafateev2011}
  \begin{align*}
    \la h_l | \phi_l(1) \phi_r(z) | h_r \ra =
    (1-z)^{2 \alpha_r( b+b^{-1} - \alpha_l)}
    \sum_{n \geq 0} z^n F_n(b, \alpha_r, \alpha_l, P_r, P_m, P_l),
  \end{align*}
  with
  \begin{align*}
    &F_n(b, \alpha_r, \alpha_l, P_r, P_m, P_l) \\
    =& 
    \sum_{|Y_1| + |Y_2| = n} 
    \frac
    {
      Z_{bif}\Big( \alpha_r \Big| P_r, (\es, \es); P_m, (Y_1, Y_2) \Big) 
      \,
      Z_{bif}\Big( \alpha_l \Big| P_m, (Y_1, Y_2);  P_l, (\es,\es)
      \Big)
    }
    {
      Z_{bif}\Big( 0 \Big| P_m, (Y_1, Y_2); P_m, (Y_1, Y_2) \Big)
    }.
  \end{align*}
  Here the value of $Z_{bif}(\alpha | P', \vec{Y}';  P, \vec{Y})$ for
  pairs of partitions $\vec{Y}, \vec{Y}'$ and complex numbers $P,
  P'$ and $\alpha$ is given by
  \begin{align*}
     Z_{bif}(\alpha | P', \vec{Y}';  P, \vec{Y} )
    = 
    \prod_{i,j = 1}^2 &\Bigg(
    \prod_{s \in Y_i} 
    ( P_j' - P_i + b (l_{Y'_j}(s) +1 ) - b^{-1} a_{Y_i}(s) -\alpha)
    \\
    &\prod_{t \in Y'_j}
    ( P_j' - P_i - b l_{Y_i}(t) + b^{-1} (a_{Y'_j}(t) + 1) -\alpha) \Bigg),
  \end{align*}
  where, on the right hand side, $P_i$ is the $i$-th component of the vector
  $\vec{P} = (P,-P)$ and similarly for $P'_i$.

  \subsubsection{Attempt at an Estimate}

  A simple calculation gives
  \begin{align*}
      &Z_{bif}\Big( \alpha_r \Big| P_r, (\es, \es); P_m, (Y_1, Y_2) \Big) 
      \,
      Z_{bif}\Big( \alpha_l \Big| P_m, (Y_1, Y_2);  P_l, (\es,\es)
      \Big)
      \\
    =& 
    \prod_{i=1}^2 \prod_{(x,y) \in Y_i} \prod_{m=1}^4 ( 
    (P_m)_i + b (x-1) + b^{-1} (y-1) + v_m)
    ,
  \end{align*}
  where $v_1 = \alpha_r + P_r, v_2 = \alpha_r - P_r, v_3 = -\alpha_l + b +
  b^{-1} + P_l$ and
  $v_4 = - \alpha_l + b + b^{-1} - P_l.$ Under the identifications
  \begin{align*}
    a_\alpha = P_\alpha,\quad \alpha = 1,2, \qquad \ee_1 = b, \ee_2 = b^{-1}, 
    \qquad w_m = v_m, \quad m = 1, \dots, 4
  \end{align*}
  and using equation \eqref{eq:residueevaluation} in Theorem
  \ref{thm:residuecalculation},
  we recover $F_n(b, \alpha_r, \alpha_l, P_r, P_m, P_l)$
  from $Z_n(\ee_1, \ee_2, \vec{a}, \vec{w}; \lambda)$ in the limit $\lambda \to
  0$. 
  The conditions formulated in Theorem \ref{thm:nekrasovconvergence} on $b$ are
  $\Re b > 0$ and either $b > 0$ or $|b| = 1$. The ones on $P$ are
  $|\Re P| < \frac12 \Re b$ and $|\Re P | < \frac12  \Re b^{-1} $.
  Condition \eqref{eq:grid1} translates, for $\lambda = 0$, to
  \begin{align*}
    P \notin \frac{b}{2} \Z + \frac{b^{-1}}{2} \Z  \text{ and } 
    b^2 \notin \Q_{\geq 0}.
  \end{align*}
  The condition on $P$ excludes the minimal
  models. The conditions on $b$ restrict the central charge of the theory to the
  interval $(1,\infty)$.
  Since $\lambda^{2r-s} = 1$, Theorem \ref{thm:nekrasovconvergence} seems to suggest that the conformal block
  $\la h_l | \phi_l(1) \phi_r(z) | h_r \ra$ is analytic in $z$ for $|z| < 1$.
  However we neither know that the convergence of the Nekrasov partition
  function is uniform in $\lambda$, nor do we know that the convergence
  $Z_n(\ee_1, \ee_2, \vec{a}, \vec{w}; \lambda) \to
  F_n(b, \alpha_r, \alpha_l, P_r, P_m, P_l)$  as $\lambda \to 0$ is uniform in $n$.

  \appendix

  \section{Evaluation by Iterated Residues}
  \label{sec:residueproof}

  In this section we prove Theorem \ref{thm:residuecalculation} by evaluating the 
integral by residues.
    In a first step, we find the position of all poles whose residues contribute to the
    integral and show that they are simple. In a second step we evaluate the
    residues.

We show that $\ZZ_n(\vec{u};\vec{p})$ is a sum 
  \begin{align}
    \label{eq:residuesum}
    \ZZ_n(\vec{u};\vec{p}) &= \sum_{|\vec{Y}| = n }
    \ZZ_{\vec{Y}}(\vec{u};\vec{p}),
  \end{align}
  of iterated, simple residues
  \begin{align}
    \label{eq:residueexpression}
    \ZZ_{\vec{Y}}(\vec{u};\vec{p}) :=& \bigg( \frac{1-q_1q_2}{(1-q_1)(1-q_2)}\bigg)^n
    \prod_{j=1}^n \prod_{m=1}^s(\hat{z}_j-p_m)
    \\\nonumber
    &
    \lim_{\substack{z_j \to \hat{z}_j \\ j=1,\dots,n}}
    \;
    \Bigg(
    \prod_{j=1}^n 
    (z_j - \hat{z}_j)
    \quad
    \frac{ \mathcal{I}(z_1, \dots, z_n;\vec{u}) }
    {z_1 \cdots z_n}
    \Bigg),
  \end{align}
  where 
  \begin{align}
    \label{eq:unorderedresidues}
    \{\hat{z}_1,\dots,\hat{z}_n\} = 
    \{z^{\alpha}_{x,y} : (x,y) \in Y_\alpha, \alpha = 1, \dots, r \},
  \end{align}
  in any
  order.

    We suppose that $|q_1| = |q_2| + \delta < 1$, where $\delta > 0$ is small
    enough so that $|q_2| > |q_1|^2 > | q_1 |^3 > \cdots$. The general case
    follows from analytic continuation.

    We evaluate the integral
    $\ZZ_n(\vec{u};\vec{p})$ by taking iterated residues. We start with a
    slightly more general integral. Let $U,W$ be two finite sets of complex
    numbers and $f(z_1,\dots,z_n)$ a symmetric function, analytic on the closed
    ball in $\C^n$ with radius $\rho$. Our integral is of the form
    \begin{align}
      \nonumber
      \int_{C_\rho} \frac{dz_n}{2\pi i}
      \dots
      \int_{C_\rho} \frac{dz_1}{2\pi i}
      &
      I(z_1,\dots,z_n)
      \\
      =   
      \label{eq:upper}
      \int_{C_\rho} \frac{dz_n}{2\pi i}
      \frac
      {
        \prod_{w \in W}
        (z_n -w)
      }
      {
        \prod_{u \in U}
        (z_n - u)
      }
      &
      \prod_{n<k}
      \frac
      {
        (z_n - z_k)^2
        (z_n - q_1 q_2 z_k)
        (z_n - q_1^{-1} q_2^{-1} z_k)
      }
      {
        (z_n - q_1 z_k)
        (z_n - q_2 z_k)
        (z_n - q_1^{-1} z_k)
        (z_n - q_2^{-1} z_k)
      }
      \\\nonumber
      &
      \vdots
      \\\nonumber
      \int_{C_\rho} \frac{dz_2}{2\pi i}
      \frac
      {
        \prod_{w \in W}
        (z_2 -w)
      }
      {
        \prod_{u \in U}
        (z_2 - u)
      }
      &
      \prod_{2<k}
      \frac
      {
        (z_2 - z_k)^2
        (z_2 - q_1 q_2 z_k)
        (z_2 - q_1^{-1} q_2^{-1} z_k)
      }
      {
        (z_2 - q_1 z_k)
        (z_2 - q_2 z_k)
        (z_2 - q_1^{-1} z_k)
        (z_2 - q_2^{-1} z_k)
      }
      \\\nonumber
      \int_{C_\rho} \frac{dz_1}{2\pi i}
      \frac
      {
        \prod_{w \in W}
        (z_1 -w)
      }
      {
        \prod_{u \in U}
        (z_1 - u)
      }
      &
      \prod_{1<k}
      \frac
      {
        (z_1 - z_k)^2
        (z_1 - q_1 q_2 z_k)
        (z_1 - q_1^{-1} q_2^{-1} z_k)
      }
      {
        (z_1 - q_1 z_k)
        (z_1 - q_2 z_k)
        (z_1 - q_1^{-1} z_k)
        (z_1 - q_2^{-1} z_k)
      }\\ \nonumber &
      f(z_1,\dots,z_n).
    \end{align}
    where
    $I(z_1,\dots,z_n)$ is given by
    \begin{align*}
      f(z_1,\dots,z_n) 
      \prod_{j=1}^n
      \Bigg(&
      \frac
      {
        \prod_{w \in W}
        (z_j -w)
      }
      {
        \prod_{u \in U}
        (z_j - u)
      }\\ &\times
      \prod_{j<k}
      \frac
      {
        (z_j - z_k)^2
        (z_j - q_1 q_2 z_k)
        (z_j - q_1^{-1} q_2^{-1} z_k)
      }
      {
        (z_j - q_1 z_k)
        (z_j - q_2 z_k)
        (z_j - q_1^{-1} z_k)
        (z_j - q_2^{-1} z_k)
      }
      \Bigg).
    \end{align*}
    When integrating $z_1$ we can only pick up residues at $q_1 z_j$, $q_2 z_j$
    or $\hat{z}_1 = u,$ for some $u \in U$ with $ |u| \leq \rho$. 

    Assume that we pick up a residue at $\hat{z}_1 = q_i z_j$ for some $i \in
    \{1,2\}$ and $j \in \{2, \dots, n\}$. The other case will be treated later
    on.  Let $s \in \{1,2\}$ be the index complementary to $i$. The residue is
    simple.
    After renaming $z_j
    \leftrightarrow z_2$ we obtain
    \begin{align*}
      &
      \int_{C_\rho} \frac{dz_n}{2\pi i}
      \dots
      \int_{C_\rho} \frac{dz_2}{2\pi i}
      f(q_i z_2,z_2,\dots,z_n) 
      \\
      & \times
      \prod_{j=3}^n
      \Bigg(
      \frac
      {
        \prod_{w \in W}
        (z_j -w)
      }
      {
        \prod_{u \in U}
        (z_j - u)
      }
      \prod_{2<j<k}
      \frac
      {
        (z_j - z_k)^2
        (z_j - q_1 q_2 z_k)
        (z_j - q_1^{-1} q_2^{-1} z_k)
      }
      {
        (z_j - q_1 z_k)
        (z_j - q_2 z_k)
        (z_j - q_1^{-1} z_k)
        (z_j - q_2^{-1} z_k)
      }
      \Bigg)
      \\
      & \times
      \frac
      {
        \prod_{w \in W}
        (z_2 -w)(q_i z_2 - w)
      }
      {
        \prod_{u \in U}
        (z_2 - u)(q_i z_2 -u)
      }
      \\
      & \times\prod_{2<k}
      \frac
      {
        (z_2 - z_k)
        (z_2 - q_i q_s z_k)
        (z_2q_i - z_k)
        (z_2q_i - q_i^{-1} q_s^{-1} z_k)
      }
      {
        (z_2 - q_i z_k)
        (z_2 - q_s^{-1} z_k)
        (z_2q_i - q_s z_k)
        (z_2q_i - q_i^{-1} z_k)
      }
      \\
      &
      \times
      q_i z_2 
      \frac
      {
        (q_i - 1)^2 
        (1 - q_s)
        (q_i - q_i^{-1} q_s^{-1})
      }
      {
        (q_i - q_s)
        (q_i - q_i^{-1})
        (q_i - q_s^{-1})
      }.
    \end{align*}
    We also have used Fubini's theorem to permute the order of integration, swapping the
    integrals over $dz_2$ and $dz_j$. This is permitted since the integrand does
    not have poles on the integration contours. Here we use the small perturbation of $|q_1|$ and $|q_2|$.  We note that this result does not
    depend on $j$ anymore, which also follows from the symmetry of
    $I(z_1,\dots,z_n)$. So we can suppose without loss of generality $j=2$.

    We see that now we can pick up residues at $\hat{z}_2 = u_0, q_i^{-1} u_0,$
    for some $u_0$ such that $\hat{z}_2$ lies inside the integration contour or at
    $\hat{z}_2 = q_i z_j$, where the $i$ is the same as in the previous integral.
    It appears like there is a residue coming from the pole $(z_2 q_i - q_s
    z_j)$ if $i=1$. However, all residues coming from these poles cancel each
    other in the sum: One directly calculates
    \begin{align*}
      \lim_{z_k \to q_1^{-1} q_2 z_l} (z_k - q_1^{-1} q_2 z_l) \sum_{i=1,2}
      \sum_{j=2}^n \Res_{\hat{z}_1 = q_i z_j} I(z_1,\dots,z_n) = 0.
    \end{align*}


    Assume that we pick $\hat{z}_2 = q_i z_j$, again assume without loss of
    generality $j=3$. The residue is simple. The integral over $z_3$ becomes
    \begin{align*}
      \int_{C_\rho} \frac{dz_3}{2\pi i}
      &\prod_{l=0}^2
      \frac
      {
        \prod_{w \in W}
        (z_3q_i^l -w)
      }
      {
        \prod_{u \in U}
        (z_3q_i^l - u)
      }
      \\
      &
      \prod_{3<k}
      \frac
      {
        (z_3 - z_k)
        (z_3 - q_i q_s z_k)
        (z_3q_i^2 - z_k)
        (z_3q_i^2 - q_i^{-1} q_s^{-1} z_k)
      }
      {
        (z_3 - q_i z_k)
        (z_3 - q_s^{-1} z_k)
        (z_3q_i^2 - q_s z_k)
        (z_3q_i^2 - q_i^{-1} z_k)
      } \\
      &
      q_i^3 z_3^2
      \frac
      {
        (q_i - 1)^3
        (1 - q_s)^2
        (q_i^2 - q_s^{-1})
        (q_i^3 - q_s^{-1} )
      }
      {
        (q_i - q_s)
        (q_i - q_s^{-1})^2
        (q_i^2 - q_s )
        (q_i^3 - 1 )
      }
      f(q_i^2 z_3, q_i z_3, z_3, \dots,z_n).
    \end{align*}
    In the next step, we can again only pick up simple residues at $\hat{z}_3 =
    q_i z_4$ or $\hat{z}_3 = q_i^{-l} u_0$ with $l \in \{0,1,2\}$ and $u_0 \in U$
    such that $\hat{z}_3$ lies inside the integration contour.

    Assume that we have picked and evaluated simple residues at
    $(\hat{z}_1,\dots,\hat{z}_{J-1})$ during the first $J-1$ integrations with
    $\hat{z}_j = q_i^{J-j} z_J$, $j = 1, \dots, J-1$, where $i \in \{1,2\}$ is
    fixed. Let $s \in \{1,2\}$ be the other index. The integral over $z_J$ has the
    integrand
    \begin{align}
      \label{eq:recursiveintegrand}
      &\prod_{l=0}^{J-1}
      \frac
      {
        \prod_{w \in W}
        (z_J q_i^l -w)
      }
      {
        \prod_{u \in U}
        (z_J q_i^l - u)
      }\\ \nonumber & \times
      \prod_{J<k}
      \frac
      {
        (z_J - z_k)
        (z_J - q_i q_s z_k)
        (z_Jq_i^{J-1} - z_k)
        (z_Jq_i^{J-1} - q_i^{-1} q_s^{-1} z_k)
      }
      {
        (z_J - q_i z_k)
        (z_J - q_s^{-1} z_k)
        (z_Jq_i^{J-1} - q_s z_k)
        (z_Jq_i^{J-1} - q_i^{-1} z_k)
      }\\
      \nonumber
      & \times
      z_J^{J-1}
      q_i^{\frac12 (J+2)(J-1)} 
      \frac
      {
        (q_i  - 1)^{J}
        (1 - q_s )^{J-1}
      }
      {
        (q_i  - q_s^{-1} )^{J-1}
        (q_i^{J} - 1)
      }\\ \nonumber & \times
      \prod_{j=1}^{J-1}
      \frac
      {
        (q_i^{j} - q_i^{-1} q_s^{-1} )
      }
      {
        (q_i^{j} - q_s )
      }
      f(q_i^{J-1} z_J, \dots, z_J, \dots, z_n),
    \end{align}
    whereas the integrands for $z_{J+1}, \dots, z_n$ in equation \eqref{eq:upper}
    remain the same. Now pick a residue at $\hat{z}_J = q_i^{-l_0} u_0$ inside
    the integration contour with $u_0 \in U$ and $l_0 \in \{0, \dots, J-1\}$ and
    suppose the pole is simple. So we have followed the evaluation steps
    \begin{align}
      \label{eq:evalsteps}
      \hat{z}_1 = q_i z_2, \qquad \hat{z}_2 = q_i z_3, \qquad \dots, \qquad
      \hat{z}_{J-1} = q_i z_J, \qquad \hat{z}_J = q_i^{-l_0} u_0.
    \end{align}
    leading to the residue strip
    \begin{align}
      \label{eq:residuestrip}
      (\hat{z}_1,\dots,\hat{z}_J) = q_i^{-l_0} (q_i^{J-1} u_0, \dots, u_0).
    \end{align}
    When we evaluate the final residue we
    get the same integral expression we started with as in equation
    \eqref{eq:upper}, except that we only have the variables $z_{J+1}, \dots,
    z_n$, the sets $U,W$ are changed to
    \begin{align*}
      U' &= U \cup \{ q_i^{-1} \hat{z}_J, q_{s} \hat{z}_J, q_s^{-1}
      \hat{z}_1, q_i \hat{z}_1 \} &
      W' &=  W \cup \{ \hat{z}_J, q_i^{-1} q_{s}^{-1} \hat{z}_J,    
     \hat{z}_1, q_i q_s \hat{z}_1 \},
     \end{align*}
     and $f(z_1,\dots,z_n)$ gets replaced by
     \begin{align*}
       f'(z_{J+1},\dots,z_n)=
       f(q_i^{J-1} q_i^{-l_0} u_0, \dots, q_i^{-l_0} u_0, z_{J+1}, \dots, z_n).
     \end{align*}
     This function is again symmetric. 
     Moreover, we have accumulated a prefactor with the value 
     \begin{align}
       \label{eq:residueprefactor}
       &
       \prod_{l=0}^{J-1}
       \frac
       {
         \prod_{w \in W}
         (q_i^{l-l_0} u_0 -w)
       }
       {
         \prod_{u \in U\setminus \{u_0\}}
         (q_i^{l-l_0} u_0 - u)
       }
       \prod_{\substack{l=0 \\ l\neq l_0}}^{J-1}
       \frac{1}
       {
         (q_i^{l-l_0} - 1)
       }
       q_i^{\frac12 (J+2)(J-1) - l_0 J} 
       \\\nonumber
       &
       \frac
       {
         (q_i  - 1)^{J}
         (1 - q_s )^{J-1}
       }
       {
         (q_i  - q_s^{-1} )^{J-1}
         (q_i^{J} - 1)
       }
       \prod_{j=1}^{J-1}
       \frac
       {
         (q_i^{j} - q_i^{-1} q_s^{-1} )
       }
       {
         (q_i^{j} - q_s )
       },
    \end{align}
    coming from the residue evaluation. We see that the evaluation of our
    integral happens in stages, where one evaluates a strip of residues. We
    draw the strip
    \eqref{eq:residuestrip} for $i=2$ as follows:
    \begin{align}
      \label{eq:residuediagram}
      \begin{tikzpicture}[scale=0.5]
        \draw [fill=zero, draw=none] (-4,1) rectangle (-3,2);
        \draw [fill=pole, draw=none] (-4,0) rectangle (-3,1);
        \draw [fill=pole, draw=none] (-3,-1) rectangle (-2,0);
        \draw [fill=zero, draw=none] (3,-1) rectangle (4,0);
        \draw [fill=pole, draw=none] (2,1) rectangle (3,2);
        \draw [fill=pole, draw=none] (3,0) rectangle (4,1);
        \draw [very thick, fill=zero] (-3,0) rectangle (-2,1)
        node[pos=.5]{$\hat{z}_J$}; 
        \draw[very thick] (-2,0) rectangle (-1,1)
        node[pos=.5]{$\cdots$} ;
        \draw[very thick] (-1,0) rectangle (0,1) node[pos=.5]{$\cdots$} ;
        \draw[very thick] (0,0) rectangle (1,1) node[pos=.5]{$u_0$} ;
        \draw[very thick] (1,0) rectangle (2,1) node[pos=.5]{$\cdots$} ;
        \draw [very thick, fill=zero] (2,0) rectangle (3,1)
        node[pos=.5]{$\hat{z}_1$}; \end{tikzpicture}
    \end{align} 
    The positive $q_i$ direction goes from west to east.
    The residues go from east to west over the residue strip
    \eqref{eq:residuestrip}.  Here we have also indicated the poles (red) and
    zeros (blue) such a strip adds to the sets $U$ and $W$. For $i=1$ the strip
    is drawn vertically, with positive $q_i$ direction from north to south.

    All formulae remain valid for $J=1$, i.e. when we directly pick up a residue
    at $\hat{z}_1 = u_0$ for some $u_0 \in U$. We call such a strip of length one
    a \emph{box}. It is drawn as
    \begin{align}
      \label{eq:residuebox}
      \begin{tikzpicture}[scale=0.5]
        \draw [fill=zero, draw=none] (-1,1) rectangle (0,2);
        \draw [fill=zero, draw=none] (1,-1) rectangle (2,0);
        \draw [fill=pole, draw=none] (1,0) rectangle (2,1);
        \draw [fill=pole, draw=none] (-1,0) rectangle (0,1);
        \draw [fill=pole, draw=none] (0,1) rectangle (1,2);
        \draw [fill=pole, draw=none] (0,-1) rectangle (1,0);
        \draw [very thick, fill=doublezero] (0,0) rectangle (1,1) 
        node[pos=.5]{$u_0$};
      \end{tikzpicture}
    \end{align} 
    Here we have used the color green to mark a zero of order two.
    We want to compare the result
    of the evaluation process \eqref{eq:evalsteps} 
    of the above residues for general $l_0 \in \{0,\dots, J-1\}$
    to the result from the procedure where we evaluate the same final
    residues \eqref{eq:residuestrip} by repeated use of the case $J=1$. Now, the
    order of the residues will be different. It is clear that the sets $U'$ and
    $W'$ we end up with agree for both procedures. Moreover, the new symmetric
    function $f'$ also agrees, since the original function $f$ is symmetric.
    However the prefactor \eqref{eq:residueprefactor} only agrees up to a sign, as
    we will now see.

    We first treat the case, where we go in positive $q_i$ direction, starting at
    a pole at some $u_1$: Suppose we pick up residues using the evaluation process 
    \begin{align}
      \label{eq:altres}
      \hat{z}_1 = u_1, \qquad \hat{z}_2 = q_i u_1, \qquad \dots \qquad \hat{z}_K = q_i^{K-1}
      u_1.
    \end{align}
    For $k = 0, \dots, K$, let $U^+_k$ and $W^+_k$ denote the sets $U$ and $W$ after evaluating the
    residues at $\hat{z}_1,\dots,\hat{z}_{k}$.
    In the first step we pick up the residue at $\hat{z}_1 = u_1$, the sets $U =
    U^+_0$ and $W = W^+_0$
    get changed to
    \begin{align*}
      U^+_1 &= \Big( U \setminus \{u_1\} \Big) \cup \{ q_i^{-1} u_1, q_{s} u_1, q_s^{-1} u_1, q_i u_1 \}
\\
      W^+_1 &= W \cup \{ u_1, q_i^{-1} q_{s}^{-1} u_1, q_i q_s u_1 \},
    \end{align*}
    the symmetric function is $f(u_1,z_2,\dots,z_n)$ 
    and we get the prefactor
    \begin{align*}
      \frac
      {
        \prod_{w \in W^+_0}
        (u_1 -w)
      }
      {
        \prod_{u \in U^+_0 \setminus \{u_1\}}
        (u_1 - u)
      }.
    \end{align*}
    By induction, we see that after evaluating all residues \eqref{eq:altres}, we
    have changed the original sets $U,W$ to
    \begin{align*}
      U^+_{K} & = \Big( U \setminus \{u_1\} \Big) \cup \{ q_i^{-1} u_1, q_{s}
      u_1, q_s^{-1} q_i^{K-1} u_1, q_i q_i^{K-1} u_1 \} \\
      W^+_{K} & =  W \cup \{q_i^{-1} q_{s}^{-1} u_1, q_i^{K-1} u_1, q_i q_s
      q_i^{K-1} u_1 \},
    \end{align*}
    the function $f(z_1,\dots,z_n)$ gets replaced by
    \begin{align*}
      f_K^+(z_{K+1}, \dots, z_n) = f(u_1, \dots, q_i^{K-1} u_1, z_{K+1},\dots,
      z_n) \end{align*}
    and we have accumulated the prefactor
    \begin{align}
      \label{eq:positiveaccumulation}
      \prod_{k=0}^{K-1}
      &\frac
      {
        \prod_{w \in W^+_k}
        (q_i^k u_1 -w)
      }
      {
        \prod_{u \in U^+_k \setminus \{ q_i^{k} u_1\}}
        (q_i^k u_1 - u)
      }
       =
      \prod_{k=0}^{K-1}
      \frac
      {
        \prod_{w \in W}
        (q_i^k u_1 -w)
      }
      {
        \prod_{u \in U \setminus \{ u_1\}}
        (q_i^k u_1 - u)
      }
      \prod_{k=1}^{K-1}
      \frac
      {
        1
      }
      {
        (q_i^{k}  - 1  )
      }
      \\\nonumber
      &
      q_i^{\frac12 (K+2)(K-1)}
      \frac
      {
        (q_i  - 1  )^{K}
        (1 - q_s  )^{K-1}
      }
      {
        (q_i^K -1)
        (q_i  - q_s^{-1} )^{K-1}
      }
      \prod_{k=1}^{K-1}
      \frac
      {
        (q_i^k  - q_i^{-1} q_{s}^{-1}  )
      }
      {
        (q_i^k  - q_{s}  )
      }.
    \end{align}

    Next, we treat the case where we go in negative $q_i$ direction starting at
    some pole $u_2$. Suppose we pick up residues using the evaluation steps
    \begin{align}
      \label{eq:altres2}
      \hat{z}_1 = u_2, \qquad \hat{z}_2 = q_i^{-1} u_2, \qquad \dots \qquad
      \hat{z}_M = q_i^{-M+1}
      u_2.
    \end{align}
    For $m = 0, \dots, M$, let $U^-_m$ and $W^-_m$ denote the sets $U$ and $W$ after evaluating the residues
    at $\hat{z}_1,\dots,\hat{z}_{m}$.
    In the first step we pick up the residue at $\hat{z}_1 = u_2$, the sets $U =
    U^-_0$ and $W = W^-_0$
    get changed to
    \begin{align*}
      U^-_1 &= \Big( U \setminus \{u_2\} \Big) \cup \{ q_i^{-1} u_2, q_{s} u_2, q_s^{-1} u_2, q_i u_2 \} \\
      W^-_1 &= W \cup \{ u_2, q_i^{-1} q_{s}^{-1} u_2, q_i q_s u_2 \},
    \end{align*}
    the symmetric function is $f(u_2,z_2,\dots,z_n)$ 
    and we get the prefactor
    \begin{align*}
      \frac
      {
        \prod_{w \in W^-_0}
        (u_2 -w)
      }
      {
        \prod_{u \in U^-_0 \setminus \{u_2\}}
        (u_2 - u)
      }.
    \end{align*}
    Again after evaluating all residues \eqref{eq:altres2}, we have changed the
    original sets $U,W$ to
    \begin{align*}
      U^-_M & = \Big( U \setminus \{u_2\} \Big) \cup \{ q_i u_2, q_{s}^{-1} u_2,
      q_s q_i^{-M+1} u_2, q_i^{-1} q_i^{-M+1} u_2 \} \\
      W^-_M & =  W \cup \{q_i q_{s} u_2, q_i^{-M+1} u_2, q_i^{-1} q_s^{-1} q_i^{-M+1}
      u_2 \},
    \end{align*}
    the function $f(z_1,\dots,z_n)$ is replaced by
    \begin{align*}
      f^-_M(z_{M+1},\dots,z_n) = f(u_2, \dots, q_i^{-M+1} u_2, z_{M+1},\dots, z_n)
    \end{align*}
    and get the prefactor
    \begin{align}
      \label{eq:minusprefactor}
      \prod_{m=0}^{M-1}
      &\frac
      {
        \prod_{w \in W^-_m}
        (q_i^{-m} u_2 -w)
      }
      {
        \prod_{u \in U^-_m \setminus \{ q_i^{-m} u_2\}}
        (q_i^{-m} u_2 - u)
      }
       =
      \prod_{m=0}^{M-1}
      \frac
      {
        \prod_{w \in W}
        (q_i^{-m} u_2 -w)
      }
      {
        \prod_{u \in U \setminus \{ u_2\}}
        (q_i^{-m} u_2 - u)
      }
      \prod_{m=1}^{M-1}
      \frac
      {
        1
      }
      {
        (q_i^{-m} - 1 )
      }
      \\\nonumber
      &
      q_i^{-\frac12 (M-1)M + M-1}
      \frac
      {
        (1 -q_i)^{M}
        (1 - q_s)^{M-1}
      }
      {
        (q_i - q_s^{-1} )^{M-1}
        (1 - q_i^M )
      }
      \prod_{m=1}^{M-1}
      \frac
      {
        (q_i^m - q_i^{-1} q_s^{-1} )
      }
      {
        (q_i^m - q_s )
      }.
    \end{align}

    Recall that were looking at the case where we pick the residues
    \eqref{eq:residuestrip} as a strip depicted in figure
    \eqref{eq:residuediagram} using the evaluation steps \eqref{eq:evalsteps} and
    we want to compare it to the procedure where we apply the case $J=1$
    repeatedly. Set $M = l_0+1$ and $K = J - l_0 - 1$.
    Picking residues according to the stepwise procedures with $u_2 = u_0$ and $u_1
    = q_i u_0$ we get the prefactor
    \begin{align}
      \label{eq:basevalue}
      &
      \prod_{k=0}^{K-1}
      \frac
      {
        \prod_{w \in W^-_M}
        (q_i^k u_1 -w)
      }
      {
        \prod_{u \in U^-_M \setminus \{ u_1\}}
        (q_i^k u_1 - u)
      }
      \prod_{m=0}^{M-1}
      \frac
      {
        \prod_{w \in W}
        (q_i^{-m} u_2 -w)
      }
      {
        \prod_{u \in U \setminus \{ u_2\}}
        (q_i^{-m} u_2 - u)
      }
      \\\nonumber
      &
      q_i^{\frac12 (K+2)(K-1)}
      q_i^{-\frac12 (M-1)M + M-1}
      \frac
      {
        (q_i  - 1  )^{K}
        (1 - q_s  )^{K-1}
      }
      {
        (q_i^K -1)
        (q_i  - q_s^{-1} )^{K-1}
      }
      \frac
      {
        (1 -q_i)^{M}
        (1 - q_s)^{M-1}
      }
      {
        (q_i - q_s^{-1} )^{M-1}
        (1 - q_i^M )
      }
      \\\nonumber
      &
      \prod_{k=1}^{K-1}
      \frac
      {
        (q_i^k  - q_i^{-1} q_{s}^{-1}  )
      }
      {
        (q_i^k  - q_{s}  )
      }
      \prod_{m=1}^{M-1}
      \frac
      {
        (q_i^m - q_i^{-1} q_s^{-1} )
      }
      {
        (q_i^m - q_s )
      }
      \prod_{k=1}^{K-1}
      \frac
      {
        1
      }
      {
        (q_i^{k}  - 1  )
      }
      \prod_{m=1}^{M-1}
      \frac
      {
        1
      }
      {
        (q_i^{-m} - 1 )
      }
      .
    \end{align}
    We call this factor the \emph{base value} for our strip of residues.  This factor
    equals the prefactor in equation \eqref{eq:residueprefactor} up to a sign of 
    \begin{align*}
      (-1)^{M+1}.
    \end{align*}
    (When we first evaluate the residues in positive $q_i$ direction and then in
    negative $q_i$ direction we get the same result.)

    We point out two special cases of our observation:
    \begin{enumerate}
      \item In the case $l_0 = 0$, where we place the whole strip of residues at
        $u_0$ and eastwards of $u_0$, it does not matter whether we pick the
        residues one by one or as a strip.
      \item In the case $l_0 = J-1$, where we place the whole strip of residues
        at $u_0$ and westwards of $u_0$, we get the base value up to a
        sign equal to $(-1)^{L+1}$ where $L$ is the length of the strip.
    \end{enumerate}
    \begin{claim}
      \label{claim:l0claim}
      Only strips with $l_0 = 0$ contribute to the integral.
    \end{claim}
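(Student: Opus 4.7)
The plan is to rule out $l_0>0$ strips by a combination of contour considerations and a sign-cancellation argument exploiting the discrepancy $(-1)^{M+1}$ identified between the strip prefactor \eqref{eq:residueprefactor} and the base value \eqref{eq:basevalue}.

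First I would observe that if the anchor $u_0$ is one of the initial parameters $u_\alpha$, then by condition \eqref{eq:rhocond} the would-be residue location $q_i^{-l_0} u_\alpha$ for $l_0\geq 1$ satisfies $|q_i^{-l_0}u_\alpha|\geq |q_i|^{-1}|u_\alpha|>\rho$ and hence lies outside the contour $C_\rho$. So a strip with $l_0>0$ can only be anchored at a pole $u_0\in U$ that was inserted into $U$ by a prior residue evaluation, and by the update rule is therefore of one of the four forms $q_i^{\pm 1}\hat z$ or $q_s^{\pm 1}\hat z$ for some previously evaluated residue $\hat z$.

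Next I would analyze each form. The case $u_0=q_i\hat z$ gives $q_i^{-1}u_0=\hat z$, but $\hat z$ was removed from $U$ when its own residue was taken, so it no longer supports a simple pole in the current integrand; moreover the Jacobian factor $(z_j-z_k)^2$ contributes a compensating zero at $\hat z$. The case $u_0=q_i^{-1}\hat z$ is controlled by propagating condition \eqref{eq:rhocond} inductively along the chain of ancestor residues: to be a pole inside $C_\rho$ the point $\hat z$ would have to lie in a forbidden region. The perpendicular cases $u_0=q_s^{\pm 1}\hat z$ constitute the combinatorial heart of the claim, since here the westward extension of the strip can indeed land on a genuine simple pole inside $C_\rho$.

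For these remaining cases I would construct a sign-reversing pairing on residue configurations. The key point is that, by the $(-1)^{M+1}$ discrepancy, the same final set of residue positions \eqref{eq:residuestrip} can be produced by two evaluation routes: the strip procedure with $l_0>0$, and a stepwise procedure that breaks the strip at $u_0$ into an eastward substrip and a westward substrip (both effectively having $l_0=0$ in their own right), with the westward part re-anchored appropriately. Summing a given residue set over its compatible orderings, the alternating signs cancel whenever the westward extension is nontrivial, while the $l_0=0$ contributions are preserved (by observation (1) after \eqref{eq:basevalue}). The main obstacle will be to verify that this pairing is globally well-defined: one must track inductively how the sets $U$ and $W$ evolve so that both configurations in each pair appear with the same multiplicity in the iterated residue expansion, and that no double counting or spurious matching occurs among the genuine $l_0=0$ terms that build up the Young diagram configurations of Theorem \ref{thm:residuecalculation}.
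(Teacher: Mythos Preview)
Your plan has the right spirit but misses the combinatorial core of the argument, and the pairing you describe does not actually cancel.

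The sign discrepancy $(-1)^{M+1}$ compares the single strip evaluation to the base value, but these are only two out of many evaluation routes that produce the same residue set \eqref{eq:residuestrip}. Every ordered partition of the strip into substrips gives a distinct route, and each route comes with a nontrivial multiplicity: when picking a residue at $q_i z_b$ one may choose any of the remaining integration variables for $z_b$, and one may also interleave the placement of substrips to the east and to the west. So the relevant cancellation is not between two terms but is an identity of the form $\sum_{\vec J}s(\vec J)w(\vec J)=0$, where $\vec J$ runs over all ordered partitions of $J$, $s(\vec J)$ is the sign determined by the westward substrips, and $w(\vec J)$ is a combinatorial weight counting the orderings and $z_b$-choices. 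Your proposed pairing (full strip versus break-at-$u_0$) neither exhausts the terms nor matches their weights.

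What the paper does instead is define the sign-reversing involution at the \emph{westernmost} box: given a partition $\vec J$, either cut off the westernmost box from the westernmost substrip (if that substrip has length $>1$) or merge the westernmost box back on (if it has length $1$). This changes $s$ by exactly one factor of $-1$ whenever $l_0>0$, but showing that it preserves the weight $w$ is the real work. The equality $w(\vec J)=w(\vec J')$ requires summing over all $(b,c)$-shuffles of the east/west substrips and all possible delays in placing the detached box, and reduces to a telescoping identity of the type $\sum_{\lambda}\prod_j\frac{\sum_{k\geq j}x_k}{\sum_{k>j}x_k+1}=\sum_k x_k$. This identity, and the bookkeeping of shuffles needed to set it up, is precisely what your proposal calls ``the main obstacle'' without indicating how to resolve it; in the paper it is the substance of the proof, not a verification step. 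Your preliminary case analysis on the type of anchor $u_0$ is unnecessary once this combinatorial cancellation is in place.
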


    Suppose, we add a strip of residues of length $J$ with $l_0 > 0$, i.e. part
    of the residue diagram grows in negative $q_i$ direction. We claim that the
    total contribution of all possible processes to choose residues leading to
    the same residue strip is zero. We depict the strip as
    \begin{align}
      \label{eq:bulletstars}
      \begin{ytableau}
        \star &\star & \star & \star & \star & \dagger & \dagger & \dagger \\
      \end{ytableau}
      \\\nonumber
      \uparrow \hspace{5.1em}
      \\\nonumber
      u_0 \hspace{4.7em}
    \end{align}
    for $i=2$ and vertically for $i=1$. The number of stars equals $M = l_0 +1$
    and the number of daggers equals $K = J-M$. The residue at $u_0$ is located
    at the easternmost star. We draw the corresponding base value, corresponding
    to a stepwise procedure when picking up residues,
    as \begin{align*}
      \begin{ytableau}
        \\
      \end{ytableau}
      \;
      \begin{ytableau}
        \\
      \end{ytableau}
      \;
      \begin{ytableau}
        \\
      \end{ytableau}
      \;
      \begin{ytableau}
        \\
      \end{ytableau}
      \;
      \begin{ytableau}
        \\
      \end{ytableau}
      \;
      \begin{ytableau}
        \\
      \end{ytableau}
      \;
      \begin{ytableau}
        \\
      \end{ytableau}
      \;
      \begin{ytableau}
        \\
      \end{ytableau}
      \\
      \uparrow \hspace{6.1em}
      \\
      u_0 \hspace{5.7em}
    \end{align*}
    Of course, this picture does not fully specify the procedure: We can
    alternate between picking up residues to the east and to the west. We will
    deal with this multiplicity later.

    Another way to evaluate the residues is for example by first evaluating some
    of the residues as a strip with $l_0=J-2$ and then two as boxes and one final
    strip of length two to the east.
    We depict this as
    \begin{align}
      \label{eq:examplepartition}
      \begin{ytableau}
        \\
      \end{ytableau}
      \;
      \begin{ytableau}
        \\
      \end{ytableau}
      \;
      \begin{ytableau}
        \; & \;  & \; & \; \\
      \end{ytableau}
      \;
      \begin{ytableau}
        \; & \; \\
      \end{ytableau}
      \\\nonumber
      \uparrow \hspace{5.6em}
      \\\nonumber
      u_0 \hspace{5.em}
    \end{align}
    All possible ways to end up with the residues \eqref{eq:residuestrip} 
    define a partition of the strip \eqref{eq:bulletstars}
    into substrips. All those
    partitions yield the same final sets $U',W'$ and the same final function
    $f'$. The contributions of the residue evaluations differ by signs depending
    on the partition into substrips. To compute
    these signs we have to cut the partition eastwards of $u_0$, for example we
    substitute the partition \eqref{eq:examplepartition} by
    \begin{align*}
      \begin{ytableau}
        \\
      \end{ytableau}
      \;
      \begin{ytableau}
        \\
      \end{ytableau}
      \;
      \begin{ytableau}
        \; & \; & \; \\
      \end{ytableau}
      \;
      \begin{ytableau}
        \\
      \end{ytableau}
      \;
      \begin{ytableau}
        \; & \; \\
      \end{ytableau}
      \\
      \uparrow \hspace{5.9em}
      \\
      u_0 \hspace{5.5em}
    \end{align*}
    According to our computation above, we then have to count the number of strips
    of even length westwards of $u_0$. For each of those, we get a factor of
    $(-1)$. Then we have to sum over all partitions of the original string,
    weighted by the number of ways in which one can pick residues to obtain the
    partition. As we have already noticed, there are multiple ways since one can for example alternate between
    placing substrips at the west and east end. The total contribution of all
    procedures leading to the final residues \eqref{eq:residuestrip} is given by
    the base value times an expression
    \begin{align*}
      \sum_{\xi \in \Gamma} w(\xi) s(\xi),
    \end{align*}
    which we want to formalize. Here the set $\Gamma$ represents all the ways to
    cut the strip \eqref{eq:bulletstars} into substrips. The integer $w(\xi)$ is
    the
    weight corresponding to all possible procedures to pick residues leading to
    the choice of substrips $\xi$. 
    Finally the sign $s(\xi)$ is given by the number of substrips
    of even length to the west of $u_0$ after cutting directly to the east of
    $u_0$. We set 
    \begin{align*}
      \Gamma_J = \{ (J_1, \dots, J_N) : N \in \N, J_1, \dots, J_N \in \N : J_0 +
      \cdots + J_N = J \}
    \end{align*}
    to be the set of ordered partitions of the integer $J$. Cutting the strip
    \eqref{eq:bulletstars} into substrips according to $\vec{J} \in \Gamma_J$
    means cutting it into substrips of length $J_0, \dots, J_n$ from west to east.
    For example the cut strip \eqref{eq:examplepartition} corresponds to the
    element $\vec{J} = (1,1,4,2) \in \Gamma_8$. Let
    \begin{align*}
      \text{cut} : \Gamma_J \to \Gamma_{M}
    \end{align*}
    be the map realized by cutting the collection of substrips $\vec{J}$ directly
    to the east of $u_0$. Now we can formalize the sign $s(\vec{J})$ as 
    \begin{align*}
      s(\vec{J}) = \prod_a (-1)^{1+ \big( \text{cut}(\vec{J}) \big)_a}.
    \end{align*}
    In order to describe the weight $w(\vec{J})$, we have to take into account two
    effects: Firstly, for each substrip of length $J_a > 1$, we have to count the
    number of choices when selecting a residue $\hat{z}_a = q_i z_b$ that
    correspond to the
    freedom to choose any of the remaining variables $z_b$ for $b \in \{1, \dots,
    J\}$. Secondly, we can
    alternate between placing substrips to the west of the already chosen residues
    or to the east of them. We first have to account for the second effect. To model the second effect, we reorder the vectors
    $\vec{J}$. We define the map
    \begin{align*}
      \text{reo} : \Gamma_J \to \Gamma_J
    \end{align*}
    by demanding that, for 
    \begin{align*}
      \vec{A} = (A_1, A_2, \dots, A_N) =  
      \text{reo}(\vec{J}) = (B_0, B_1, \dots, B_b, C_1, \dots, C_c),
    \end{align*}
    the number $B_0$ 
    corresponds to the substrip containing $u_0$, the numbers $B_1$ up to $B_b$
    correspond to the substrips to the west, with increasing index in
    westward direction, and the numbers $C_1$ up to $C_c$ correspond to
    substrips to the east, with increasing index in eastward direction. The
    indices $b = b(\vec{J})$ and $c=c(\vec{J})$ depend on the collection of
    substrips $\vec{J} \in \Gamma_J$. For example, the cut
    strip \eqref{eq:examplepartition} yields $\text{reo}(\vec{J}) = (4,1,1,2)$,
    $b=2$ and $c=1$.
    We interpret the order of the components of $\text{reo}(\vec{J})$ as the
    order in which we place the residues. All other orderings are obtained by
    permuting the elements $(\vec{B}, \vec{C}) = (B_1, \dots, B_b, C_1, \dots,
    C_c)$ of this tuple, such that the order of the elements in $\vec{B}$ and
    in $\vec{C}$ remain the same. This means we can alternate between placing
    residues to the west and to the east. Let
    \begin{align*}
      S_{b,c} = \{ \sigma \in S_{b+c}: \sigma(1) < \cdots < \sigma(b), \sigma(b+1)
      < \cdots < \sigma(b+c) \}
    \end{align*}
    denote the set of $(b,c)$ shuffles. We define 
    \begin{align*}
      \sigma \vec{A} = \sigma 
      (B_0, B_1, \dots, B_b, C_1, \dots, C_c) = (B_0, \sigma( B_1,  \dots, B_b,
      C_1, \dots, C_c ) ),
    \end{align*}
    where $\sigma$ acts on $(B_1,\dots, B_b, C_1, \dots, C_c)$ in the usual sense. We see that all
    possible ways to place the substrips $\text{reo}(\vec{J})$ corresponds to
    the orbit of this tuple under $S_{r,s}$. Hence we have formalized the second
    effect. Now we can also formalize the first effect described above: Suppose
    we are placing the substrip $\sigma(\vec{A})_k$. Then the remaining substrips
    correspond to $\sigma(\vec{A})_{k+1} + \cdots + \sigma(\vec{A})_N$ variables.
    Each time we choose a pole at $\hat{z}_a = q_i z_b$ in $\sigma(\vec{A})_k$, we
    can choose $z_b$ to be one of those variables, or one of the variables in
    $\sigma(\vec{A})_k$, we have not chosen so far. Hence, we get
    \begin{align*}
      \prod_{h=1}^{\sigma(\vec{A})_j -1} 
      \Big( \sum_{k=j+1}^N \sigma(\vec{A})_k + h \Big)
    \end{align*}
    possibilities in total when placing $\sigma(\vec{A})_k$. In the case $\sigma(\vec{A})_k
    = 1$, we get an empty product which we interpret as one. This agrees with the
    observation that $\sigma(\vec{A})_k=1$ implies we pick a residue from the set
    $U$, which leaves no further choices in terms of the 
    remaining variables. Now we can combine the two effects to obtain the expression 
    \begin{align*}
      w(\vec{J}) = \sum_{\sigma \in S_{b,c}} \prod_{j=1}^N \bigg(
      \prod_{h=1}^{\sigma(\vec{A})_j -1} 
      \Big( \sum_{k=j+1}^N \sigma(\vec{A})_k + h \Big) \bigg).
    \end{align*}
    where $b=b(\vec{J})$, $c=c(\vec{J})$ and $\vec{A} = \text{reo}(\vec{J})$. 
    In order to prove claim \ref{claim:l0claim}, we show
    \begin{align}
      \label{eq:zerosum}
      &\sum_{\vec{J} \in \Gamma_J} s(\vec{J}) w(\vec{J})
      = 
      \sum_{\vec{J} \in \Gamma_J} 
      \Bigg( \prod_a (-1)^{1+ \big( \text{cut}(\vec{J}) \big)_a} \Bigg)
      \\
      \nonumber
      & \times
      \Bigg( \sum_{\sigma \in S_{b(\vec{J}),c(\vec{J})}} \prod_{j=1}^N \bigg(
      \prod_{h=1}^{\sigma(\text{reo}(\vec{J}))_j -1} \Big( \sum_{k=j+1}^N
      \sigma(\text{reo}(\vec{J}))_k + h \Big)
      \bigg) \Bigg) = 0
    \end{align}
    by exhibiting parts of this sum that cancel each other. Fix one $\vec{J} \in
    \Gamma_J$ with $B_b > 1$ for  
    \begin{align*}
      \vec{A} = \text{reo}(\vec{J}) = (B_0, B_1, \dots, B_{b-1}, B_b, C_1,
      \dots, C_c).  
    \end{align*}
    We compare this choice of substrips to 
    \begin{align*}
      \vec{A}' := & (B_0, B_1, \dots, B_{b-1}, B_b-1, 1, C_1, \dots, C_c) \\
      =: & (B'_0, B'_1, \dots, B'_{b+1}, C'_1, \dots, C'_c).
    \end{align*}
    Graphically, $\vec{A}'$ is obtained from $\vec{A}$ by cutting the strip at the
    westernmost box. Let $\vec{J}' \in \Gamma_J$ with $\vec{A}' =
    \text{reo}( \vec{J}')$. We claim that 
    \begin{align*}
      w(\vec{J}) s(\vec{J}) = - w(\vec{J}') s(\vec{J}').
    \end{align*}
    This suffices: There is a one to one correspondence between collections of
    substrips whose westernmost substrip has length one and those whose
    westernmost substrip has length greater than one. A bijection is given by
    $\vec{A} \mapsto \vec{A}'$. 
    It is clear from the definition of $s(\vec{J})$ that $\vec{J}$ and $\vec{J}'$
    have opposite sign, provided $l_0 > 0$. Hence it suffices to show that their weight is equal. We thus
    want to prove the equality in
    \begin{align*}
      \sum_{\sigma \in S_{b,c}}
      \prod_{j=1}^N &\bigg(
      \prod_{h=1}^{\sigma(\vec{A})_j -1} \Big( \sum_{k=j+1}^N
      \sigma(\vec{A})_k + h \Big)
      \bigg) \\ &=
      \sum_{\sigma \in S_{b+1,c}} \prod_{j=1}^{N+1} \bigg(
      \prod_{h=1}^{\sigma(\vec{A}')_j -1} \Big( \sum_{k=j+1}^{N+1}
      \sigma(\vec{A}')_k + h \Big)
      \bigg).
    \end{align*}
    We decompose
    \begin{align*}
      S_{b+1,c} = \bigcup_{\sigma \in S_{b,c}} \{ \hat{\sigma}_\lambda : \lambda =
      1, \dots, N+1-\sigma(r) \},
    \end{align*}
    where, for $j = 1, \dots, N+1$, 
    \begin{align*}
      \hat{\sigma}_\lambda(j) = 
      \begin{cases}
        \sigma(j), & j = 1, \dots, b \\
        \sigma(b) + \lambda, & j = b+1 \\
        \sigma(j-1), & j = b+2, \dots, \sigma(b) + \lambda\\
        \sigma(j-1) + 1, & j > \sigma(b)+ \lambda + 1
      .
      \end{cases}
    \end{align*}
    The parameter $\lambda$ tells us how much we delay the placement of the
    residue box we cut away from the westernmost residue strip. Fix $\sigma \in
    S_{b,c}$. We will show
    \begin{align}
      \label{eq:placement}
      \prod_{j=1}^N \bigg(&
      \prod_{h=1}^{\sigma(\vec{A})_j -1} \Big( \sum_{k=j+1}^N
      \sigma(\vec{A})_k + h \Big)
      \bigg) \\ \nonumber &=
      \sum_{\lambda = 1}^{N+1-\sigma(b)} 
      \prod_{j=1}^{N+1} 
      \bigg( \prod_{h=1}^{\hat{\sigma}_\lambda(\vec{A}')_j -1} \Big( \sum_{k=j+1}^{N+1}
      \hat{\sigma}_{\lambda}(\vec{A}')_k + h \Big)
      \bigg).
    \end{align}
    We set $B := \sigma(b)$. This is step in which we place $B'_b$.
    The tuples $\hat{\sigma}_{\lambda}(\vec{A}')$ and $\sigma(\vec{A})$ are
    related as follows:
    \begin{align*}
      {}&\Big({}
      {}\hat{\sigma}_{\lambda}(\vec{A}')_1, {}
      {}\dots, {}
      {}\hat{\sigma}_{\lambda}(\vec{A}')_{B-1}, {}
      {}\hat{\sigma}_{\lambda}(\vec{A}')_B,{}
      {}\hat{\sigma}_{\lambda}(\vec{A}')_{B+1},{}
      {}\dots,{}
      {}\hat{\sigma}_{\lambda}(\vec{A}')_{N+1} {}
      {}\Big)  \\
      =& 
      {}\Big({}
      {}\sigma(\vec{A})_1, {}
      {}\dots,{}
      {}\sigma(\vec{A})_{B-1},{}
      {}\sigma(\vec{A})_{B}-1,{}
      \big(
      {}\sigma(\vec{A})_{B+1}, {}
      {}\dots, 
      1,
      \dots, {}
      {}\sigma(\vec{A})_N{}
      \big)
      {}\Big){},
    \end{align*}
    where $1$ sits at index $\lambda$ inside the inner tuple.
    Using this description, we rewrite the right hand side of equation
    \eqref{eq:placement} as
    \begin{align*}
      &\sum_{\lambda=1}^{N+1-B}
      \prod_{j=1}^{B-1} 
      \bigg( 
      \prod_{h=1}^{\sigma(\vec{A})_j -1} 
      \Big( \sum_{k=j+1}^{N} \sigma(\vec{A})_k + h \Big)
      \bigg)
      \bigg( 
      \prod_{h=2}^{\sigma(\vec{A})_B -1} \Big( \sum_{k=B+1}^{N}
      \sigma(\vec{A})_k + h \Big)
      \bigg)
      \\
      &
      \prod_{j=B+1}^{B+\lambda-1} 
      \bigg( 
      \prod_{h=2}^{\sigma(\vec{A})_j } 
      \Big( \sum_{k=j+1}^{N} \sigma(\vec{A})_k + h \Big)
      \bigg)
      \prod_{j=B+\lambda}^{N} 
      \bigg( 
      \prod_{h=1}^{\sigma(\vec{A})_j -1} \Big( \sum_{k=j+1}^{N}
      \sigma(\vec{A})_k + h \Big)
      \bigg). 
    \end{align*}
    This equals the left hand side of equation \eqref{eq:placement} up to a factor
    of 
    \begin{align*}
      \Big( \sum_{k=B+1}^{N} \sigma(\vec{A})_k + 1 \Big)^{-1}
      \sum_{\lambda=1}^{N+1-B}
      \prod_{j=B+1}^{B+\lambda-1} 
      \Bigg(
      &\Big( \sum_{k=j+1}^{N} \sigma(\vec{A})_k + 1 \Big)^{-1}\\
      &\Big( \sum_{k=j+1}^{N} \sigma(\vec{A})_k + \sigma(\vec{A})_j \Big)
      \Bigg).
    \end{align*}
    This factor equals one, since for arbitrary tuples $(x_1, \dots, x_L)$, we
    have 
    \begin{align*}
      &\sum_{J=1}^L
      \prod_{j=1}^J \frac{\sum_{k=j}^L x_k}{\sum_{k=j+1}^L x_k + 1}  \\
      & = \frac{ \sum_{k=1}^L x_k }{ \sum_{k=2}^L x_k + 1 }
      \Bigg( 1 + \frac{ \sum_{k=2}^L x_k }{ \sum_{k=3}^L x_k + 1 }
      \\ &\times\bigg( \cdots  \Big(
      1 + \frac{ x_{k-2} + x_{k-1} + x_k}{ x_{k-1} + x_k + 1}
      \big( 1 + \frac{ x_{k-1} + x_k}{ x_k + 1} ( 1 + \frac{x_k}{1})
      \big)\Big) \cdots \bigg) \Bigg) \\
      & =\sum_{k=1}^L x_k.
    \end{align*}

    Hence we have established equation \eqref{eq:zerosum} and thus proved claim
    \ref{claim:l0claim}.  We note two consequences of our argument up to now:
    \begin{claim}
      \label{claim:boxes}
      Since we can only consider strips with $l_0=0$, we can only consider
      the cases were we take single boxes as residues. In particular, the order
      of their evaluation does not matter.
    \end{claim}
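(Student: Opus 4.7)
The plan is to deduce Claim \ref{claim:boxes} as an immediate corollary of Claim \ref{claim:l0claim} together with Special Case 1 of the two observations noted directly after equation \eqref{eq:basevalue}. First I would invoke Claim \ref{claim:l0claim} to restrict attention to residue strips with $l_0 = 0$, that is, strips placed at $u_0$ and growing exclusively eastwards (or southwards, for $i=1$). Special Case 1 then asserts that for any $l_0 = 0$ strip of length $L$, the result of evaluating the $L$ residues as a single strip agrees, with no sign or multiplicity correction, with the result of evaluating them one at a time as $L$ successive single boxes at poles of the updated set $U$. Combining these two facts, every contributing residue configuration can be computed purely in terms of single-box residues, which is the first assertion of Claim \ref{claim:boxes}.

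The second assertion, that the order of evaluation does not matter, I would establish from two inputs: the symmetry of the integrand $\mathcal{I}(z_1,\dots,z_n;\vec{u})$ under permutations of $z_1,\dots,z_n$, and the fact, visible from the update formulas $U\mapsto U'$, $W\mapsto W'$ and the specialization of $f$ to $f'$ computed right after equation \eqref{eq:evalsteps}, that the result of evaluating a single box at $u_0 \in U$ depends only on $u_0$ and on the multiset of previously chosen boxes, not on the order in which they were chosen. Because any two evaluation orders produce the same final multiset of residue positions and the same factorized contribution, the answer is order-independent. The result is that $\ZZ_n(\vec{u};\vec{p})$ decomposes into a sum over unordered configurations \eqref{eq:unorderedresidues}, indexed by $r$-tuples $\vec{Y}$ of Young diagrams of total size $n$.

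The main technical point to watch, and the only place some care is needed, is reconciling the ``strip'' language of Claim \ref{claim:l0claim} with the iterated integral \eqref{eq:upper} that we actually started from. One has to verify that restricting attention to box residues does not silently discard any contribution and that the update rules for $U$ and $W$ are genuinely symmetric in the boxes chosen so far. Beyond this bookkeeping, nothing new is required, and Claim \ref{claim:boxes} then delivers the clean statement needed to complete the proof of Theorem \ref{thm:residuecalculation}: $\ZZ_n(\vec{u};\vec{p})$ equals a sum of simple iterated residues indexed by $r$-tuples of partitions of total size $n$, which one can then evaluate to match either \eqref{eq:residueevaluation} or \eqref{eq:secondversion}.
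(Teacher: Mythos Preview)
Your proposal is correct and matches the paper's approach: the paper presents Claim~\ref{claim:boxes} simply as a consequence of the preceding argument, with no additional proof beyond the phrase ``We note two consequences of our argument up to now.'' Your expansion---invoking Claim~\ref{claim:l0claim} to reduce to $l_0=0$, then Special Case~1 after equation~\eqref{eq:basevalue} to replace any $l_0=0$ strip by a sequence of single boxes, and finally the symmetry of the integrand together with the update rules for $U$, $W$, $f$ to obtain order-independence---spells out precisely what the paper leaves implicit.
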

    \begin{claim}
      \label{claim:se}
      We can therefore discard poles at $q_i^{-1} u_0$ with $u_0 \in U$
      since they are either out of the integration contours or realizable by a
      string with $J=2$ and $l_0 = 1$ and hence part of a zero sum described
      above. 
    \end{claim}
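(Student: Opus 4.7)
The plan is to dispose of residue contributions from poles of the form $q_i^{-1}u_0$ with $u_0\in U$ by splitting on whether such a pole lies inside or outside the integration contour $C_\rho$. I would first separate the two cases by modulus, then handle each case in turn, invoking Claim \ref{claim:l0claim} for the nontrivial one.

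First I would observe that if $|q_i^{-1}u_0|>\rho$, then $q_i^{-1}u_0$ lies outside $C_\rho$ and therefore does not contribute to the iterated residue sum at all, so there is nothing to prove in this subcase. Next I would treat the subcase $|q_i^{-1}u_0|\leq\rho$, where the pole actually sits inside the contour and a naive evaluation would pick it up. Here I would view the single box at $q_i^{-1}u_0$ as the westernmost element of a length-two strip whose eastern end is the box at $u_0$; since $u_0\in U$ by hypothesis, the residue at $u_0$ is itself a legal pole in the iteration, so this length-two strip is a well-defined object in the bookkeeping established just before Claim \ref{claim:l0claim}. The resulting strip has parameters $J=2$ and $l_0=1$, and Claim \ref{claim:l0claim} says precisely that the total contribution of any strip with $l_0>0$, summed over all processes that realize it, vanishes. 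This gives the required cancellation.

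The main obstacle is making the second case fully precise: one has to match every process that picks up the residue at $q_i^{-1}u_0$ (including those that treat it as a one-box strip on its own) with a process enumerated in Claim \ref{claim:l0claim} for the $(J=2,l_0=1)$ strip, and verify that the combinatorial weights and signs line up so that the vanishing statement there applies verbatim. This amounts to specializing Claim \ref{claim:l0claim} to the minimal nontrivial case $J=2$, in which the sign $(-1)^{M+1}=-1$ recorded between the base value and the stepwise evaluation in \eqref{eq:basevalue} produces exactly the cancellation between the eastward pick at $u_0$ and the westward pick at $q_i^{-1}u_0$. Once this matching is checked, Claim \ref{claim:se} is an immediate consequence of Claim \ref{claim:l0claim}.
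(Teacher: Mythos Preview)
Your proposal is correct and follows exactly the reasoning the paper intends: the paper does not give a separate proof of Claim~\ref{claim:se} but states it as an immediate consequence of Claim~\ref{claim:l0claim}, with the justification already contained in the claim's wording (``either out of the integration contours or realizable by a string with $J=2$ and $l_0=1$''). Your case split on $|q_i^{-1}u_0|$ versus $\rho$ and your identification of the inside-contour case with the minimal nontrivial instance $J=2$, $l_0=1$ of the zero-sum argument is precisely the intended unpacking; the sign check $(-1)^{M+1}=-1$ for $M=2$ that you mention is the concrete mechanism by which the two one-box processes cancel in this smallest case.
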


    Now we see inductively that the residues $\hat{z}_j$ contributing to the
    integral are of the form $u_\alpha q_1^{x-1} q_2^{y-1}$ where $(x,y) \in
    Y_\alpha$ for some Young diagrams $Y_\alpha$. Because of the condition
    \begin{align*}
      u_\alpha u_\beta^{-1} \notin \{ q_1^x q_2^y : x,y \in \Z \}, \qquad \alpha
      \neq \beta
    \end{align*}
    the poles do not interact and 
    we may suppose $r=1$. The sets $U$ and $W$ we start with are then $U=\{u_1\}$
    and $W = \es$. We evaluate a residue at the pole $u_1$ changing the zeros and
    poles of the remaining integrand. We depict this process as
    \begin{align*}
      \text{the residue }
      \begin{tikzpicture}[scale=0.5]
        \draw [fill=zero, draw=none] (-1,1) rectangle (0,2);
        \draw [fill=zero, draw=none] (1,-1) rectangle (2,0);
        \draw [fill=pole, draw=none] (1,0) rectangle (2,1);
        \draw [fill=pole, draw=none] (-1,0) rectangle (0,1);
        \draw [fill=pole, draw=none] (0,1) rectangle (1,2);
        \draw [fill=pole, draw=none] (0,-1) rectangle (1,0);
        \draw [very thick, fill=doublezero] (0,0) rectangle (1,1) 
        node[pos=.5]{$u_1$};
      \end{tikzpicture}
      \quad
      \text{ evaluated at pole }
      \quad
      \begin{tikzpicture}[scale=0.5]
        \draw [fill=pole, draw=none] (0,0) rectangle (1,1);
      \end{tikzpicture}
      \quad
      \text{ yields }
      \quad
      \begin{tikzpicture}[scale=0.5]
        \draw [fill=zero, draw=none] (-1,1) rectangle (0,2);
        \draw [fill=zero, draw=none] (1,-1) rectangle (2,0);
        \draw [fill=pole, draw=none] (1,0) rectangle (2,1);
        \draw [fill=pole, draw=none] (-1,0) rectangle (0,1);
        \draw [fill=pole, draw=none] (0,1) rectangle (1,2);
        \draw [fill=pole, draw=none] (0,-1) rectangle (1,0);
        \draw [very thick, fill=zero] (0,0) rectangle (1,1) node[pos=.5]{$u_1$};
      \end{tikzpicture}.
    \end{align*}

    For the induction step,
    assume we have evaluated residues such that the evaluated residues fill a
    Young diagram. Also assume that the poles and zeros encoded in the sets $U,W$
    are located at the following places
    in the diagram: at each corner outside the diagram which is open to the south east there is a pole.
    To the northwest of each of those poles is another pole. At all
    south-easternmost boxes in the diagram there is a zero and another zero directly south east of
    it. Finally there
    is a zero at the coordinate $(-1,-1)$.  An example would be for instance
    \begin{align}
      \label{eq:diagramstep}
      \begin{tikzpicture}[scale=0.5]
        \fill[pole] (1,-4) rectangle (2,-3);
        \fill[pole] (3,-2) rectangle (4,-1);
        \fill[zero] (-1,1) rectangle (0,2);
        \fill[pole] (7,-1) rectangle (8,0);
        \fill[zero] (7,-2) rectangle (8,-1);
        \fill[zero] (3,-4) rectangle (4,-3);
        \draw[fill=pole, draw=none] (-1,-6) rectangle (0,-5)
        node[pos=.5]{$\times$}; 
        \fill[pole] (0,-7) rectangle (1,-6);
        \fill[zero] (1,-7) rectangle (2,-6);
        \fill[zero] (10,-1) rectangle (11,0);
        \fill[pole] (10, 0) rectangle (11,1);
        \draw[fill=pole, draw=none] (9,1) rectangle (10,2) 
        node[pos=.5]{$\times$};
        \draw[very thick] (0,0) rectangle (1,1);
        \draw[very thick] (1,0) rectangle (2,1);
        \draw[very thick] (2,0) rectangle (3,1);
        \draw[very thick] (3,0) rectangle (4,1);
        \draw[very thick] (4,0) rectangle (5,1);
        \draw[very thick] (5,0) rectangle (6,1);
        \draw[very thick, fill=pole] (6,0) rectangle (7,1) 
        node[pos=.5]{$\times$};
        \draw[very thick] (7,0) rectangle (8,1);
        \draw[very thick] (8,0) rectangle (9,1);
        \draw[very thick, fill=zero] (9,0) rectangle (10,1);
        \draw[very thick] (0,-1) rectangle (1,0);
        \draw[very thick] (1,-1) rectangle (2,0);
        \draw[very thick, fill=pole]  (2,-1) rectangle (3,0) 
        node[pos=.5]{$\times$};
        \draw[very thick] (3,-1) rectangle (4,0);
        \draw[very thick] (4,-1) rectangle (5,0);
        \draw[very thick] (5,-1) rectangle (6,0);
        \draw[very thick, fill=zero] (6,-1) rectangle (7,0);
        \draw[very thick] (0,-2) rectangle (1,-1);
        \draw[very thick] (1,-2) rectangle (2,-1);
        \draw[very thick] (2,-2) rectangle (3,-1);
        \draw[very thick, fill=pole] (0,-3) rectangle (1,-2) 
        node[pos=.5]{$\times$};
        \draw[very thick] (1,-3) rectangle (2,-2);
        \draw[very thick, fill=zero] (2,-3) rectangle (3,-2);
        \draw[very thick] (0,-4) rectangle (1,-3);
        \draw[very thick] (0,-5) rectangle (1,-4);
        \draw[very thick, fill=zero] (0,-6) rectangle (1,-5);
      \end{tikzpicture}
    \end{align}
    As we have seen before in claim \ref{claim:se}, evaluating a residue at the
    poles marked with a cross does not contribute to the integral. It is clear
    that placing a
    residue box as in \eqref{eq:residuebox}
    at one of the remaining poles in diagram \eqref{eq:diagramstep} yields again a
    Young diagram with the same structure of the poles and zeros as in diagram
    \eqref{eq:diagramstep}. 

    Hence we have established that all residues are simple residues located at
    \eqref{eq:unorderedresidues}. It is clear that different procedures to
    pick residues at $\{ z_s^{\alpha} : s \in Y_\alpha, \alpha=1,\dots,r\}$
    correspond to permutation of the variables $\hat{z}_j, j = 1 ,\dots, n$.
    Since the integral is symmetric in $z_1, \dots, z_n$, all possible
    permutations occur. This cancels the factor $n!$ in front of the integral.
    
    We have proved equation
    \eqref{eq:residuesum}. Next, we evaluate the
    right hand side of equation \eqref{eq:residueexpression}.
    The integrand (\ref{eq:zdef}) is invariant under swapping $q_1$ and
    $q_2$. The residues $\{z^\alpha_{s} : s \in Y_\alpha, \alpha = 1,\dots, r\}$
    defined in equation \eqref{eq:residuedef} remain invariant if we swap $q_1$
    and $q_2$ and transpose the diagrams $Y_1, \dots, Y_r$. Hence 
    \begin{align*}
      \sum_{|\vec{Y}|=n} \ZZ_{Y_1,\dots,Y_r}(\vec{u};\vec{p}) = 
      \sum_{|\vec{Y}|=n} \ZZ_{Y^T_1, \dots
        Y^T_r}(\vec{u};\vec{p})|_{q_1
        \leftrightarrow q_2}.
    \end{align*}
    We have $a_{Y}(x,y) = l_{Y^T}(y,x)$ and hence equation
    \eqref{eq:secondversion} follows from equation \eqref{eq:residueevaluation}.

    The proof of equation \eqref{eq:residueevaluation} is adapted from \cite{yanagida2010}, where it
    was performed for the special case $r=2$. It
    suffices to consider $\vec{p}=\es$ since we can cancel the factors 
    in the equation we want to
    prove. Let 
    \begin{align*}
      \RR_{\vec{Y}}(\vec{u};\es) = 
      \prod_{\alpha,\beta = 1}^{r} 
      \Bigg( &
      \prod_{s \in Y_\alpha}
      \frac{1}
      {1 - u_\alpha u_\beta^{-1} q_1^{l_{Y_\alpha}(s) + 1} 
        q_2^{-a_{Y_\beta}(s)} }
      \\ &\prod_{t \in Y_\beta} 
      \frac{1}
      {1 - u_\alpha u_\beta^{-1} q_1^{-l_{Y_\beta}(t)} q_2^{a_{Y_\alpha}(t)+1} }
      \Bigg)
    \end{align*}
    be the right hand side of the first equation in Theorem
    \ref{thm:residuecalculation}.

    For any $r$-tuple $\vec{Y}$ of partitions with $Y_r \neq \es$, define the
    $r$-tuple $\vec{Y}'$ of partitions by removing the last box from the last
    partition in $\vec{Y}$, i.e., we set $Y'_\alpha = Y_\alpha$ for $\alpha = 1,
    \dots, r-1$ and $Y'_r = (Y_1, \dots, Y_{l-1}, Y_l - 1)$, where $l$ is the
    length of $Y_r$. In terms of Young diagrams, we go from $\vec{Y}$ to
    $\vec{Y}'$ by removing the box \begin{align*}(l,w) := (l(Y_r),
      Y_r(l)) \end{align*}from
    the last Young diagram $Y_r$ in $\vec{Y}$. We will prove
    \begin{align}
      \frac{\ZZ_{\vec{Y}}(\vec{u};\es)}{\ZZ_{\vec{Y}'}(\vec{u};\es)}
      =\frac{\RR_{\vec{Y}}(\vec{u};\es)}{\RR_{\vec{Y}'}(\vec{u};\es)}
      \label{eq:step}.
    \end{align}
    This already suffices: Using equation (\ref{eq:step}), we can reduce the
    statement of the theorem to the case $\vec{Y} = (Y_1,\dots,Y_{r-1},\es)$. Both
    $\ZZ_{\vec{Y}}(\vec{u};\es)$ and $\RR_{\vec{Y}}(\vec{u};\es)$ are invariant under
    simultaneous permutation of the components of $\vec{Y}=(Y_1,\dots,Y_r)$ and
    $u = (u_1,\dots,u_r)$. This follows directly from the respective definitions.
    Hence, we can reduce the statement of the theorem to the case $\vec{Y} = (Y_1,
    \dots, Y_{r-2}, \es, Y_r)$ and, again using equation (\ref{eq:step}), to the
    case $\vec{Y} = (Y_1, \dots, Y_{r-2}, \es, \es)$. Continuing in this fashion,
    we can reduce the statement to the case $\vec{Y} = (\es, \dots, \es)$, in
    which it holds trivially.

    In the calculation of both sides of equation \eqref{eq:step} we have to
    evaluate telescopic products. In order to group the factors for such
    evaluations, we will have to keep track when $Y_\alpha(x)$ and $Y_\alpha^T(y)$
    remain constant as we vary the row and column indices. We write
    \begin{align*}
      Y_\alpha = (Y_\alpha(1),\dots,Y_\alpha(l(Y_\alpha))) 
      = ( \underbrace{F_\alpha(1), \dots, F_\alpha(1)}_{G_\alpha(1) \text{times}},
      \dots, 
      \underbrace{F_\alpha(m_\alpha), \dots,
        F_\alpha(m_\alpha)}_{G_\alpha(m_\alpha) \text{times}}) 
    \end{align*}
    We set $F_\alpha(m_\alpha+1) = 0$. Note that for any $x,y \in \N$,
    \begin{align*}
      Y_\alpha(y) & \in \{F_\alpha(j) : j = 1, \dots, m_\alpha + 1 \} \\
      Y_\alpha^T(x) & \in \{ G_\alpha(1) + \cdots + G_\alpha(j) : j = 0, \dots,
      m_\alpha \}.
    \end{align*}
    Define the index $j_\alpha$ by 
    \begin{align*}
      G_\alpha(1) + \cdots + G_\alpha(j_\alpha -1) < l \leq G_\alpha(1) + \cdots
      G_\alpha(j_\alpha)
    \end{align*}
    when this condition can be satisfied and $j_\alpha = m_\alpha+1$ otherwise. We
    also introduce the notation $H_\alpha(j) = G_\alpha(1) + \cdots + G_\alpha(j)$.
    We split products over rows of Young diagrams as
    \begin{align}
      \label{eq:splitx}
      \prod_{x=1}^l = \prod_{j=1}^{j_\alpha - 1}
      \prod_{x=H_\alpha(j-1) + 1}^{H_\alpha(j)}
      \times
      \prod_{x=H_\alpha(j_\alpha -1) + 1}^l
    \end{align}
    When $x$ comes from the product with index $j\in \{1, \dots, j_\alpha-1\}$, we
    have $Y_\alpha(y) = F_\alpha(j)$. In the remaining product, we have
    $Y_\alpha(y) = F_\alpha(j_\alpha)$. Products over columns of Young diagrams are
    grouped as
    \begin{align}
      \label{eq:splity}
      \prod_{y=1}^{Y_\alpha(l)} = \prod_{y=1}^{F_\alpha(j_\alpha)} =
      \prod_{j=j_\alpha}^{m_\alpha} \prod_{y=F_\alpha(j+1)+1}^{F_\alpha(j)}
    \end{align}
    where we have $Y^T_\alpha(y) = H_\alpha(j)$ if $y$ comes from
    the factor with value $j$. 

    The right hand side of equation (\ref{eq:step}) equals
    \begin{align*}
      \frac{\RR_{\vec{Y}}(\vec{u};\es)}{\RR_{\vec{Y}'}(\vec{u};\es)}
      &=
      \prod_{\alpha,\beta = 1}^{r} 
      \frac
      {
        \prod_{s \in Y'_\alpha}
        1 - \frac{u_\alpha}{ u_\beta} q_1^{l_{Y'_\alpha}(s) +
          1}q_2^{-a_{Y'_\beta}(s)} 
        \prod_{t \in Y'_\beta} 
        1 - \frac{u_\alpha}{ u_\beta} q_1^{-l_{Y'_\beta}(t)}
        q_2^{a_{Y'_\alpha}(t)+1} 
      }
      {
        \prod_{s \in Y_\alpha}
        1 - \frac{u_\alpha}{ u_\beta} q_1^{l_{Y_\alpha}(s)+ 1}
        q_2^{-a_{Y_\beta}(s)} 
        \prod_{t \in Y_\beta} 
        1 - \frac{u_\alpha}{ u_\beta} q_1^{-l_{Y_\beta}(t)} 
        q_2^{a_{Y_\alpha}(t)+1} 
      }.
    \end{align*}
    We introduce a variable $\xi$ to be able to ignore poles during the calculation.
    Regrouping we get
    \begin{align*}
      \frac{\RR_{\vec{Y}}(\vec{u};\es)}{\RR_{\vec{Y}'}(\vec{u};\es)}
      &=
      \lim_{\xi \to 1}
      S(\xi) 
      \prod_{\alpha=1}^{r-1} T_\alpha(\xi) U_\alpha(\xi)
    \end{align*}
    where
    \begin{align*}
      S(\xi) 
      =&
      \frac
      {
        1
      }
      {
        (\xi - q_1 )
        (\xi - q_2 )
      }
      \prod_{t \in Y'_r} 
      \frac
      {
        (\xi - q_1^{l_{Y'_r}(t) + 1}q_2^{-a_{Y'_r}(t)} )
        (\xi - q_1^{-l_{Y'_r}(t)} q_2^{a_{Y'_r}(t)+1} )
      }
      {
        (\xi - q_1^{l_{Y_r}(t) + 1}q_2^{-a_{Y_r}(t)} )
        (\xi - q_1^{-l_{Y_r}(t)} q_2^{a_{Y_r}(t)+1} )
      }
      \\
      T_\alpha(\xi)
      =&
      \frac
      {
        1
      }
      {
        \xi - u_\alpha u_r^{-1} q_1q_2^{a_{Y_\alpha}(l,w)+1} 
      }
      \prod_{t \in Y_r} 
      \frac
      {
        \xi - u_\alpha u_r^{-1} q_1^{-l_{Y'_r}(t)} q_2^{a_{Y_\alpha}(t)+1} 
      }
      {
        \xi - u_\alpha u_r^{-1} q_1^{-l_{Y_r}(t)} q_2^{a_{Y_\alpha}(t)+1} 
      }
      \\ &\times\prod_{s \in Y_\alpha}
      \frac
      {
        \xi - u_\alpha u_r^{-1} q_1^{l_{Y_\alpha}(s) + 1}q_2^{-a_{Y'_r}(s)} 
      }
      {
        \xi - u_\alpha u_r^{-1} q_1^{l_{Y_\alpha}(s) + 1}q_2^{-a_{Y_r}(s)} 
      }
      \\
      U_\alpha(\xi)
      =&
      \frac
      {
        1
      }
      {
        \xi - u_r u_\alpha^{-1} q_2^{-a_{Y_\alpha}(l,w)} 
      }
      \prod_{t \in Y_r}
      \frac
      {
        \xi - u_r u_\alpha^{-1} q_1^{l_{Y'_r}(t) + 1}q_2^{-a_{Y_\alpha}(t)} 
      }
      {
        \xi - u_r u_\alpha^{-1} q_1^{l_{Y_r}(t) + 1}q_2^{-a_{Y_\alpha}(t)} 
      }
      \\ &\times
      \prod_{s \in Y_\alpha} 
      \frac
      {
        \xi - u_r u_\alpha^{-1} q_1^{-l_{Y_\alpha}(s)} q_2^{a_{Y'_r}(s)+1} 
      }
      {
        \xi - u_r u_\alpha^{-1} q_1^{-l_{Y_\alpha}(s)} q_2^{a_{Y_r}(s)+1} 
      }
    \end{align*}
    Using
    \begin{align*}
      a_{Y'_\alpha}(x,y) & =
      \begin{cases}
        Y_\alpha(x)-y-1, & x = l, \alpha = r \\
        Y_\alpha(x)-y, & \text{otherwise} 
      \end{cases}, \\
      l_{Y'_\alpha}(x,y) & = 
      \begin{cases}
        Y_\alpha^T(y)-x-1, & y = w, \alpha = r \\
        Y_\alpha^T(y)-x, & \text{otherwise}
      \end{cases}
    \end{align*}
    and the splitting described in equations \eqref{eq:splitx} and \eqref{eq:splity}
    we get
    \begin{align*}
      S(\xi) 
      =&
      \frac
      {
        1
      }
      {
        (\xi - q_1q_2^{-w+1} )
        (\xi - q_2^{w} )
      }
      \frac
      {
        (\xi - q_1)
        (\xi - q_2)
      }
      {
        (\xi - 1)
        (\xi - q_1q_2 )
      }
      \\ &\prod_{j=1}^{m_r}
      \frac
      {
        (\xi - q_1^{l-H_r(j)}q_2^{-F_r(j)+w} )
        (\xi -  q_1^{-l+H_r(j)+1}q_2^{F_r(j)-w+1} )
      }
      {
        (\xi - q_1^{l-H_r(j-1)}q_2^{-F_r(j)+w} )
        (\xi - q_1^{-l+H_r(j-1)+1} q_2^{F_r(j)-w+1} )
      }, \\
      T_\alpha(\xi)
      =&
      \frac
      {
        1
      }
      {
        \xi - u_\alpha u_r^{-1} 
        q_1^{l(Y_\alpha)-l + 1}q_2^{-w+1}
      }
      \prod_{j=1}^{m_\alpha}
      \frac
      {
        \xi - u_\alpha u_r^{-1} q_1^{-l+H_\alpha(j)+1} q_2^{F_\alpha(j)-w+1} 
      }
      {
        \xi - u_\alpha u_r^{-1} q_1^{-l+H_\alpha(j-1)+1} q_2^{F_\alpha(j)-w+1} 
      }, \\
      U_\alpha(\xi)
      =&
      \frac
      {
        1
      }
      {
        \xi - u_r u_\alpha^{-1} 
        q_1^{-l(Y_\alpha)+l} q_2^{w}
      }
      \prod_{j=1}^{m_\alpha}
      \frac
      {
        \xi - u_r u_\alpha^{-1} q_1^{l-H_\alpha(j)}q_2^{-F_\alpha(j)+w} 
      }
      {
        \xi - u_r u_\alpha^{-1} q_1^{l-H_\alpha(j-1)}q_2^{-F_\alpha(j)+w} 
      }.
      \\
    \end{align*}
    Together
    \begin{align*}
      &\frac{\RR_{\vec{Y}}(\vec{u};\es)}{\RR_{\vec{Y}'}(\vec{u};\es)}
      =
      \lim_{\xi \to 1}
      \frac
      {
        (\xi - q_1)
        (\xi - q_2)
      }
      {
        (\xi - 1)
        (\xi - q_1q_2 )
      }\\
      &\prod_{\alpha=1}^{r} 
      \Bigg(
      \frac
      {
        1     
      }
      {
        (\xi - u_\alpha u_r^{-1}
        q_1^{l(Y_\alpha)-l + 1} q_2^{-w+1})
        (\xi - u_r u_\alpha^{-1}     q_1^{-l(Y_\alpha)+l} q_2^{w})
      }
      \\
      & \times
      \prod_{j=1}^{m_\alpha}
      \frac
      {
        (\xi - u_r u_\alpha^{-1} q_1^{l-H_\alpha(j)}q_2^{-F_\alpha(j)+w} )
        (\xi - u_\alpha u_r^{-1} q_1^{-l+H_\alpha(j)+1} q_2^{F_\alpha(j)-w+1} )
      }
      {
        (\xi - u_r u_\alpha^{-1} q_1^{l-H_\alpha(j-1)}q_2^{-F_\alpha(j)+w} )
        (\xi - u_\alpha u_r^{-1} q_1^{-l+H_\alpha(j-1)+1} q_2^{F_\alpha(j)-w+1} )
      }
      \Bigg)
    \end{align*}

    For the residue calculation, fix the order of the variables such that the
    integration over $z_n$ picks up the residue $z^r_{l,w}$ coming from the box
    $(l,w) \in Y_r$ we remove from the last partition in $\vec{Y}$ to get
    $\vec{Y}'$.  The left hand side of equation (\ref{eq:step}) equals
    \begin{align*}
      \frac{\ZZ_{\vec{Y}}(\vec{u};\es)}{\ZZ_{\vec{Y}'}(\vec{u};\es)}
      & = 
      \frac{1-q_1q_2}{(1-q_1)(1-q_2)}
      \;
      \lim_{z_j \to \hat{z}_j, j = 1, \dots, n}
      \;
      (1 -
      \frac
      {
        z^r_{l,w}
      }
      {
        z_n
      }
      )
      \frac
      {
        \mathcal{I}(z_1, \dots, z_n;\vec{u})
      }
      {
        \mathcal{I}(z_1, \dots, z_{n-1};\vec{u})
      }
      \\
    \end{align*}
    We take the first $(n-1)$ limits separately:
    The quotient  $\frac
    {
      \mathcal{I}(z_1, \dots, z_n;\vec{u})
    }
    {
      \mathcal{I}(z_1, \dots, z_{n-1};\vec{u})
    }$
    converges to
    \begin{align*}
      \prod_{\alpha=1}^r  
      \Bigg( &
      \frac{-u_\alpha  z_n }
      {
        (z_n - u_\alpha) 
        (q_1 q_2 z_n - u_\alpha)
      }\\ \times &
      \prod_{s \in Y'_\alpha}
      \frac
      {
        (z_n- z^\alpha_s)^2(z_n - q_1q_2z^\alpha_s)
        (z_n - q_1^{-1}q_2^{-1}z^\alpha_s)
      }
      {
        (z_n - q_1z^\alpha_s)(z_n - q_2z^\alpha_s)
        (z_n - q_1^{-1}z^\alpha_s)(z_n - q_2^{-1}z^\alpha_s)
      }
      \Bigg)
    \end{align*}
    for $z_j \to \hat{z}_j, j =1, \dots, n-1$. The factors with $\alpha \neq r$
    do not have poles for $z_n \to z_{l,w}^r$ since $u_\alpha/u_r \notin \{
    q_1^x q_2^y : x,y \in \Z \}$.  
    We define $\xi = \frac{z_n}{z^r_{l,w}}$ and set
    \begin{align*}
      A_\alpha(\xi)
      & :=
      \prod_{s \in Y_\alpha}
      \frac
      {
        (\xi - \frac{z_s^\alpha}{z^r_{l,w}})
        (\xi - q_1q_2 \frac{z_s^\alpha}{z^r_{l,w}})
        (\xi - \frac{z_s^\alpha}{z^r_{l,w}})
        (\xi - q_1^{-1} q_2^{-1}  \frac{z_s^\alpha}{z^r_{l,w}})
      }
      {
        (\xi - q_1 \frac{z_s^\alpha}{z^r_{l,w}})
        (\xi - q_2 \frac{z_s^\alpha}{z^r_{l,w}})
        (\xi - q_1^{-1} \frac{z_s^\alpha}{z^r_{l,w}})
        (\xi - q_2^{-1} \frac{z_s^\alpha}{z^r_{l,w}})
      } .
    \end{align*}
    Hence, the remaining limit $z_n \to z^r_{l,w}$ is given by
    \begin{align*}
      \frac{\ZZ_{\vec{Y}}(\vec{u};\es)}{\ZZ_{\vec{Y}'}(\vec{u};\es)}
      & = 
      \lim_{\xi \to 1}
      \frac
      {
        (\xi - q_1^{-1})
        (\xi - q_2^{-1})
      }
      {
        (\xi- 1)
        (\xi - q_1^{-1}q_2^{-1})
      }
      \prod_{\alpha=1}^r  
      \Bigg(
      \frac{ - \xi u_\alpha z^r_{l,w} }
      {
        (\xi z^r_{l,w} - u_\alpha) 
        (q_1 q_2 \xi z^r_{l,w} - u_\alpha)
      }
      A_\alpha(\xi)
      \Bigg),
    \end{align*}
    Using the splitting described in \eqref{eq:splitx} and \eqref{eq:splity} we get
    \begin{align*}
      A_\alpha(\xi) & = 
      \frac
      {
        (\xi - \frac{u_\alpha}{u_r} q_1^{1-l} q_2^{1-w} )
        (\xi - \frac{u_\alpha}{u_r} q_1^{-l} q_2^{-w} )
      }
      {
        (\xi - \frac{u_\alpha}{u_r} q_1^{l(Y_\alpha)-l+1} q_2^{1-w} )
        (\xi - \frac{u_\alpha}{u_r} q_1^{l(Y_\alpha)-l} q_2^{-w} )
      }
      \\
      & \times
      \prod_{j=1}^{m_\alpha}
      \frac
      {
        (\xi - \frac{u_\alpha}{u_r} q_1^{H_\alpha(j)-l+1} q_2^{F_\alpha(j)-w+1} )
        (\xi - \frac{u_\alpha}{u_r} q_1^{H_\alpha(j)-l} q_2^{F_\alpha(j)-w} )
      }
      {
        (\xi - \frac{u_\alpha}{u_r} q_1^{H_\alpha(j-1)-l+1} q_2^{F_\alpha(j)-w+1} )
        (\xi - \frac{u_\alpha}{u_r} q_1^{H_\alpha(j-1)-l} q_2^{F_\alpha(j)-w} )
      }
    \end{align*}
    Finally, we use $(\rho - \sigma) = -\rho \sigma (\rho^{-1} - \sigma^{-1})$
    repeatedly to conclude
    \begin{align*}
      \frac{\ZZ_{\vec{Y}}(\vec{u};\es)}{\ZZ_{\vec{Y}'}(\vec{u};\es)}
      & = 
      \lim_{\xi \to 1}
      \frac
      {
        (\xi - q_1^{-1})
        (\xi - q_2^{-1})
      }
      {
        (\xi- 1)
        (\xi - q_1^{-1}q_2^{-1})
      }
      \\
      & 
      \prod_{\alpha=1}^r  
      \Bigg(
      \frac{ - \xi u_\alpha u_r^{-1} q_1^{-l} q_2^{-w} }
      {
        (\xi - \frac{u_\alpha}{u_r} q_1^{l(Y_\alpha)-l+1} q_2^{1-w} )
        (\xi - \frac{u_\alpha}{u_r} q_1^{l(Y_\alpha)-l} q_2^{-w} )
      }
      \\
       \times &
      \prod_{j=1}^{m_\alpha}
      \frac
      {
        (\xi - \frac{u_\alpha}{u_r} q_1^{H_\alpha(j)-l+1} q_2^{F_\alpha(j)-w+1} )
        (\xi - \frac{u_\alpha}{u_r} q_1^{H_\alpha(j)-l} q_2^{F_\alpha(j)-w} )
      }
      {
        (\xi - \frac{u_\alpha}{u_r} q_1^{H_\alpha(j-1)-l+1} q_2^{F_\alpha(j)-w+1} )
        (\xi - \frac{u_\alpha}{u_r} q_1^{H_\alpha(j-1)-l} q_2^{F_\alpha(j)-w} )
      }
      \Bigg)
      \\
      & = 
      \lim_{\xi \to 1}
      \frac
      {
        (\xi^{-1} - q_1)
        (\xi^{-1} - q_2)
      }
      {
        (\xi^{-1}- 1)
        (\xi^{-1} - q_1q_2)
      }
      \\
      & 
      \prod_{\alpha=1}^r  
      \Bigg(
      \frac
      {
        1
      }
      {
        (\xi - \frac{u_\alpha}{u_r} q_1^{l(Y_\alpha)-l+1} q_2^{1-w} )
        (\xi^{-1} - \frac{u_r}{u_\alpha} q_1^{-l(Y_\alpha)+l} q_2^{w} )
      }
      \\
       \times &
      \prod_{j=1}^{m_\alpha}
      \frac
      {
        (\xi - \frac{u_\alpha}{u_r} q_1^{H_\alpha(j)-l+1} q_2^{F_\alpha(j)-w+1} )
        (\xi^{-1} - \frac{u_r}{u_\alpha} q_1^{-H_\alpha(j)+l} q_2^{-F_\alpha(j)+w} )
      }
      {
        (\xi - \frac{u_\alpha}{u_r} q_1^{H_\alpha(j-1)-l+1} q_2^{F_\alpha(j)-w+1} )
        (\xi^{-1} - \frac{u_r}{u_\alpha} q_1^{-H_\alpha(j-1)+l}
        q_2^{-F_\alpha(j)+w} )
      }
      \Bigg)
      \\ &=
      \frac{\RR_{\vec{Y}}(\vec{u};\es)}{\RR_{\vec{Y}'}(\vec{u};\es)}
    \end{align*}
    since no factor $(\xi^{+1} - \cdots)$ vanishes in the limit.

  \bibliography{biblio}{}
  \bibliographystyle{plain}
\end{document}